\documentclass[format=acmsmall, review=false]{acmart}
\usepackage{booktabs} 
\usepackage[title]{appendix}
\usepackage{thm-restate}
\usepackage{pict2e}
\usepackage[ruled,noend]{algorithm2e} 

\SetAlFnt{\small}
\SetAlCapFnt{\small}
\SetAlCapNameFnt{\small}
\SetAlCapHSkip{0pt}
\IncMargin{-\parindent}
\usepackage{subcaption}
\usepackage{MnSymbol}
\setcitestyle{authoryear}

    \usepackage{xcolor}
    \usepackage{derivative}
    \usepackage{graphicx} 
    \usepackage{amsfonts,amsmath,amsthm}

\DeclareRobustCommand{\legendsquare}[1]{%
  \textcolor{#1}{\rule{1.5ex}{1.5ex}}%
}
\definecolor{myred}{RGB}{158,11,55}
\definecolor{myblue}{RGB}{61,156,220}
\colorlet{mygrey}{gray}

    \newtheorem{theorem}{Theorem}
    
    \newtheorem{proposition}{Proposition}
    \newtheorem{claim}{Claim}
    \newtheorem{observation}{Observation}

\theoremstyle{definition}
    
\newtheorem{example}{Example}
\AtBeginEnvironment{example}{%
  \pushQED{\qed}%
}
\AtEndEnvironment{example}{\popQED\endexample}

\newenvironment{claimproof}[1]{\par\noindent\textit{Proof.}\space#1}{\hfill$\lhd$}

\newcommand{\myparagraph}[1]{
\smallskip
\noindent 
{\bf #1}}

\newcommand{\EE}{E^{\mathrm{net}}}
\def\PriPropClear{\textsc{Priority-Proportional Clearing}}
\def\FindComp{\textsc{Optimal Compression for Proportional Clearing}}
\def\FindCompPairs{\textsc{Optimal Bilateral Compression for Proportional Clearing}}
\def\FindCompCycle{\textsc{Optimal Compression Cycle for Proportional Clearing}}
\def\FindCompBank{\textsc{$b$-optimal Compression for Proportional Clearing}}

\newcommand{\ildi}[1]{\textcolor{blue}{#1  [Ildi]}}

\def\nondeft{\textup{nondef}}
\def\deft{\textup{def}}

    \newcommand{\alp}{\widetilde{\alpha}}
    \newcommand{\bet}{\widetilde{\beta}}
    
    \newcommand{\delz}{\triangle z}
    \newcommand{\delp}{\triangle p}
    \def\deg{d}
    
    \newcommand{\gnote}{\textcolor{green!50!black}}
    \newcommand{\maxsat}{{\sc Max-2-SAT}}
    \newcommand{\partition}{{\sc Partition}}
    \def\dft{\mathcal{D}}
    \newcommand{\twosat}{{\sc 2-sat}}

    \def\P{\mathcal{P}}
    \def\NP{\mathsf{NP}}
    
\newcommand{\linkproof}[1]{%
\hyperref[#1]{$\star$}%
}

\def\mypropsymbol{\spadesuit}

\newcommand{\opentriangle}{%
  \raisebox{0.2pt}{\makebox[0.77778em]{%
    \setlength{\unitlength}{0.6em}%
    \linethickness{0.4pt}\roundjoin
    \begin{picture}(1,1)
    \polygon(0,0)(1,0)(1,1)
    \end{picture}%
  }}%
}

    \title{Optimal Portfolio Compression for Priority-Proportional Clearing with Defaulting Costs}
\author{Gergely Csáji}
\affiliation{%
  \institution{ELTE Centre for Economic and Regional Studies}
  \city{Budapest}
  \country{Hungary}
}
\affiliation{%
  \institution{Budapest University of Technology and Economics}
  \city{Budapest}
  \country{Hungary}
}
\email{csaji.gergely@krtk.elte.hu}
 \author{Rareş-Ioan Mateiu}
\affiliation{%
  \institution{University of Bucharest}
  \city{Bucharest}
  \country{Romania}
  }
\email{raresmateiu14@gmail.com}
\author{Alexandru Popa}
\affiliation{%
  \institution{University of Bucharest}
  \city{Bucharest}
  \country{Romania}
  }
\email{alexandru.popa@fmi.unibuc.ro}

\author{Ildikó Schlotter}
\affiliation{%
  \institution{ELTE Centre for Economic and Regional Studies}
  \city{Budapest}
  \country{Hungary}
}
\affiliation{%
  \institution{Budapest University of Technology and Economics}
  \city{Budapest}
  \country{Hungary}
}
\email{schlotter.ildiko@krtk.elte.hu}

    \begin{abstract}
We study financial networks where banks are connected through bilateral liabilities and may default when resources are insufficient to meet obligations. We consider both the standard proportional clearing model and a priority-proportional clearing model in which banks repay creditors according to exogenously given priority classes. 
In such markets, portfolio compression is a process where  several banks come to a netting arrangement which reduces liabilities without changing any bank's net exposure, essentially removing cycles of debt.
Our goal is to understand whether portfolio compression schemes can be designed to improve clearing outcomes for a large fraction of banks.

We provide a computational characterization of the benefits and limitations of compression. On the positive side, we give a polynomial-time algorithm to compute a maximal clearing outcome under priority-proportional clearing, and we show that it is possible to decide in polynomial time whether there exists a compression that limits defaults to at most one bank.
On the negative side, we show that several natural optimization and decision problems are computationally intractable: deciding whether some compression can reduce the number of defaulting banks below a given threshold, or whether a specific bank can be saved from defaulting, is $\NP$-hard even in restricted settings and under proportional clearing.

We further present a mixed integer linear programming (MILP) formulation that computes a compression maximizing the number of non-defaulting banks, providing a practical approach to this hard problem. Using our MILP formulation, we perform simulations on both synthetic and real-world datasets to analyze the effects of portfolio compression.

    \end{abstract}

\begin{document}

    \maketitle

\begin{titlepage}

\vspace{1cm}
\setcounter{tocdepth}{1} 
\tableofcontents

\end{titlepage}
    
    \section{Introduction}
    
Following the 2008 global financial crisis, regulatory focus intensified on mitigating systemic risk in the complex, opaque Over-the-Counter (OTC) derivatives markets~\cite{acharya2012doddfrank}. A key objective has been the reduction of the vast notional outstanding exposure, which creates significant operational, credit, and liquidity risk~\cite{duffie2017role,helleiner2021governing}. Multilateral netting and portfolio compression are the primary post-trade mechanisms designed to achieve this reduction without altering the market participants' net financial obligations.

Multilateral netting is a generalized process where offsetting cash flows or contractual obligations among three or more counterparties are consolidated into a single, smaller net payment stream~\cite{bis1990report,garratt2020centralized}. This technique is not limited to financial markets but has also been adapted for use in public administration to manage inter-company debts, such as the novel algorithm applied within the Romanian Ministry of Economy~\cite{gavrila2021novel}.

 Portfolio compression, which is widely utilized by specialized service providers, is a highly specific application of multilateral netting for derivatives~\cite{derrico2021compressing,o2017optimising}. In a compression cycle, multiple existing contracts are simultaneously terminated and subsequently replaced by a significantly smaller set of new contracts (often a single contract) that strictly preserves the original net market value for every participant. This method efficiently reduces gross notional outstanding and associated risks, maintaining net exposure while drastically cutting gross exposure~\cite{veraart2022does}.

\subsection{Related Work}
 We briefly review the literature surrounding this topic, which we group into three main categories.

\myparagraph{Mathematical and computational optimization of compression.}
The core problem is finding a cycle or a set of contracts to eliminate that yields the maximum reduction in gross notional or regulatory capital consumption. 

\citet{eisenberg2001systemic} introduce the mathematical framework for clearing payments in a network of defaulting firms (i.e., in particular the notion of \emph{clearing vector}). They provided an algorithm to find a maximum proportional clearing vector, which requires banks to pay liabilities proportional to the amounts owed, when they default. Later on,  
\citet{o2017optimising}  addressed the computational challenge of reducing systemic risk in Over-the-Counter (OTC) derivatives markets by treating portfolio compression as a formal network optimization problem. Focusing on fungible derivatives (standardized contracts where individual trades can be netted), the author proposes and evaluates several algorithms designed to minimize gross counterparty exposure while strictly preserving the net market risk of every participant.

\citet{derrico2021compressing} analyze the compression of OTC markets by introducing mathematical models that quantify how different compression mechanisms reduce the size and change the structure of the debt network. They provide computational insights into the efficacy of various approaches in eliminating cycles of exposures.

A more formalized view of the optimization challenge is presented by 
\citet{amini2023optimal}, who study the problem of Optimal Network Compression. They focus on the computational complexity of finding the best set of contracts to compress within a financial network. The authors demonstrate that finding a globally optimal compression that minimizes a complex metric (such as a measure of systemic risk) is generally an $\NP$-hard problem, underscoring the practical limitations of optimizing for system-wide stability. We improve on their results showing that optimal compressions are $\NP$-hard to find even for simple objectives such as minimizing the banks that default, or saving a given bank from bankruptcy. 


A recent framework for priority-proportional clearing was introduced by \citet{KanellopoulosKZ24}. While the authors claim a polynomial-time algorithm for computing maximal clearing payments, their procedure (Algorithm 2, Step 5) relies on the computation of limit points for an iterative series. Crucially, the authors do not provide a constructive mechanism for finding these limits in polynomial time, nor a bound on the iterations required for convergence. Given that these limits are not simple linear series, their computation remains the primary technical hurdle. Our work provides the first truly constructive polynomial-time
algorithm by demonstrating that these limit points can be explicitly found using linear programming, thereby closing this computational gap. 
Furthermore, we must highlight a factual mistake in their proof correctness. They explicitly assume that the initial endowments ``$e_i$ may also be negative and corresponds to financial obligations towards entities outside the system in consideration''; however, this is in contradiction with their claim that Equation (5) in their paper (within Lemma 2) is guaranteed to have a nonnegative solution. Indeed, assume that the market contains a single bank~$b$, with $e_b=-1$ and no liabilities ($L_b=0$). Then, the set of initially defaulting banks in their construction is $D_0 = \{ 1\}$, and so the unique solution to their Equation (5) is where $x_i = \alpha e_i <0$.

\myparagraph{The structural analysis of centralized clearing and financial networks.}
Portfolio compression services often operate in conjunction with Central Counterparties (CCPs) or other centralized clearing mechanisms, which form the backbone of the organized financial network.

\citet{csoka2022centralized} investigate centralized clearing mechanisms in financial networks using a programming approach. Their work connects the institutional structure of a centralized clearing house, which acts as a hub for multilateral netting, with the mathematical principles of efficiency and debt resolution, suggesting that regulatory goals can be achieved through properly formulated optimization programs.

The effectiveness of centralizing risk through netting is critically examined by 
\citet{garratt2020centralized}. In their study on centralized netting in financial networks, they show that while centralized netting generally reduces the variance of net exposures (making risk more predictable), it does not always reduce the expected value of net exposures, particularly for certain network topologies or when the number of asset classes is high. This introduces a subtle trade-off, where centralized netting might not always be individually beneficial to all participants, potentially explaining a historical reluctance to adopt central clearing prior to regulation.

\myparagraph{The relationship between compression and systemic risk.}
%
\citet{veraart2022does} investigates when portfolio compression reduces systemic risk, establishing that compression is not a universally beneficial tool. If fragile  that are likely to default participate in compression, the resulting change in the payment network structure can, under certain conditions, increase the systemic risk of contagion. The paper derives structural conditions related to the resilience of participating nodes, recovery rates, and debt repayment capacity that determine if a compression is harmful or beneficial.

\citet{schuldenzucker2021portfolio} study the impact of compression on contagion in financial networks and highlight the role of incentives. They show that while compression reduces gross exposures (a proxy for counterparty credit risk), it can restructure the interdependencies in ways that exacerbate the overall risk profile of the network. Furthermore, they address the challenge of ensuring participants voluntarily engage in compression when the benefits (especially the reduction of systemic risk) are often externalized.

Thus, \citet{veraart2022does} and \citet{schuldenzucker2021portfolio} show that regulatory intervention must look beyond simple notional reduction and consider the quality of the counterparties involved and the resulting changes in the network's vulnerability to cascade failures.

\myparagraph{Other related work.}
As mentioned before, the theoretical foundation for analyzing default cascades was initiated by 
\citet{eisenberg2001systemic}, who established the existence of a unique clearing vector in interbank networks.  
\citet{rogers2013failure} extended the classical framework of~\citet{eisenberg2001systemic} by introducing default costs, allowing for recovery rates strictly below one on both external and interbank assets. This extension leads to the possibility of multiple clearing solutions. Despite this added complexity, they show that the maximal proportional clearing vector can be computed efficiently. In particular, they design a polynomial-time algorithm, based on an iterative fictitious-default procedure that terminates in at most a linear number of steps, each requiring the solution of a system of linear equations. The characteristic where connections diversify risk in good times but propagate shocks in bad times was formalized by~\citet{acemoglu2015systemic} and further reviewed by~\citet{glasserman2015contagion} and~\citet{cabrales2017risk}. \citet{battiston2012debtrank} introduced DebtRank to measure the impact of distressed institutions before actual default occurs, while 
\citet{cifuentes2005liquidity} and 
\citet{shin2010risk} incorporated liquidity risk and fire sales into these contagion models. 
\citet{upper2011simulation} and 
\citet{summer2013financial} provide comprehensive surveys of these simulation methods. Further extensions by 
\citet{bardoscia2017pathways} and 
\citet{barucca2020network} have integrated valuation adjustments directly into the network dynamics.

The shift toward central clearing (the structural partner to compression) has sparked a vigorous debate on netting efficiency. 
\citet{duffie2011does}  argued that while CCPs allow for multilateral netting within a single asset class, they fracture netting sets across asset classes, potentially increasing total collateral demand. This trade-off was critically examined by~\citet{cont2014central} and~\citet{koeppl2012ccp}, who modeled the conditions under which multilateral netting dominates bilateral arrangements. 
\citet{ghamami2017does} and~\citet{duffie2015central} focused on the capital and collateral implications of this shift, questioning whether regulatory capital rules (e.g., Basel~III) actually provide the intended incentives for central clearing. The resilience of CCPs as risk concentration hubs is examined by~\citet{cont2015end} and~\citet{biais2015clearing}  who study the hierarchical structure of default resources.

Finally, the efficacy of compression is often evaluated against quantitative systemic risk measures. 
\citet{adrian2016covar} proposed CoVaR to measure the tail-risk contribution of individual institutions. 
\citet{acharya2017measuring} developed Systemic Expected Shortfall (SES), while 
\citet{brownlees2017srisk} introduced SRISK to quantify the capital shortfall of firms during a crisis. 
\citet{billio2012econometric} utilized principal component analysis to track the changing connectedness of the financial sector (hedge funds, banks, insurers) over time. This econometric branch of literature works in parallel with the network theory branch (\cite{furfine2003interbank,boss2004network,muller2006interbank}) to provide the metrics by which post-trade mechanisms like compression are ultimately judged.

\subsection{Our contribution}
In this work, we advance the algorithmic study of compression in financial networks. We consider a market composed of multiple banks, each endowed with an initial portfolio and engaged in a set of bilateral liabilities. A bank defaults if its available resources are insufficient to cover all obligations; in such cases, we consider two models. In the first model, termed \emph{proportional clearing}, the bank repays creditors proportionally to its incoming payments. In the second model, termed \emph{priority-proportional clearing}, each bank has a priority list of groups of creditors. Then, in case of default, a bank pays pays its liabilities in a sequential way to the banks in its priority groups, paying full liabilities as long as possible, and paying proportional liabilities for the last group (where payments are made). The two models are formally presented in Section~\ref{sec:preliminaries}.

Our research is motivated by the following central question: ``Can we design a compression scheme that improves outcomes for as many banks as possible in a financial network?''

We investigate this question from a computational perspective and 
obtain the following results.

In Section~\ref{sec:alg-clearing} we present a polynomial-time algorithm that computes a priority-proportional clearing for a given financial market. Moreover, our algorithm determines the coordinate-wise maximal such  clearing vector. The algorithm works by solving three carefully designed linear programs  that satisfy key properties. 

In Section~\ref{sec:np-hardness} we show that several natural questions are computationally difficult in a variety of cases. We enumerate here the hardness results. First, given an integer $k$, determining a compression such that at most $k$ banks default is an $\NP$-hard problem. The problem remains $\NP$-hard even in restricted cases when we ask for only one compression cycle or when we only allow bilateral debt clearing. 
We also show that determining if a specific bank can be saved is $\NP$-hard or if at least three banks can be saved from defaulting. These strong intractability results hold even without default costs and in the simpler proportional-clearing model.

In Section~\ref{sec:single-default} we present two polynomial-time algorithms, one for proportional clearing and one for priority-proportional clearing, which given a financial system decide whether there exists a compression  such that at most one bank defaults. The algorithms iterate over all possible banks~$b$ and test if there exists a compression in which that specific bank~$b$ defaults. Finding a compression is reduced to computing a flow with specific properties.

Finally, in Section~\ref{sec:ilp} we describe a Mixed Linear Integer Program (MILP)  to find a compression that maximizes non-defaulting banks in the priority-proportional clearing model, which gives a conceptually simple way to compute optimal solutions for these $\NP$-hard problems. We remark that even finding a maximal proportional clearing with default costs with a single integer program was not known in the literature, hence our IP is a novel contribution for both the computational and structural aspects of the problem.

\section{Preliminaries}
\label{sec:preliminaries}

    For $k\in \mathbb{N}$, let $[k]=\{ 1,\dots, k\}$ and $[k]_0=\{0,1,\dots,k\}$.
    
    A \emph{financial market} is given by a tuple $(N,L,e,\alpha, \beta)$, where $N$ is the set of \emph{banks}, $L=(L_{ij})_{i,j\in N}$ is the vector describing the \emph{liabilities} between the banks, $e=(e_i)_{i\in N}$ is the initial \emph{endowment} of each bank and 
    the vectors $\alpha,\beta \in [0,1]^N$
    are the \emph{default cost parameters}. Apart from Section~\ref{sec:alg-clearing}, we generally assume nonnegative endowments (which is a natural assumption), but allow negative ones in Section~\ref{sec:alg-clearing} for the generality of the algorithm. If $\alpha=\beta = 1^N$, we say the network \emph{has no default costs.} We assume that $L_{ii}=0$ for each $i \in N$.
    The parameter $\alpha_i$ describes the ratio up to which a defaulting bank $i\in N$ can use its own endowment, and $\beta_i$ defines the ratio up to which it can use its incoming payments.
   
    More precisely, we define the total income of 
    bank~$i$ under some payment vector~$p=(p_{ij})_{i,j \in N}$ assuming that bank~$i$ does not default (while taking into account its original endowment) as
    \[
    E_i^\nondeft(p)=e_i + \sum_{j \in N} p_{ji}.
    \]
    By contrast, the total income of
    bank~$i$ under the payment vector~$p$ and assuming that bank~$i$ \emph{does} default, which incurs the application of defaulting costs, is defined as 
    \[
    E_i^\deft(p)=\alpha_i e_i + \sum_{j \in N} \beta_i p_{ji}.
    \]

    %

    \myparagraph{Financial markets with proportional clearing.}
    In such markets, a bank either pays all its liabilities fully, or it defaults, in which case it pays all its liabilities in a proportional fashion using up all its income.
    Let $L_i:= \sum_{j\in N}L_{ij}$ denote the total liabilities that bank $i$ needs to pay.
    %
    %
    We say that a vector $p=(p_{ij})_{i,j \in N}$ is a \emph{proportional clearing vector}, if it satisfies 
    \[p_{ij} = \begin{cases}
        L_{ij} & \text{if } E_i^\nondeft(p) \geq L_i;
        \\[4pt]
        \max(0,\frac{L_{ij}}{L_i} \cdot E_i^\deft(p)) & \text{otherwise}.
    \end{cases}
    \]

    \myparagraph{Financial markets with priority-proportional clearing.}
    In such markets, each bank $i \in N$ is associated with a  \emph{priority list}
    consisting of priority groups $g_i^1,\dots, g_i^{k_i}$ where $(g_i^1,\dots, g_i^{k_i})$ is a partitioning of all banks in $N \setminus \{i\}$ with $L_{ij}>0$ (if $L_{ij}=0$, then no payment is ever needed~from $i$ to~$j$). 
    Payments then are done according to this priority ordering: first, bank~$i$ pays all its full liabilities towards banks in its first priority group $g_i^1$, then proceeds with the second group, and so on, until the moment comes when it can no longer pay all its liabilities towards the banks in the current, say the $\ell$-th, priority group (where $1 \leq \ell \leq k_i$. In that case, it uses up its remaining income to pay its liabilities towards the banks in the priority group $g_i^\ell$ in a proportional fashion, and ignores all banks in lower priority groups.
    
    We will write $G_i^{\ell} = \cup_{k=1}^{\ell} g_i^k$  to denote the set of banks having priority at most~$\ell$; for simplicity, we set
    $G_i^0:= \emptyset$.
    Let $L_i^{\ell} = \sum_{j\in g_i^{\ell}}L_{ij}$ denote the liabilities owned by bank~$i$ to the banks belonging to its $\ell$-th priority group. 
    %
    We say that a vector $p$ is a \emph{priority-proportional clearing vector} if for each bank~$i\in N$ and $j\in g_i^{\ell}$ it satisfies one of the two conditions:
    \begin{itemize}
        \item Bank~$i$ does not default and pays all its liabilities, 
        meaning that $E_i^\nondeft(p) \geq \sum_{j \in N} L_{ij}$ and thus $i$ can pay its liabilities fully, i.e., $p_{ij}=L_{ij}$ for all $j \in N$.
        \item Bank~$i$ defaults and pays its liabilities in a sequential way to the banks in its priority groups, paying full liabilities as long as possible, and paying proportional liabilities for the last group where payments are made. 
        That is, $E_i^\nondeft(p) < \sum_{j\in N} L_{ij}$ and the payments by~$i$ towards some bank~$j$ in the $\ell$-th priority group (i.e., with $j\in g_i^{\ell}$) are defined as
        \begin{align*}
        p_{ij} &= 
        \begin{cases}
        L_{ij} & \text{if } E_i^\deft(p) 
        \geq 
        \sum_{h=1}^\ell L_i^h;
        \\[4pt]
        \frac{L_{ij}}{L_i^{\ell}}
        (E_i^\deft(p) - \sum_{h=1}^{\ell-1}L_i^h) & \text{if }
        \sum_{h=1}^{\ell-1}L_i^h  \leq E_i^\deft(p)<\sum_{h=1}^\ell L_i^h;
        \\[4pt]
        0 & \text{otherwise.}
        \end{cases}
        \end{align*}
    \end{itemize}


    We formally define the problem of computing a priority-proportional clearing vector as follows.

    \medskip
\noindent
\begin{center}
\begin{minipage}{\textwidth}
\fbox{
    \begin{tabular}{@{\hspace{2pt}}l@{\hspace{4pt}}p{0.84\textwidth}@{\hspace{2pt}}}%
        \multicolumn{2}{@{\hspace{2pt}}p{0.87\textwidth}}{
        \PriPropClear
        } \\
        {\bf Input:} &
            A financial market $M=(N,L,e,\alpha,\beta)$. 
        \\
        {\bf Task:} &
            Compute a priority-proportional clearing vector for~$M$.
    \end{tabular}
    }
\end{minipage}
\end{center}
\medskip

    \myparagraph{Compressions.}
    We say that a set $i_0,i_1,\dots, i_{\ell-1}$ of $\ell$ banks form a \emph{liability cycle} if $L_{i_j,i_{j+1 \bmod \ell}}>0$ for each $j \in \{0,1,\dots,\ell-1\}$. 
    In other words, a liability cycle is a cycle in the directed graph underlying the market, i.e, the graph over $N$ where $(i,j)$ is an arc if and only if $L_{ij}>0$.
    If banks $i_0,i_1,\dots,i_{\ell-1}$ form a liability cycle, then they may decide to decrease their liabilities along this cycle by some amount $\varepsilon \leq \min \{L_{i_j,i_{j+1 \bmod \ell}}: j \in \{0,1,\dots,\ell-1\}\}$.
    Such an operation can be performed simultaneously for several, possibly overlapping, liability cycles. This concept is described by the notion of a compression.
    
    Formally, a \emph{compression} $C=(C_{ij})_{i,j \in N}$ is a vector that satisfies $0 \leq C_{ij} \leq L_{ij}$ for each $i,j \in N$, as well as the flow condition 
    $\sum_{j \in N} C_{ji}=\sum_{j \in N} C_{ij}$; that is, a compression can be considered as the union of liability cycles.
    Given a financial market $M=(N,L,e,\alpha,\beta)$ and a compression $C$ for it, the financial market $M-C$ obtained after \emph{applying compression~$C$} is the market $(N,L',e,\alpha,\beta)$ where the new liability vector $L'$ is defined as $L'_{ij}=L_{ij}-C_{ij}$ for each $i,j \in N$.

\begin{figure*}[t!]
    \centering
    \begin{subfigure}[t]{0.2\textwidth}
        \centering
        \includegraphics[scale=0.75]{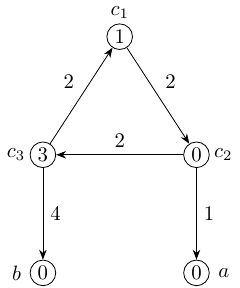}
        \caption{Market~$M$.}
        \label{fig:ex-market}
    \end{subfigure}%
    \hspace{20pt}
    \begin{subfigure}[t]{0.2\textwidth}
        \centering
        \includegraphics[scale=0.75]{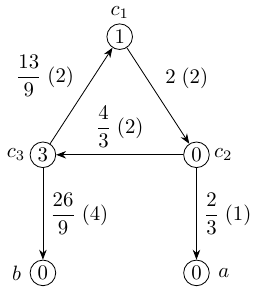}
        \caption{Clearing vector for the original market~$M$ without compression.}
        \label{fig:ex-comp-0}
    \end{subfigure}
    \hspace{23pt}
    \begin{subfigure}[t]{0.2\textwidth}
        \centering
        \includegraphics[scale=0.75]{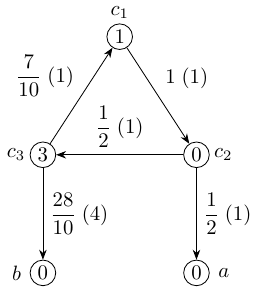}
        \caption{Clearing vector after applying compression with value~$\varepsilon=1$.}
        \label{fig:ex-comp-1}
    \end{subfigure}
    \hspace{20pt}
    \begin{subfigure}[t]{0.2\textwidth}
        \centering
        \includegraphics[scale=0.75]{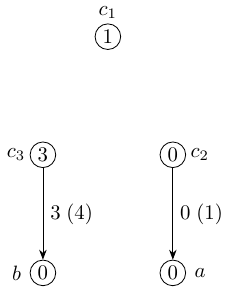}
        \caption{Clearing vector after applying compression with value~$\varepsilon=2$.}
        \label{fig:ex-comp-2}
    \end{subfigure}
    \caption{Illustration for Example~\ref{ex}. Liability values are displayed alongside the corresponding arrows, shown in parenthesis for Figures (b)--(d). Initial endowments are depicted within the circles representing the banks. Figures~(b), (c), and~(d) show clearing vectors where the compression value along the cycle $(c_1,c_2,c_3)$ is $\varepsilon=0,1,2$, respectively, with the payment values written before the liabilities (after compression) in parenthesis.
    }
\end{figure*}

The following simple example illustrates how applying a compression can have different effects on different banks in the market.

\begin{example}
\label{ex}
Consider a financial market~$M$ with banks $N=\{a,b,c_1,c_2,c_3\}$ as the set of banks where the network of liabilities and the initial endowments are as displayed in Figure~\ref{fig:ex-market}.
We assume that $\alpha_i=\beta_i=1$ for all banks $i \in  N$.
Figure~\ref{fig:ex-comp-0} displays a clearing vector for market~$M$. It can be seen that the defaulting banks are~$c_2$ and~$c_3$.

There is only a single cycle in the directed graph underlying~$M$, going through banks~$c_1$, $c_2$, and~$c_3$. Hence, all compressions have the form of such a cycle with some compression value~$\varepsilon \in [0,2]$.

Figures~\ref{fig:ex-comp-1} and~\ref{fig:ex-comp-2} display the markets obtained by applying such a compression with value~$\varepsilon=1$ and~$\varepsilon=2$, respectively, together with a corresponding clearing vector.

Let us consider the effects of the compression on banks~$a$ and~$b$. The (total) payments they recieve in the markets after applying a compression with value~$\varepsilon=0,1,2$ are as follows:

\begin{center}
\begin{tabular}{lccc}
& $\varepsilon=0$ & $\varepsilon=1$ & $\varepsilon=2$
\\
\hline
\\[-8pt]
bank $a$: 
	& $\frac{2}{3}=0.6\dot{6}$ 
	& $\frac{1}{2}=0.5$
	& $0$
\\[4pt]
bank $b$: 
	& $\frac{2}{3}=2.8\dot{8}$ 
	& $\frac{28}{10}=2.8$
	& $3$
\end{tabular}
\end{center}

Hence, it can be seen that while bank~$a$ prefers not to apply any compression, bank~$b$ does benefit from applying a compression with value~$\varepsilon=2$ (completely eliminating the cycle of liabilities in the market), but does not benefit from applying a compression with smaller value $\varepsilon=1$. 
\end{example}

The main optimization problem we are interested in is to find a compression whose application yields a clearing vector minimizing the number of defaulting banks. 
We formalize this question as a decision problem as follows.

\medskip
\noindent
\begin{center}
\begin{minipage}{\textwidth}
\fbox{
    \begin{tabular}{@{\hspace{2pt}}l@{\hspace{4pt}}p{0.84\textwidth}@{\hspace{2pt}}}%
        \multicolumn{2}{@{\hspace{2pt}}p{0.87\textwidth}}{
        \FindComp\
        } \\
        {\bf Input:} &
            A financial market $M=(N,L,e,\alpha,\beta)$ and an integer~$k$. 
        \\
        {\bf Question:} &
            Is there a compression~$C$ for~$M$ such that the market~$M-C$ admits a proportional clearing vector under which at most~$k$ banks default?
    \end{tabular}
    }
\end{minipage}
\end{center}
\medskip

We will also investigate variants of the above problem where the desired compression needs to satisfy certain restrictions.  
A compression~$C$ for $M$ \emph{is bilateral} if $C_{ij}=C_{ji}$ for each pair $i,j \in N$ of banks. 
Moreover, $C$ is a \emph{cycle compression} if $C$ is, roughly speaking, the union of disjoint cycles; formally, if $C_{ij} > 0$ for some $i,j \in N$ implies $C_{ij'}=0$ for all $j' \in N \setminus \{j\}$.
Searching for bilateral or cycle compressions is motivated by their simplicity, possibly allowing for faster optimization.

\medskip
\noindent
\begin{center}
\begin{minipage}{\textwidth}
\fbox{
    \begin{tabular}{@{\hspace{2pt}}l@{\hspace{4pt}}p{0.84\textwidth}@{\hspace{2pt}}}%
        \multicolumn{2}{@{\hspace{2pt}}p{0.87\textwidth}}{
        \textsc{Optimal Bilateral ({\normalfont{or }} Cycle) Compression for Proportional Clearing}
        } \\
        {\bf Input:} &
            A financial market $M=(N,L,e,\alpha,\beta)$ and an integer~$k$. 
        \\
        {\bf Question:} &
            Is there a bilateral (or cycle, respectively) compression~$C$ for~$M$ 
         such that $M-C$ admits a proportional clearing vector under which at most~$k$ banks default?
    \end{tabular}
    }
\end{minipage}
\end{center}

\medskip
A further very natural and interesting question is one where we ask whether a certain fixed bank can be saved from bankruptcy.
To define the corresponding problem precisely,  let $\dft^M(p)$ denote the banks defaulting under some clearing vector~$p$ for~$M$. 

\medskip
\noindent
\begin{center}
\begin{minipage}{\textwidth}
\fbox{
    \begin{tabular}{@{\hspace{2pt}}l@{\hspace{4pt}}p{0.84\textwidth}@{\hspace{2pt}}}%
        \multicolumn{2}{@{\hspace{2pt}}p{0.87\textwidth}}{
        \FindCompBank
        } \\
        {\bf Input:} &
            A financial market $M=(N,L,e,\alpha,\beta)$ and a bank $b \in N$. 
        \\
        {\bf Question:} &
            Is there a compression $C$ for $M$ and a proportional clearing vector  for the market $M-C$ under which $b$ does not default?
    \end{tabular}
    }
\end{minipage}
\end{center}
\medskip

\section{Algorithm for Priority-Proportional Clearing}
\label{sec:alg-clearing}

In this section, we prove the following result.

\begin{theorem}
\label{thm:alg-pripropclear}
    \PriPropClear\ can be solved in polynomial time. 
\end{theorem}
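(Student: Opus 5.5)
We prove Theorem~\ref{thm:alg-pripropclear} by a fictitious-default iteration in the spirit of \citet{rogers2013failure}. We maintain a set $D_t$ of banks \emph{declared} defaulting, starting from $D_0=\emptyset$. In round $t$ we compute the coordinate-wise maximal payment vector $p^t$ under the hypothesis that exactly the banks in $D_t$ default. Then we set $D_{t+1}=D_t\cup\{i\notin D_t: E_i^\nondeft(p^t)<L_i\}$. We stop when $D_{t+1}=D_t$. Since $D_t$ only grows, there are at most $|N|$ rounds.

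The proof has the following key steps.

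\textbf{Monotonicity.} Enlarging $D$ can only lower the maximal payment vector. Defaulting banks use the smaller income $E_i^\deft\le E_i^\nondeft$, and every payment $p_{ij}$ is nondecreasing in the payer's income. So the map $\Phi_D(p)$, which sets $p_{ij}=L_{ij}$ for $i\notin D$ and applies the priority-proportional rule to $E_i^\deft(p)$ for $i\in D$, is monotone in $p$ and antitone in $D$.

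\textbf{Correctness.} At termination, $p^t$ is a priority-proportional clearing vector. Every bank in $D_t$ has become insolvent at some earlier, larger payment vector $p^s\ge p^t$, so it is still insolvent at $p^t$. Every bank outside $D_t$ pays fully.

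\textbf{Maximality.} By induction, each $D_t$ is contained in the default set of the greatest clearing vector $p^*$, and $p^t\ge p^*$. This uses Tarski-type maximality of $\Phi_{D_t}$ together with the antitonicity above. Hence the final vector equals $p^*$. Negative endowments cause no trouble: a payment is truncated at $0$ when the income is negative, which the priority rule already does.

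The main obstacle is computing, in polynomial time, the greatest fixed point of $\Phi_D$ for a fixed $D$. For a bank in $D$, the rule is a piecewise-linear, concave, nondecreasing function of its income, with breakpoints at $\sum_{h\le\ell}L_i^h$. This is exactly where \citet{KanellopoulosKZ24} take limits of an iteration. My plan is to capture the greatest fixed point with linear programs, using three LPs as follows.

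First, I maximize $\sum_{ij}p_{ij}$ subject to the upper-bound constraints
\[
p_{ij}\le L_{ij},\qquad \sum_{j\in G_i^\ell}p_{ij}\le \max\bigl(0,E_i^\deft(p)\bigr)\ \text{for all }\ell,\qquad p\ge 0 .
\]
The $\max$ is handled by encoding it through the structure of the later LPs. These inequalities say $p\le\Phi_D(p)$ in the cumulative sense. The feasible region is a lattice-like polytope, and it contains every post-fixed point, so its maximum dominates all fixed points.

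Second, the LP optimum might not respect priorities, for example by paying a low-priority creditor while a high-priority one is unpaid. To fix this, I solve a second LP that lexicographically maximizes the cumulative payments $\sum_{j\in G_i^\ell}p_{ij}$ group by group. It is implemented as a single LP with weights that decrease steeply across priority levels, with polynomially bounded encoding.

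Third, I solve an LP that enforces proportionality inside each group. Within the active group I fix the total amount paid and set $p_{ij}$ proportional to $L_{ij}$. Pure proportional clearing, as in Eisenberg--Noe, is the special case with one group per bank.

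The delicate step is arguing that this optimum is in fact a fixed point, namely that every cumulative constraint that should be tight is tight. I would prove this by contradiction. If some bank had slack, I would raise its payments along the priority order. The resulting vector would stay feasible, since raising payments only raises other banks' incomes, and it would strictly increase the objective.

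Each round therefore costs a constant number of LP solves. With at most $|N|$ rounds, the whole procedure runs in polynomial time.
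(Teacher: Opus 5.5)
\paragraph{The inner step fails.} Your outer fictitious-default loop over $D$ is sound. It is the same monotone switch the paper applies to $(\alp_i,\bet_i)$. The gap is the inner step, computing the greatest fixed point of $\Phi_D$ by your three LPs, and that step is the whole difficulty.

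Your concavity claim holds only for \emph{cumulative} payments. For $j\in g_i^\ell$ with $\ell\ge 2$, the individual payment $p_{ij}$ is $0$ until $E_i^\deft$ reaches $\sum_{h<\ell}L_i^h$ and only then increases. That is a convex kink, so $\{p : p\le\Phi_D(p)\}$ is not convex. Your relaxation keeps only $\sum_{j\in G_i^\ell}p_{ij}\le\max(0,E_i^\deft(p))$. Since $p\ge 0$, all of these are implied by the single constraint for $\ell=k_i$, so the relaxation carries no priority information at all. It is also not closed under coordinatewise maxima, because the same budget can be split in two different ways. Hence its optimum dominates the fixed points only in total, not coordinatewise.

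\paragraph{A counterexample.} Take $e_i=1$ and all other endowments $0$. Let $L_{ik}=1$ with $g_i^1=\{k\}$, $L_{ij}=1$ with $g_i^2=\{j\}$, and $L_{jm}=1$, with no default costs. The unique clearing vector is $p_{ik}=1$, $p_{ij}=p_{jm}=0$, with default set $\{i,j\}$. So a correct run must end with the inner problem for $D=\{i,j\}$.
\begin{itemize}
\item Your first LP there has the unique optimum $(p_{ik},p_{ij},p_{jm})=(0,1,1)$ with total $2$. Money sent to the low-priority creditor $j$ is recycled into $j$'s own first-priority payment.
\item Any objective that weights cumulative group payments only by priority level is constant on the segment $(1-t,t,t)$. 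So your second LP cannot prefer $t=0$, and restricted to the first LP's optimal face it is forced to $t=1$.
\item Your third LP is vacuous, since all groups are singletons.
\item Your slack argument finds nothing, because both budgets are tight at $(0,1,1)$.
\end{itemize}
The loop then stops and outputs a vector that violates $i$'s priorities.

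\paragraph{The missing idea.} You need to remove the kink combinatorially rather than through objective weights. The paper fixes, for every bank, a tentative critical group $\Gamma_i=g_i^{\gamma_i}$, initially the lowest one. It imposes $p_{ij}=L_{ij}$ above it, $p_{ij}=0$ below it, and $p_{ij}=\lambda_i^{\gamma_i}L_{ij}$ inside it. Every constraint is then linear and the priority order becomes a hard constraint.

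Subsidy variables $z_i$ keep the system feasible when a bank cannot even cover the groups above $\Gamma_i$. This is also how the paper handles negative incomes. Your $\max(0,\cdot)$ leaves that case open, since that constraint is nonconvex and no later LP can encode it.

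The paper then solves three LPs: min-subsidy, min-payment (which pins down the unique $z$ satisfying $(\spadesuit)$), and max-payment with $z$ fixed. Together they give the coordinatewise-maximal limit point. Every bank with $z_i>0$ has $\gamma_i$ decreased. $D$ only grows, each $\gamma_i$ only decreases, and every clearing vector stays coordinatewise below the iterates, so there are $O(|N|\sum_i k_i)$ rounds. Your argument needs some such guess-and-refine of the critical groups to go through.
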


We provide an algorithm that computes a coordinate-wise maximal priority-proportional clearing vector. The core difficulty lies in the fact that the presence of the default cost parameters $\alpha_i, \beta_i$ depends on whether bank $i$ defaults, and the payments bank $i$ receives also depend on whether the other banks default, so it is not straightforward to determine whether $i$ will default or not.

Our strategy is iterative. We maintain a set of ``active'' assumptions about which banks default and which priority groups are fully paid. To avoid case distinction, we introduce parameters $\alp,\bet\in [0,1]^N$, where we set $(\alp_i,\bet_i)=(\alpha_i,\beta_i)$ if $i$  defaults, and $(\alp_i,\bet_i)=(1,1)$ otherwise.

For a fixed set of parameters $\alp, \bet$, we use Linear Programming (LP). If the result contradicts our assumptions (e.g., a bank assumed solvent actually defaults), we update the parameters and repeat.

\subsection{Clearing with Fixed Parameters}

First, consider the subproblem where the parameters $\alp_i, \bet_i$ are fixed constants for each $i \in N$ (i.e., we tentatively assume we know which banks default). Furthermore, we also tentatively assume that for each bank~$i$ we know the \emph{critical priority group } $\Gamma_i=g_i^{\gamma_i}$, that is, the highest priority group towards which $i$ can pay a non-zero amount.
Initially, we will assume that $(\alp_i,\bet_i)=(1,1)$ and $\gamma_i=k_i$ for all $i\in N$, and then update our beliefs when necessary. 



\myparagraph{Iterative construction of the sequence $(p^k,z^k)_{k\in\mathbb{N}}$.}
We are going to define an iterative sequence of payment--subsidy pairs
\[
(p^k,z^k)_{k\in\mathbb{N}} \qquad \text{where } 
p^k=(p_{ij}^k)_{i,j\in N}
\text{ and } 
z^k=(z_i^k)_{i\in N}
\]
To ensure that the payments respect the  priority structure and our fixed choice for the critical priority groups,
for all $k \in \mathbb{N}$ we  
set the payment~$p^k_{i,j}$ form some bank~$i \in N$  towards a bank~$j \notin \Gamma_i$ that is not in the critical priority group for~$i$ as
\[p_{ij}^k= \begin{cases}
   L_{ij} & \text{if } j\in g_i^{\ell} \text{ for some } l<\gamma_i; \\
   0 & \text{if } j\in g_i^{\ell} \text{ for some } \ell>\gamma_i.
\end{cases}
\]

Next, we define the iteration that defines the payment $p^k_{i,j}$ for some banks $i \in N$ towards a bank~$j \in \Gamma_i$ in the critical priority group, as well as the subsidies $z^k_i$ as follows.

\myparagraph{Initialization:}
At iteration $k=0$, we set
\[
p_{ij}^0 := L_{ij}
\quad \text{for all } i\in N \text{ and } j\in \Gamma_i,  \text{ and }
\qquad
z_i^0 := 0 \quad \text{for all } i\in N .
\]

\myparagraph{Update rule:} 
Suppose $(p^k,z^k)$ is given. 
We define the \emph{net equity} for each bank~$i \in N$ as
\[
\EE_i(p^k,z^k)
\;:=\;
\alpha_i e_i
\;+\;
z_i^k
\;+\;
\sum_{j\in N}\bigl(\beta_i p_{ji}^k - p_{ij}^k\bigr).
\]
Then, for each $i\in N$, we determine  $(p_i^{k+1},z_i^{k+1})$ uniquely as follows:
\begin{itemize}
    \item If $\EE_i(p^k,z^k)\ge 0$, then we set
    \[
    (p_i^{k+1},z_i^{k+1}) := (p_i^k,z_i^k).
    \]
    \item If $\EE_i(p^k,z^k)<0$, then 
    we let
\begin{align*}
p_{ij}^{k+1}& =\max \left\{ 0, p_{ij}^k+ \EE_i(p^k,z^k)\frac{L_{ij}}{
L_i^{\Gamma_i}
} \right\} \qquad \text{for each } j\in \Gamma_i, \text{ and}
\\
    z_i^{k+1} &= \max \left\{ 0, \sum_{j\in N} p_{ij}^{k+1}
-    \alpha_i e_i -
    \sum_{j\in N}\beta_i p_{ji}^k\right\}.
\end{align*}

\end{itemize}

\medskip

Note that at each iteration $k$ the priority-proportionality structure is preserved: 
\[
p_{ij}^k=\begin{cases}
    L_{ij} & \text{if } j\in g_i^{\ell} \text{ with } \ell<\gamma_i, \\
    \lambda_i^{\gamma_i}L_{ij}
    & \text{if } j\in \Gamma_i, \\ 
    0 &  \text{if } j\in g_i^{\ell} \text{ with } \ell>\gamma_i.
\end{cases}
\]
for some $\lambda_i^{\gamma_i}\in [0,1]$
for each $i\in N$.

The procedure is iterated until
$\EE_i(p^k,z^k)\ge 0$ for all $i\in N$.

\myparagraph{Monotonicity and convergence.}
The sequence $(p^k,z^k)$ has the following properties:
\begin{itemize}
    \item for each $i\in N$, the sequence $(z_i^k)_{k\in\mathbb{N}}$ is monotone nondecreasing;
    \item for each $i,j\in N$, the sequence $(p_{ij}^k)_{k\in\mathbb{N}}$ is monotone nonincreasing.
\end{itemize}
Moreover, the sequences are bounded: $p_{ij}^k\ge 0$ for all $i,j,k$, and
$z_i^k\le L_i-e_i$ for all $i,k$.  
Consequently, by the Monotone-convergence theorem, the sequence
$(p^k,z^k)$ admits a limit point, and hence converges.

    By the construction, the limit point $(p^\star, z^\star)$ satisfies the following property.
    
    \begin{itemize}
        \item[$(\mypropsymbol)$] \raisebox{\ht\strutbox}{\hypertarget{prop:spec}{}}  If $z_i^\star>0$, then $p_{ij}^\star=0$ for all $j\in \Gamma_i$ and $\EE_i(p^\star,z^\star)=0$.
    \end{itemize}

\def\specprop{\hyperlink{prop:spec}{($\mypropsymbol$)}}

Intuitively, 
\specprop\  means that if a bank requires external subsidy ($z_i^\star > 0$),
it cannot afford to make any payments to its critical priority group.

\subsubsection{Finding the limit point}
Consider the following linear program, denoted as \ref{prob:min_subsidy}:
\begin{equation}
\tag{\textsc{LP:Min-Subsidy}}
\label{prob:min_subsidy}
\begin{array}{rlclll}
    \min & \multicolumn{4}{l}{\sum\limits_{i \in N} z_i} \\[4pt]
    \text{s.t.}
    & \sum_{j\in N}p_{ij} & \leq & \alp_i e_i + z_i + \sum_{j\in N}\bet_{i}p_{ji} & \forall i \in N & \text{(Budget)} \\[2pt]
    & p_{ij} & = & \lambda_i^{\ell}L_{ij} & \forall i \in N, j\in g_i^{\ell} \quad & \text{(Prop)} \\[2pt]
    & \lambda_i^{\ell}, z_i & \ge & 0 & \forall i\in N, \ell \in [k_i] \\[2pt]
    & \lambda_i^{\gamma_i} & \le & 1 & \\[2pt]
    & \lambda_i^{\ell} & = & 1 & \forall i\in N,\ell <\gamma_i\\[2pt]
    & \lambda_i^{\ell} & = & 0 & \forall i\in N, \ell >\gamma_i
\end{array}
\end{equation}

We will show that the optimal solution to \ref{prob:min_subsidy} gives the minimum total subsidy 
$Z^* = \sum z_i$ required. However, there may be multiple solutions with the same 
minimal $Z^*$. We are interested in a specific solution that satisfies property \specprop.


To find a solution satisfying \specprop, we solve a second linear program, \ref{prob:min_payment}. Let $Z^*$ be the optimal objective value of \ref{prob:min_subsidy}. While this counterintuitively minimizes 
the total payment, we show that it does help us compute the $z_i$ values of the optimal solution.

\begin{equation}
\tag{\textsc{LP:Min-Payment}}
\label{prob:min_payment}
\begin{array}{rlcll}
    \min & \multicolumn{4}{l}{\sum\limits_{i,j \in N} p_{ij}} \\[4pt]
    \text{s.t.}
    & \sum_{i \in N} z_i  & = & Z^* & \\[2pt]
    & \multicolumn{4}{l}{\text{All constraints from } \text{\ref{prob:min_subsidy}}.}
\end{array}
\end{equation}

\begin{restatable}[\linkproof{app:prf-lemsol}]{lemma}{lemsol}
\label{lem:sol}
    The limit point $(p^\star,z^\star)$ is an optimal solution to \ref{prob:min_subsidy}.
\end{restatable}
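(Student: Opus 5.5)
The plan is to prove optimality by LP duality. I will show that $(p^\star,z^\star)$ is feasible for \ref{prob:min_subsidy} and then build a dual solution satisfying complementary slackness with it. Throughout, the net equity $\EE_i$ is read with the fixed parameters $\alp_i,\bet_i$, which are the ones that occur in the LP.

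\emph{Feasibility.} By construction, every iterate satisfies the priority structure. Entries outside $\Gamma_i$ are fixed at $L_{ij}$ or $0$, and entries inside $\Gamma_i$ equal $\lambda_i^{\gamma_i}L_{ij}$ with $\lambda_i^{\gamma_i}\in[0,1]$. These properties pass to the limit, so (Prop) and the bounds on the $\lambda$'s hold. The update rule is continuous, and by monotone convergence the procedure stops only when every $\EE_i$ is nonnegative. Hence $\EE_i(p^\star,z^\star)\ge 0$ for all $i$, and this is exactly the (Budget) constraint. Also $z^\star\ge 0$.

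\emph{Structural lemma.} Call bank~$i$ \emph{tight} if $\EE_i(p^\star,z^\star)=0$. I will prove the following: if $\lambda_i^{\gamma_i,\star}<1$ or $z_i^\star>0$, then $i$ is tight. The argument has three steps.
\begin{itemize}
\item Payments to bank~$i$ only decrease over the iterations. So after the first step at which $i$ reduces its payments, $\EE_i$ is $0$ right after the update, unless the $\max\{0,\cdot\}$ clipping puts $\lambda_i$ at $0$; in that case $z_i$ is raised to compensate exactly.
\item From then on, $\EE_i$ can only become negative before the next correction, which brings it back to $0$.
\item Passing to the limit, using continuity, gives $\EE_i=0$.
\end{itemize}
Property \specprop\ covers the subsidized banks.

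\emph{Dual construction.} The dual of \ref{prob:min_subsidy}, after eliminating the fixed $\lambda$'s, has a multiplier $y_i\ge 0$ for each budget constraint and $\mu_i\ge 0$ for each constraint $\lambda_i^{\gamma_i}\le 1$. Its constraints are:
\begin{itemize}
\item $y_i\le 1$, coming from the coefficient of $z_i$;
\item $y_iL_i^{\gamma_i}-\sum_{j\in\Gamma_i}\bet_jL_{ij}y_j+\mu_i\ge 0$, coming from the column of $\lambda_i^{\gamma_i}$.
\end{itemize}
Complementary slackness requires four things: $y_i=1$ if $z_i^\star>0$; $y_i=0$ if $i$ is not tight; $\mu_i=0$ if $\lambda_i^\star<1$; and the $\lambda$-constraint holds with equality if $\lambda_i^\star>0$.

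I define $y$ as the absorption probability of a substochastic Markov chain:
\begin{itemize}
\item subsidized banks are absorbing with value $1$;
\item non-tight banks are killed, with value $0$;
\item from a tight unsubsidized bank $i$ with $\lambda_i^\star>0$, the chain moves to $j\in\Gamma_i$ with probability $\bet_jL_{ij}/L_i^{\gamma_i}$ and is killed with the remaining probability.
\end{itemize}
Then $y\in[0,1]$, and the equality $y_i=\sum_{j\in\Gamma_i}(\bet_jL_{ij}/L_i^{\gamma_i})y_j$ holds for these transient banks. For non-tight banks, the lemma gives $\lambda_i^\star=1$, so I may take $\mu_i:=\sum_j\bet_jL_{ij}y_j$. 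Banks with $\lambda_i^\star=0$ and $z_i^\star=0$ impose only an inequality. I would set $y_i$ to the maximum admissible value there, or treat them as absorbing with value $1$ if they are tight with zero payments to $\Gamma_i$; the constraint then holds because the $y_j$ are at most $1$ and $\bet_j\le 1$.

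\emph{Main obstacle.} The delicate step is checking the remaining dual inequality and the complementary slackness conditions in the boundary cases, namely $\lambda_i^\star=0$ with $z_i^\star=0$ and tight, and $\lambda_i^\star\in(0,1)$. This relies on the structural lemma, which is itself the subtle part because of the clipping at $0$ in the update rule. Once these conditions are checked, the dual objective equals $\sum_i z_i^\star$, and weak duality gives optimality.
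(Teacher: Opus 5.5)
Your proposal is correct, and it reaches the result by a different route from the paper.

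\textbf{How the paper argues.} The paper never writes down a dual. It supposes a feasible $(p',z')$ with smaller total subsidy. It lets $S$ be the set of banks that are either subsidized at the limit, or tight with strictly larger payments to $\Gamma_i$ under $p'$. It sums the (Budget) differences over $S$. It then uses the sign pattern of $p'-p^\star$ across the cut $(S,N\setminus S)$, together with $\bet\le 1$ on arcs inside $S$, to get $\sum_{i\in S}(z'_i-z^\star_i)\ge 0$; outside $S$ one has $z'_i\ge z^\star_i=0$. In effect the paper uses a $0/1$ multiplier $\mathbf{1}_S$ that depends on the competitor. That multiplier only has to certify one inequality, so the loss $1-\bet_j$ can be dropped as a nonpositive term.

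\textbf{How your route differs.} You build one fixed dual solution that must satisfy complementary slackness exactly. This is what forces fractional multipliers: absorption probabilities of a chain whose killing is governed by $1-\bet_j$.

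\textbf{Common ground.} Both arguments rest on the same two structural facts. First, subsidized banks are tight with $\lambda_i^\star=0$. Second, non-tight banks pay $\Gamma_i$ in full. The paper only asserts the second fact when it writes $p^\star_{ij}=L_{ij}$ for non-tight $i$; your monotonicity argument actually proves it. Your fixed-point argument for feasibility of the limit is likewise more careful than the paper's one-line claim.

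\textbf{Trade-off.} The paper's route is shorter and needs no duality. Yours yields a single certificate valid against every competitor, plus an optimal dual, at the cost of the complementary-slackness case analysis.

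\textbf{One small correction.} $\EE_i(p^{k+1},z^{k+1})$ need not equal $0$ right after an update, because incoming payments change at the same time. What vanishes is the equity computed with the old incoming payments $p^k_{ji}$. That still gives $\EE_i\le 0$ at all later iterates, and hence tightness in the limit.

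\textbf{Your flagged boundary cases close immediately.}
A bank with $\lambda_i^\star\in(0,1)$ is tight by your lemma and unsubsidized by~\specprop. It is therefore transient, with $\mu_i=0$.
A tight unsubsidized bank with $\lambda_i^\star=0$ takes $y_i=1$ and $\mu_i=0$. It needs only $L_i^{\gamma_i}\ge\sum_{j\in\Gamma_i}\bet_jL_{ij}y_j$, which holds because $y_j\le 1$ and $\bet_j\le 1$.
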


\begin{restatable}[\linkproof{app:prf-lemuniq}]{lemma}{lemuniq}
\label{lem:uniq}
    Among the optimal solutions of \ref{prob:min_subsidy} that satisfy property~\specprop, the values of $z_i$ are unique.
\end{restatable}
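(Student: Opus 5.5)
The plan is a lattice-type exchange argument between two optimal solutions, exploiting that property \specprop\ pins down the outflow of every subsidised bank.

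\emph{Setup.} Fix the parameters $\alp,\bet$ and critical groups $\Gamma_i$. Write $A_i:=\sum_{h<\gamma_i}L_i^h$ for the forced outflow of~$i$ and $\mathrm{in}_i(p):=\sum_j p_{ji}$ for its inflow. Every feasible payment vector of \ref{prob:min_subsidy} is then described by the scalars $\lambda_i:=\lambda_i^{\gamma_i}\in[0,1]$, via $\mathrm{out}_i(p)=A_i+\lambda_i L_i^{\gamma_i}$. For fixed $p$, the smallest admissible subsidy is
\[
\zeta_i(p):=\max\bigl(0,\ \mathrm{out}_i(p)-\alp_i e_i-\bet_i\,\mathrm{in}_i(p)\bigr).
\]
In any optimal solution we must have $z_i=\zeta_i(p)$, since otherwise $z_i$ could be lowered. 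For a solution with \specprop, each $i$ with $z_i>0$ has $\lambda_i=0$ and a tight budget, so
\[
z_i=A_i-\alp_i e_i-\bet_i\,\mathrm{in}_i(p).
\]
Hence $z_i$ is a nonincreasing function of the inflow alone.

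\emph{Step 1: optimality is closed under taking minima.} Let $(p,z)$ and $(p',z')$ be optimal solutions, both satisfying \specprop. Define $r$ by $\lambda^r_i:=\min(\lambda_i,\lambda'_i)$, so $r=\min(p,p')$ coordinatewise; this vector is still feasible in structure. Split $N$ into $S=\{i:\lambda_i\le\lambda'_i\}$ and its complement. For $i\in S$ we have $\mathrm{out}_i(r)=\mathrm{out}_i(p)$ and $\mathrm{in}_i(r)\ge \mathrm{in}_i(p)-\sum_j(p_{ji}-r_{ji})$; symmetrically for $i\notin S$ with $p'$. The goal is the bound
\[
\sum_i\zeta_i(r)\ \le\ Z^* ,
\]
obtained by charging each unit of reduced inflow to the bank whose outflow was reduced. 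That bank saved exactly this amount of outflow, and its creditors lose at most a $\bet\le1$ fraction of it in income. The charge is absorbed either by slack in the saving bank's budget or by a decrease of its own required subsidy. This shows that $(r,\zeta(r))$ is again optimal.

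\emph{Step 2: uniqueness.} Suppose, for contradiction, that $z_i\neq z'_i$ for some $i$, say $z_i>z'_i\ge0$. By \specprop, $\lambda_i=0\le\lambda'_i$ and
\[
z_i=A_i-\alp_i e_i-\bet_i\,\mathrm{in}_i(p),
\]
while $z'_i\ge A_i-\alp_i e_i-\bet_i\,\mathrm{in}_i(p')$. Therefore $\mathrm{in}_i(p)<\mathrm{in}_i(p')$, so some debtor $j$ of $i$ pays strictly less under $p$. Following this chain of strictly smaller payments backwards, and using that total subsidy is equal in both solutions, I would show that the accounting in Step 1 becomes strict at some bank. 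That would give $\sum_i\zeta_i(r)<Z^*$, contradicting the optimality of $Z^*$. Alternatively, I would compare each solution with the limit point $(p^\star,z^\star)$ of Lemma~\ref{lem:sol}. There, induction over the iteration should give $p\le p^k$ for every $k$ for any optimal solution with \specprop, because \specprop\ forbids subsidised banks from paying, and this would force $z=z^\star$.

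\emph{Main obstacle.} The hard part is the accounting in Steps~1--2: making the charging argument rigorous when $\bet<1$ and when the saved outflow of a bank was itself financed by subsidy rather than by income. The second route via the monotone iteration is my fallback if the charging becomes unwieldy.
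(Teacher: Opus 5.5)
\textbf{Step~1 is false, so the main route fails.} The inequality $\sum_i\zeta_i(r)\le Z^*$ for $r=\min(p,p')$ does not hold in general. Step~2 only tries to make this inequality strict, so it fails with it. Here is a counterexample. Take banks $i,j,k,x$, each with one priority group, $\alp=\bet=1$, $e_j=e_k=1$, $e_i=e_x=0$, and $L_{ji}=L_{ki}=L_{ix}=1$. Then $Z^*=0$. Let $p$ be the vector where $j$ and $i$ pay in full and $k$ pays nothing, and $p'$ the vector where $k$ and $i$ pay in full and $j$ pays nothing. Both are optimal with $z=z'=0$, so \specprop\ holds vacuously. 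But in $r=\min(p,p')$ bank $i$ still pays $1$ while receiving nothing, so $\zeta_i(r)=1>Z^*$. The charging breaks exactly where you say the charge is ``absorbed by slack in the saving bank's budget''. The outflow that $j$ saves only creates slack in $j$'s own budget, which lowers no subsidy, while $i$'s requirement rises.

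\textbf{The exchange must go upward, as in the paper.} Your own observation that $z_i=\max(0,A_i-\alp_ie_i-\bet_i\,\mathrm{in}_i(p))$ is nonincreasing in the inflow already points this way. The paper sets $p''=\max(p,p')$ and $z''=\min(z,z')$ and checks (Budget) bank by bank; rows of $p$ and $p'$ are comparable by (Prop).
If $z_i=z'_i$, row $i$ of $p''$ is the row of whichever solution pays more. That solution's budget already holds with subsidy $z_i$, and its inflow only grows.
If, say, $z_i<z'_i$, then $z'_i>0$, so \specprop\ forces $p'$ to pay nothing to $\Gamma_i$. Hence row $i$ of $p''$ is row $i$ of $p$, and the same reasoning works with $z''_i=z_i$.
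So $(p'',z'')$ is feasible, and $\sum_i z''_i<Z^*$ unless $z=z'$.

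\textbf{Your fallback is sound and gives a genuinely different proof.} Consider any feasible $(p,z)$ in which subsidised banks pay nothing to their critical group. By induction, $p\le p^k$ for all $k$. The case $z_i>0$ is immediate. For $z_i=0$ and an updated row,
$\mathrm{out}_i(p)\le\alp_ie_i+z^k_i+\bet_i\,\mathrm{in}_i(p^k)\le\mathrm{out}_i(p^{k+1})$.
Hence $p\le p^\star$. The step you leave implicit is then short. If $z^\star_i>0$, \specprop\ for the limit point gives
$z^\star_i=A_i-\alp_ie_i-\bet_i\,\mathrm{in}_i(p^\star)\le A_i-\alp_ie_i-\bet_i\,\mathrm{in}_i(p)\le z_i$,
where the last step uses (Budget) and $\mathrm{out}_i(p)\ge A_i$. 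So $z\ge z^\star$ coordinatewise, and equal totals (Lemma~\ref{lem:sol}) give $z=z^\star$. This route depends on the limit point and Lemma~\ref{lem:sol}, whereas the paper's max/min exchange is self-contained.
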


\begin{restatable}[\linkproof{app:prf-lemminp}]{lemma}{lemminp}
\label{lem:minp}
    An optimal solution to \ref{prob:min_payment} corresponds to an optimal solution of \ref{prob:min_subsidy} that satisfies property~\specprop.
\end{restatable}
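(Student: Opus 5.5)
Let $(p,z,\lambda)$ be an optimal solution of \ref{prob:min_payment}. By the first constraint and the constraints inherited from \ref{prob:min_subsidy}, it is feasible for \ref{prob:min_subsidy} with objective $Z^*$, hence optimal there. It remains to show that it satisfies \specprop. I will argue by contradiction: assume there is a bank $i$ with $z_i>0$ that violates \specprop. Then either (a) $\EE_i(p,z)>0$ (budget slack), or (b) $\EE_i(p,z)=0$ but $\lambda_i^{\gamma_i}>0$, i.e.\ $p_{ij}>0$ for some $j\in\Gamma_i$. In each case I will build another optimal solution of \ref{prob:min_subsidy} that has strictly smaller total payment or strictly smaller total subsidy, contradicting optimality.

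Case (a) is easy. Decrease $z_i$ by a small $\varepsilon>0$, so that the budget constraint of $i$ still holds. Every other constraint is unaffected, since $z_i$ appears only in $i$'s budget. The total subsidy drops below $Z^*$, which contradicts the optimality of $Z^*$ in \ref{prob:min_subsidy}.

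Case (b) is the main step. Decrease $\lambda_i^{\gamma_i}$ by a small $\delta>0$ and decrease $z_i$ by the matching amount $\delta L_i^{\gamma_i}$, taking $\delta$ small enough that both stay nonnegative. The budget of $i$ stays tight and feasible, because its outgoing payments and its subsidy fall by the same amount. The difficulty is that each creditor $j\in\Gamma_i$ now receives $\bet_j\delta L_{ij}$ less. So the total subsidy has strictly dropped, but feasibility for the other banks may be broken. To repair it, I will let the shock propagate. Each affected bank $j$ restores feasibility by lowering its own $\lambda_j^{\gamma_j}$, and by raising $z_j$ only once its critical-group payments reach zero. Every unit of lost income produces at most one unit of added subsidy downstream, and only a $\bet_j\le 1$ fraction of it is passed further along. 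Hence the net change in total subsidy is at most $-\delta L_i^{\gamma_i}+\delta L_i^{\gamma_i}=0$, and the total payment strictly decreases, since at least $i$'s payments fell by $\delta L_i^{\gamma_i}>0$.

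To make the propagation rigorous, and to avoid an infinite cascade, I would phrase it as a linear system. The changes $\delta p$ are determined by a substochastic matrix built from the $\bet_j L_{jk}/L_j^{\gamma_j}$ restricted to critical groups. Monotone convergence of the resulting series, in the spirit of the iteration used to define $(p^\star,z^\star)$, then gives a feasible perturbed solution. Alternatively, I can use LP sensitivity: the direction described above is a feasible direction at $(p,z,\lambda)$ that keeps $\sum z$ at most $Z^*$ and strictly decreases $\sum p$, and existence for small step size follows from the constraints being linear. I expect this propagation/accounting step to be the main obstacle, especially showing that total subsidy does not increase when $\bet_j=1$ along cycles.

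If it becomes messy, I will fall back on an argument via Lemma~\ref{lem:sol} and Lemma~\ref{lem:uniq}. The limit point $(p^\star,z^\star)$ is an optimal solution satisfying \specprop, and the iteration shows that $p^\star$ is the componentwise smallest payment compatible with minimal subsidy. I would then compare any minimum-payment optimal solution against it.
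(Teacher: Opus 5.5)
Your reduction to the two cases and your handling of case (a) match the paper. Case (b), however, has a gap: the cascade repair is never controlled, and neither of your routes to rigor works as stated.

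The feasible-direction version fails in exactly the $\bet=1$ cycle situation you were worried about. Take two banks with $L_{ij}=L_{ji}=1$, single priority groups, $\bet_i=\bet_j=1$, $\alp_ie_i=-s<0$ and $e_j=0$. Consider the \ref{prob:min_subsidy}-optimal point $\lambda_i=\lambda_j=c\in(0,1]$, $z_i=s$, $z_j=0$. Both budgets are tight, and \specprop\ fails at $i$.
\begin{itemize}
\item Under your rule, $z_j$ may not be raised while $\lambda_j>0$, so $j$'s budget forces $\Delta\lambda_j\le\Delta\lambda_i$.
\item Then $i$'s budget requires $\Delta\lambda_i\le\Delta z_i+\Delta\lambda_j<\Delta\lambda_i$, because $\Delta z_i<0$.
\end{itemize}
So no such direction exists, and the series you propose to sum diverges on this cycle. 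A non-infinitesimal version would have to run until payments hit zero. That means redoing the paper's monotone iteration together with a conservation argument for the subsidy, and neither is in your proposal.

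Your fallback also rests on a false premise. The iteration starts from $p^0=L$ and lowers payments only when forced, so $p^\star$ is the coordinate-wise \emph{largest} payment compatible with $z^\star$ (Lemma~\ref{lem:zero}), not the smallest. Moreover, Lemma~\ref{lem:uniq} pins down only $z$, not $p$. For example, on a pure debt cycle with zero endowments, $p^\star=L$ while \ref{prob:min_payment} returns $p=0$.

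The missing observation is that no propagation is needed. \ref{prob:min_payment} fixes only the \emph{total} subsidy, and the budget constraints are inequalities, so you can hand $i$'s subsidy directly to its creditors:
\begin{itemize}
\item Lower $\lambda_i^{\gamma_i}$ by $\delta$ and $z_i$ by $\delta L_i^{\gamma_i}$, as you do.
\item Raise $z_j$ by $\delta L_{ij}$ for every $j\in\Gamma_i$.
\end{itemize}
Now check the effect:
\begin{itemize}
\item Bank $i$'s budget loses the same amount on both sides.
\item Each creditor's right-hand side changes by $(1-\bet_j)\delta L_{ij}\ge 0$, and its own payments are untouched.
\item No other bank is affected.
\item $\sum_i z_i$ stays exactly $Z^*$, because $\sum_{j\in\Gamma_i}L_{ij}=L_i^{\gamma_i}$.
\item $\sum p$ drops by $\delta L_i^{\gamma_i}>0$, contradicting optimality in \ref{prob:min_payment}.
\end{itemize}
This one-step shift is exactly the paper's proof. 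Your own remark that each unit of lost income costs at most one unit of subsidy already gives it, once you apply it at the first creditors instead of propagating further.
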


By Lemmas~\ref{lem:sol}-\ref{lem:minp}, we can find the exact $z_i^\star$ values of the limit point.
Finally, we demonstrate how to compute $p^\star$ in polynomial time as well.
Take
\begin{equation}
\tag{\textsc{LP:Max-Payment}}
\label{prob:max_payment}
\begin{array}{rlcll}
    \max & \multicolumn{4}{l}{\sum\limits_{i,j \in N} p_{ij}} \\[4pt]
    \text{s.t.}
    &  z_i  & = & z_i^\star
& \forall i\in N \\[2pt]
& \sum_{j\in \Gamma_i}p_{ij}& = & 0 & i\in N: z_i^\star >0\\ 

& \sum_{j\in N}p_{ij} &=& \alp_ie_i+z_i^\star + \sum_{j\in N}\bet_ip_{ji} & i\in N:z_i^\star >0\\
    & \multicolumn{4}{l}{\text{All constraints from } \text{\ref{prob:min_subsidy}}.}
\end{array}
\end{equation}

\begin{restatable}[\linkproof{app:prf-lemzero}]{lemma}{lemzero}    
\label{lem:zero}
  The limit point $(p^\star,z^\star)$ is the unique optimal solution to \ref{prob:max_payment}, and is coordinate-wise maximal in $p$ among all solutions to \ref{prob:max_payment}.
\end{restatable}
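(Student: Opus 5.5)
We will prove Lemma~\ref{lem:zero} in three steps: check that $(p^\star,z^\star)$ is feasible for \ref{prob:max_payment}, then show by induction on $k$ that every feasible solution is dominated coordinate-wise by every iterate $p^k$, and finally pass to the limit. Uniqueness of the optimum then follows at once from domination. Throughout, the fixed parameters $\alp_i,\bet_i$ play the roles of $\alpha_i,\beta_i$ in the update rule.

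\emph{Feasibility.} By Lemma~\ref{lem:sol}, $(p^\star,z^\star)$ satisfies all constraints of \ref{prob:min_subsidy}, and trivially $z_i=z_i^\star$. For every $i$ with $z_i^\star>0$, property \specprop\ gives $p^\star_{ij}=0$ for all $j\in\Gamma_i$, so $\sum_{j\in\Gamma_i}p^\star_{ij}=0$. It also gives $\EE_i(p^\star,z^\star)=0$, which is exactly the budget equality required in \ref{prob:max_payment}.

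\emph{Domination of the iterates.} Let $(p,z)$ be any feasible solution of \ref{prob:max_payment}, so $z=z^\star$. We show $p\le p^k$ for every $k$ by induction on $k$.
\begin{itemize}
\item Outside the critical groups, $p$ and every $p^k$ coincide: both equal $L_{ij}$ for groups $\ell<\gamma_i$ and $0$ for groups $\ell>\gamma_i$, by the (Prop) and $\lambda$ constraints.
\item Inside $\Gamma_i$ we have $p_{ij}=\lambda_i^{\gamma_i}L_{ij}\le L_{ij}=p^0_{ij}$, which settles $k=0$.
\end{itemize}
For the step, assume $p\le p^k$ and fix a bank $i$. If $\EE_i(p^k,z^k)\ge 0$, then $p_i^{k+1}=p_i^k\ge p_i$. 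Otherwise we split on $z_i^\star$.
\begin{itemize}
\item If $z^\star_i>0$, the extra constraint forces $p_{ij}=0$ for $j\in\Gamma_i$, which is at most the nonnegative $p^{k+1}_{ij}$.
\item If $z^\star_i=0$, we combine the budget constraint, the induction hypothesis on incoming payments, and $z_i^k\ge 0$:
\[
\sum_{j}p_{ij}\le \alp_ie_i+\bet_i\sum_j p_{ji}\le \alp_ie_i+z_i^k+\bet_i\sum_j p^k_{ji}=\sum_j p^k_{ij}+\EE_i(p^k,z^k).
\]
Payments outside $\Gamma_i$ agree, and inside $\Gamma_i$ both vectors are proportional to $L_{ij}$. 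Subtracting the common part therefore gives $\lambda_i^{\gamma_i}L_i^{\Gamma_i}\le \lambda_i^{k}L_i^{\Gamma_i}+\EE_i(p^k,z^k)$. Multiplying by $L_{ij}/L_i^{\Gamma_i}$ and using $p_{ij}\ge 0$ yields $p_{ij}\le \max\{0,\,p^k_{ij}+\EE_i(p^k,z^k)L_{ij}/L_i^{\Gamma_i}\}=p^{k+1}_{ij}$.
\end{itemize}
Letting $k\to\infty$ gives $p\le p^\star$ coordinate-wise. This is the maximality claim.

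\emph{Uniqueness.} Since $p^\star$ is feasible and dominates every feasible $p$, it maximizes $\sum p_{ij}$. Let $q$ be any optimal solution. Then $q\le p^\star$ coordinate-wise and $\sum q_{ij}=\sum p^\star_{ij}$, so $q=p^\star$, and $z$ is fixed to $z^\star$. Hence the optimum is unique.

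The step I expect to be the main obstacle is the induction step. The iterates use the current subsidy $z^k$, whereas the feasible solution uses $z^\star$, and these cannot be compared directly. The case split on $z_i^\star$ is exactly what avoids this comparison: the extra constraints of \ref{prob:max_payment} kill critical payments when $z^\star_i>0$, and $z_i^k\ge 0$ suffices when $z_i^\star=0$. These constraints are indispensable; without them a large $z_i^\star$ could permit payments exceeding $p^\star$.
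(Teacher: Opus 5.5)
Your proof is correct and is essentially the paper's argument: feasibility from Lemma~\ref{lem:sol} and \specprop, then domination $p\le p^k$ of every feasible $p$ via the case split on $z_i^\star$ and the budget chain through the iterates, with uniqueness read off from domination. The differences are cosmetic: the paper phrases the domination step as a minimal-counterexample argument over $k$ and uses $z_i^k=0$ for all $k$ when $z_i^\star=0$, whereas you run a forward induction that needs only $z_i^k\ge 0$ and handles the truncation at $0$ explicitly via $p_{ij}\ge 0$.
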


\subsection{The Complete Algorithm}

We now combine the LP approach with the update strategy for the default parameters.
\begin{algorithm}[t]
\caption{Priority-Proportional Clearing}
\label{alg:priority_proportional_clearing}

\ForEach{$i \in N$}{
    Set initial values:  $\;(\alpha_i^0,\beta_i^0) \gets (1,1)$, 
    $\;\gamma_i^0 \gets k_i$, $\;$ and 
    $\;\Gamma_i^0 \gets \{ g_i^{k_i} \}$\;
}

\For{$r\gets0$ \KwTo $\infty$}{
    Set $(\alp_i^{r+1},\bet_i^{r+1},\gamma_i^{r+1},\Gamma_i^{r+1})\leftarrow (\alp_i^{r},\bet_i^{r},\gamma_i^{r},\Gamma_i^{r})$ \quad for each $i\in N$\;
    $\mathrm{Changed \gets False}$\;
    Solve \ref{prob:min_subsidy} to obtain $Z^r$\;
    Solve \ref{prob:min_payment} with $\sum_{i\in N} z_i = Z^r$ to obtain $z^r$\;
    Solve \ref{prob:max_payment} according to $z^r$ to obtain $p^r$\;

    \ForEach{$i \in N$}{
        \If{$\alpha_i^r e_i +  \sum_{j\in N} \beta_i^r p_{ji}^r < L_i$
        \textup{\textbf{and}} $(\alpha_i^r,\beta_i^r)=(1,1)$}{
            Set $\;(\alpha_i^{r+1},\beta_i^{r+1}) \gets (\alpha_i,\beta_i)$\;
            $\mathrm{Changed \gets True}$\;
        }

        \If{$z_i^r > 0$ \textup{\textbf{and}} $\gamma_i^r>0$}{
            Set $\;\gamma_i^{r+1} \gets \gamma_i^r-1\;$ and  
            $\;\Gamma_i^{r+1} \gets \{ g_i^{\gamma_i^{r+1}} \}$\;
        $\mathrm{Changed \gets True}$\;
        }
        
    }
    \If{$\mathrm{Changed=False}$}{ \Return $p^r$\;}
}
\end{algorithm}

\begin{theorem}
 There always exists a priority-proportional clearing vector. Furthermore, Algorithm~\ref{alg:priority_proportional_clearing} finds a priority-proportional clearing vector $p$ that is coordinate-wise maximal in polynomial time.
\end{theorem}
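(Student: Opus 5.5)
We prove the theorem in three parts: termination in polynomially many rounds, correctness of the returned vector, and coordinate-wise maximality. The last two rest on an invariant comparing every round of Algorithm~\ref{alg:priority_proportional_clearing} with an arbitrary priority-proportional clearing vector.

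\textbf{Termination.} Each round in which $\mathrm{Changed}$ is set to true makes an irreversible change. Either some bank switches from $(1,1)$ to $(\alpha_i,\beta_i)$, which happens at most once per bank, or some $\gamma_i$ strictly decreases, which happens at most $k_i\le |N|$ times per bank. So there are at most $|N|+\sum_i k_i=O(|N|^2)$ rounds. Each round solves three LPs of polynomial size: \ref{prob:min_subsidy}, \ref{prob:min_payment} and \ref{prob:max_payment}. By Lemmas~\ref{lem:sol}--\ref{lem:zero}, these return exactly the limit point $(p^\star,z^\star)$ of the iterative sequence for the current parameters, so the running time is polynomial.

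\textbf{Correctness.} Suppose the algorithm stops at round $r$, so no change is triggered. Then $z^r=0$, except possibly for banks with $\gamma_i^r=0$; for those banks all groups are treated as unpaid, and \specprop\ is consistent with zero payments. With $z^r=0$ the budget constraint says that each bank pays at most its available income. The \ref{prob:max_payment} characterisation (Lemma~\ref{lem:zero}) together with maximality of the limit point should show the following. Either the bank pays all of $L_i$ and is solvent (the first test failed, so $e_i+\sum_j p_{ji}\ge L_i$ when $(\alp_i,\bet_i)=(1,1)$), or the budget is tight. To see that the budget is tight, suppose it were slack. Then $\lambda_i^{\gamma_i}$ could be increased, and if $\lambda_i^{\gamma_i}=1$ then $\gamma_i$ would be too small, which is impossible because $\gamma_i$ only decreases when subsidy is needed. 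When the budget is tight, the payments are the priority-proportional payments with $E_i^\deft$, where group $\gamma_i$ is paid proportionally and the earlier groups in full. We must also check consistency of default status: a bank marked as defaulting really satisfies $E_i^\nondeft(p)<L_i$. Here we will need monotonicity in the next step, because once a bank is marked its income can only decrease.

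\textbf{Maximality via an invariant.} Let $q$ be any priority-proportional clearing vector. We prove by induction on $r$ that
\begin{itemize}
\item[(a)] $p^r\ge q$ coordinate-wise;
\item[(b)] every bank marked defaulting at round $r$ defaults under $q$;
\item[(c)] the critical group of $i$ under $q$ has index at most $\gamma_i^r$.
\end{itemize}
For (a), use $q$ (with $z=0$) as a feasible point. By (b) and (c), $q$ satisfies the Prop and $\lambda$ constraints for the current parameters, after noting that entries with $\ell>\gamma_i^r$ are zero in $q$ by (c). Its budget holds because the current $\alp,\bet$ are at least those under which $q$ is computed. So $q$ is feasible for \ref{prob:min_subsidy} and also for \ref{prob:max_payment} when $z^\star=0$. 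For banks with $z_i^\star>0$, we would compare via the iteration itself: starting from $p^0\ge q$, each update step preserves $p^k\ge q$, since $\EE_i$ is monotone in incoming payments. Passing to the limit gives $p^\star\ge q$, using Lemma~\ref{lem:zero} to identify this limit with the LP output.

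Given (a), the update rules preserve (b) and (c). If $\alpha_i^re_i+\sum_j\beta_i^rp^r_{ji}<L_i$, then $q$, which has smaller incoming payments, also leaves $i$ defaulting, giving (b). If $z_i^r>0$, then even with payments $\ge q$ bank $i$ cannot pay anything to group $\gamma_i^r$, so $q$ pays nothing there either, giving (c). Existence follows from termination and correctness, and maximality from (a).

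\textbf{Main obstacle.} The delicate step is the comparison in (a) for banks with positive subsidy, where the LP solution rather than the iteration is used. We will argue through the iteration and Lemma~\ref{lem:sol}/\ref{lem:zero}, which identify the LP output with the monotone limit. A second care point is the $\gamma_i=0$ edge case in the correctness step.
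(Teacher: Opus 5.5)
Your architecture matches the paper's. Termination comes from counting irreversible parameter changes, correctness from checking the terminal LP output, and maximality from a lexicographic induction over rounds $r$ and iteration steps $k$ showing that every round dominates an arbitrary clearing vector $q$. Two steps need repair; the rest is sound.

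\textbf{The feasible-point route to (a) fails.} \ref{prob:min_subsidy} forces $\lambda_i^\ell=1$ for all $\ell<\gamma_i^r$. Invariant (c) only says that $q$'s critical index is at most $\gamma_i^r$. When it is strictly smaller, $q$ pays its critical group only partially and violates that constraint. This already happens at $r=0$, for any bank whose critical group under $q$ precedes $g_i^{k_i}$. Your per-bank split is also incoherent: if $(q,0)$ were feasible, then $Z^*=0$ and no bank would have $z_i^\star>0$. Drop this route and run the iteration comparison for all banks at once, which is exactly the paper's argument.

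The reason the iteration comparison works is not just that $\EE_i$ is monotone in incoming payments. When bank $i$ is updated at step $k+1$, its new total payment is $\alp_i^re_i+z_i^{k+1}+\sum_j\bet_i^rp^k_{ji}$. This is at least $\alpha^q_ie_i+\sum_j\beta^q_iq_{ji}\ge\sum_jq_{ij}$, where $(\alpha^q_i,\beta^q_i)$ are $i$'s default parameters under $q$. The inequality uses (b), $p^k\ge q$, $z_i^{k+1}\ge 0$ and $e_i\ge 0$. A single bank's priority-proportional payment vectors form a chain ordered by their total, so $p^{k+1}_i\ge q_i$ coordinatewise. The base case $k=0$ uses (c).

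\textbf{Cross-round monotonicity is missing.} Your correctness step uses twice that payments only decrease across rounds, i.e.\ $p^r\le p^{r'}$ for $r'<r$:
\begin{itemize}
\item to show that a bank marked defaulting at round $r'$ still has $E_i^\nondeft(p^r)<L_i$;
\item to exclude a slack budget with $\lambda_i^{\gamma_i}=1$ after $\gamma_i$ was lowered.
\end{itemize}
Invariant (a) does not give this, because $p^{r'}$ is not a clearing vector. It does follow from the same iteration comparison with $p^{r+1}$ in place of $q$:
\begin{itemize}
\item If $z_i^{r+1}>0$, then by \specprop\ the vector $p_i^{r+1}$ pays in full exactly the groups before $\gamma_i^{r+1}\le\gamma_i^r$, and every $p_i^{r,k}$ does the same.
\item If $z_i^{r+1}=0$, its total is at most $\alp_i^{r+1}e_i+\sum_j\bet_i^{r+1}p^{r+1}_{ji}$. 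The induction hypothesis bounds this by the round-$r$ income at $p^{r,k}$, again using $e_i\ge 0$.
\end{itemize}
The paper's ``Output'' paragraph also omits this consistency check, so you were right to flag it. You still need to supply this argument rather than defer it to (a).
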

\begin{proof}
\myparagraph{Correctness.} 
 By Lemmas~\ref{lem:sol}-\ref{lem:zero}, we know that we can find the limit points $(p^r,z^r)$ by solving three linear programs \ref{prob:min_subsidy}, \ref{prob:min_payment}, \ref{prob:max_payment}, given fixed values for $(\alp_i^r,\bet_i^r),\gamma_i^r$ for $i\in N$.

    Let $p$ be a priority-proportional clearing for the input market. Set $(\alp_i,\bet_i)=(1,1)$ if $i$ is solvent under~$p$, and $(\alp_i,\bet_i)=(\alpha_i,\beta_i)$ if $i$ defaults. Also, set $\gamma_i$ as the highest priority group such that $\sum_{j\in G_i^{\gamma_i-1}}L_{ij}\le \alp e_i + \sum_{j\in N}\bet_ip_{ji}$, that is, the critical priority group for~$i$, if there exists such a value~$\gamma_i$; otherwise, i.e., if $\alp e_i + \sum_{j\in N}\bet_ip_{ji}<0$, set $\gamma_i = 0$.  

    \myparagraph{Convergence.}
    We claim that for any iteration $r$, we have that $(p,\alp,\bet,\gamma) \le (p^r,\alp^r,\bet^r,\gamma^r)$ coordinate-wise.
 In order to prove this, it suffices to show that $(p,\alp,\bet,\gamma) \le (p^{r,k},\alp^r,\bet^r,\gamma^r)$, where $(p^{r,k},z^{r,k})$ is the $k$-th term in the series converging to the limit point for the given values $\alp^r,\bet^r$, and $\gamma^r$.

    We show this by lexicographic induction on $(r,k)$, where $r$ denotes the iteration step within the iteration for computing the defaulting parameters and critical priority groups, and $k$ is the iteration step in the series for computing the payment--subsidy values for some fixed defaulting parameters and critical priority groups. For $r=0$, $(\alp,\bet,\gamma) \le (\alp^0,\bet^0,\gamma^0)$ is trivial, as each is set to the maximum possible value. Also, $p_{ij}\le p_{ij}^{0,0}$, as $p_{ij}^{0,0}=L_{ij}$ for $i,j\in N$. Assume we know $p_{ij}\le p_{ij}^{0,k}$ and take $p_{ij}^{0,k+1}$. If $p_{ij}^{0,k+1}\ne p_{ij}^{0,k}$, then $\sum_{j\in N}p_{ij}^{0,k+1}=  \alp_i^0e_i+z_i^{0,k+1}+\sum_{j\in N}\bet_i^0p_{ji}^{0,k}\ge \alp_ie_i+\sum_{j\in N}\bet_ip_{ji}\ge \sum_{j\in N}p_{ij}$ using $z_i^{0,k+1}\ge 0$. By priority-proportionality, we get that $p_{ij}\le p_{ij}^{0,k+1}$ for any $i,j\in N$ as desired. Hence, for the limit $p_{ij}^0$, we also have $p_{ij}\le p_{ij}^0$ for $i,j\in N$.

    Now, suppose we know that the claim holds for $r$ and every $k\in \mathbb{N}$, and take $r+1$. If $(\alp_i^{r+1},\bet_i^{r+1})\ne (\alp_i^r,\bet_i^r)$ for some $i\in N$, then we have $(\alp_i^r,\bet_i^r)=(1,1)$, $\gamma_i^r=k_i$ and $\alp_i^re_i+z_i^r+\sum_{j\in N}\bet_i^rp_{ji}^r<L_i$ 
    which implies $\alp_ie_i + \sum_{j\in N}\bet_ip_{ji}\le\alp_i^re_i+\sum_{j\in N}\bet_i^rp_{ji}^r<L_i$ by induction. Hence, $i$ cannot receive enough money under~$p$ to pay all its debts ($L_i$), hence it defaults, meaning $(\alp_i,\bet_i) =( \alpha_i,\beta_i)\le (\alp_i^{r+1},\bet_i^{r+1})$. 

    Similarly, if $\gamma_i^r\ne  \gamma_i^{r+1}$, then  we have $z_i^{r}>0$, $\sum_{j\in \Gamma_i^r}p_{ij}^{r}=0$ (where $\Gamma_i^r=g_i^{\gamma_i^{r}}$), and $\sum_{j\notin G_i^{\gamma_i^r}}p_{ij}^{r}=0$, therefore $\alp_ie_i + \sum_{j\in N}\bet_ip_{ji} -\sum_{j\in G_i^{\gamma_i^r-1}}L_{ij} \le\alp_i^re_i+\sum_{j\in N}(\bet_i^rp_{ji}^r-p_{ij}^{r})= -z_i^{r}<0$, by induction and property~\specprop\  of $(p^r,z^r$). Hence, $\alp e_i + \sum_{j\in N}\bet_ip_{ji}<\sum_{j\in G_i^{\gamma_i^r-1}}L_{ij}$, thus we get $\gamma_i\le \gamma_i^{r+1}=\gamma_i^r-1$. 

    Finally, we show that $p_{ij}\le p_{ij}^{r+1,k}$ for any $k\in \mathbb{N}$. Let $k=0$. Then, $p_{ij}^{r,0}=L_{ij}\ge p_{ij}$, whenever $j\in G_i^{\gamma_i^r}$ and otherwise by induction $0=p_{ij}=p_{ij}^{r+1,0}$. Suppose we know the statement for $(r+1,k)$ and consider $(r+1,k+1)$. 
    If $p_{ij}^{r+1,k+1}\ne p_{ij}^{r+1,k}$, then $\sum_{j\in N}p_{ij}^{r+1,k+1}=  \alp_i^{r+1}e_i+z_i^{r+1,k+1}
    +\sum_{j\in N}\bet_i^{r+1}p_{ji}^{r+1,k}\ge \alp_ie_i+\sum_{j\in N}\bet_ip_{ji}\ge \sum_{j\in N}p_{ij}$ using $z_i^{r+1,k+1}\ge 0$. By priority-proportionality, we get that $p_{ij}\le p_{ij}^{r+1,k+1}$ too for any $i,j\in N$ as desired. Hence, for the limit $p_{ij}^{r+1}$, we also have $p_{ij}\le p_{ij}^{r+1}$ for $i,j\in N$.

    \myparagraph{Output.}
    The algorithm terminates once it reaches a point where $z^r_i>0$ implies $\gamma_i^r=0$ for each bank~$i$, and all banks with $(\alp_i^r,\bet_i^r)=(1,1)$ are able to pay their liabilities fully. If $(\alp_i^r,\bet_i^r)=(\alpha_i,\beta_i)$, indicating that $i$ defaults, and  $E_i^\deft (p^r)= \alp_i^r e_i + \sum_{j\in N}\bet_i^rp_{ji}^r<0$, then $z_i^r>0$ and $p_{ij}^r=0$ for all $j\in N$, as required by priority-proportionality for this special case.
    Also, when $\alp_i^r e_i + \sum_{j\in N}\bet_i^rp_{ji}^r\ge 0$, then $z_i^r=0$ (if $z_i^r>0$, then $\sum_{j\in N}p_{ij}^r=0$ by \ref{prob:max_payment}, so $z_i^r$ could be decreased while still remaining a solution to \ref{prob:min_subsidy}, a contradiction), so the payments are priority-proportional by the constraints of \ref{prob:max_payment} and the constraints~(Prop) in \ref{prob:min_subsidy}. As we have shown above, the output $p^r$ from the limit point $(p^r,z^r)$ is a coordinate-wise upper bound on $p$ for any priority-proportional clearing vector $p$, so $p^r$ must be a coordinate-wise maximal priority-proportional clearing vector.

    
    \myparagraph{Running time.} In every iteration, either a bank's parameters $(\alp_i^r,\bet_i^r)$ change from solvent $(1,1)$ to default $(\alpha_i,\beta_i)$, or a priority layer is changed ($\gamma_i^r$ decreases). Since default parameters only switch at most once and the number of priority layers is finite, the algorithm terminates. Specifically, it terminates in at most $O(|N|\sum_{i \in N} k_i)$ iterations. Since each step involves solving polynomial-size LPs, the total running time is polynomial.
\end{proof}

For the case when the market uses proportional clearing, our algorithm can be substantially simplified; see Appendix~\ref{app:simplify} for details.

\section{Computational Intractability of Finding an Optimal Compression}
\label{sec:np-hardness}

Let us now present our results regarding the intractability of finding an optimal compression, i.e., one that saves as many banks as possible from defaulting.
The following results are proved via polynomial-time reductions form the classic $\NP$-hard problem \maxsat. 

\begin{restatable}[\linkproof{app:prf-thmnphfindcomp}]{theorem}{thmnphfindcomp}
\label{thm:nph:findcomp}    
Each of the following problems is $\NP$-hard even with no default costs and nonnegative endowments:
\begin{itemize}
    \item[(a)] \FindComp;
    \item[(b)] \FindCompPairs;
    \item[(c)] \FindCompCycle.
\end{itemize}
     
\end{restatable}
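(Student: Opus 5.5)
I plan to reduce from \maxsat, as announced: given a 2-CNF formula $\varphi$ over variables $x_1,\dots,x_n$ with clauses $c_1,\dots,c_m$ and a threshold $t$, decide whether some assignment satisfies at least $t$ clauses. I will build one market that works for all three variants (a)--(c), so that every relevant compression is simultaneously bilateral and a cycle compression. The design choice is to make all cycles in the liability graph 2-cycles between disjoint pairs of banks. Then any compression is a union of disjoint bilateral reductions $C_{ij}=C_{ji}$, so it is automatically bilateral and a cycle compression, and one hardness proof covers (a), (b) and (c).

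For each variable $x_i$ I would introduce a \emph{variable gadget}: banks $u_i,v_i$ with mutual liabilities $L_{u_iv_i}=L_{v_iu_i}=1$. On top of this I attach outgoing liabilities and endowments so that the gadget's state is essentially binary. Compressing fully ($\varepsilon=1$) encodes $x_i=$ true, and not compressing encodes false. Intermediate values of $\varepsilon$ should only be dominated options. The mechanism is that compressing changes $u_i$'s and $v_i$'s gross obligations, and under proportional clearing this changes what fraction of their funds flows out along other arcs. For instance, $u_i$ has endowment and liabilities to $v_i$ and to a ``positive literal'' bank $\ell_i$. Removing the 2-cycle raises the share $u_i$ passes to $\ell_i$, while $v_i$ passes more to $\bar\ell_i$ when the cycle stays. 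So exactly one of $\ell_i,\bar\ell_i$ receives its ``high'' amount, depending on the choice. Each clause $c_j=(a\vee b)$ gets a clause bank $w_j$ with endowment slightly below its liabilities, and it is paid by the literal banks of $a$ and $b$. I calibrate the numbers so that $w_j$ survives iff at least one incoming literal is high, with no default costs and nonnegative endowments throughout.

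Next I add heavy weighting so that variable-gadget banks are ``forced''. Either their default status is identical in both states, or, better, I attach to each gadget a pair of many ($K>m$) dependent copies. Exactly $K$ of these default in either state, one group for true and another for false, so the count contributed per variable is constant. With the literal banks themselves arranged to default or not independently of the state, or also symmetrized, the number of defaulting banks equals a constant $D_0$ plus the number of unsatisfied clauses. Setting $k=D_0+(m-t)$ gives the reduction. For soundness I must show that any compression (possibly fractional $\varepsilon_i$) yields at least as many defaults as some integral one. I would argue monotonicity: the money a literal bank forwards is monotone in $\varepsilon_i$, and a clause bank saved under fractional $\varepsilon_i$ would also be saved at one of the endpoints. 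Choosing gadgets where intermediate $\varepsilon_i$ leaves both literals ``low'' makes this clean.

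The main obstacle is this last calibration: the literal banks must forward enough in the good state and too little in the bad or intermediate states. This has to hold while the proportional clearing vector, which the problem lets us pick as any clearing vector (with no default costs it is essentially unique), behaves predictably, and while every other bank's default status is independent of the assignment. I would first try to design the gadget as a small explicit market, such as the one in Example~\ref{ex}, where one bank benefits only at $\varepsilon=2$ and another only at $\varepsilon=0$. That example exhibits exactly the non-monotone ``either-or'' behavior needed, and I would verify the intermediate range by direct computation.
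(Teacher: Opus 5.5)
There is a genuine gap. The proposal never fixes a concrete gadget, and the calibration you call the main obstacle is essentially the whole reduction. More seriously, the soundness argument you outline fails. Monotonicity in $\varepsilon_i$ is a \emph{per-clause} statement: it shows that each clause bank saved under a fractional $\varepsilon_i$ is also saved at \emph{some} endpoint. But a clause containing $x_i$ and one containing $\bar x_i$ may need different endpoints, so no single assignment need save everything a fractional compression saves.

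Your own 2-cycle gadget exhibits this. Suppose $u_i,v_i$ default in both states. With $\alpha=\beta=1$ no money is lost inside the gadget, so as $\varepsilon_i$ runs from $0$ to $1$ the amounts reaching $\ell_i$ and $\bar\ell_i$ move continuously in opposite directions with constant sum. At every intermediate value both literals are strictly above their low values, so ``both low'' is impossible. A clause bank is a linear threshold on the \emph{sum} of its incoming payments, not an OR of high signals, so two half-high literals can replace one high and one low literal. For instance, take a symmetric calibration and the formula $(x\vee y)\wedge(\bar x\vee\bar y)\wedge(x\vee\bar y)\wedge(\bar x\vee y)$. For both variables, choose the $\varepsilon$ at which $\ell_i$ and $\bar\ell_i$ receive equal amounts. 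A literal bank forwards a concave function of its income (total $\min\{e+m,L\}$, split in fixed proportions), so every clause bank receives at least what a clause with exactly one true literal receives. Hence all four clause banks survive, whereas every assignment satisfies only three clauses. Your $K$ ``dependent copies'', as described (``exactly $K$ default in either state''), only control the two endpoint states and do not rule this out. Note also that the problem is existential over clearing vectors, so the no-direction must hold for \emph{every} clearing vector of $M-C$, not an ``essentially unique'' one.

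The missing idea is to let the default count itself force a binary outcome, so that compressions never need to be rounded. In the paper, bank $i$ has endowment $3$, 2-cycles of liability $16$ with $T_i^0$ and with $F_i^0$, and a liability $7$ to a sink $i'$. Bank $T_i^0$ (resp.\ $F_i^0$) owes $2^{100}$ to the head of a chain $T_i^1\to\cdots\to T_i^Q$ (resp.\ $F_i^1\to\cdots\to F_i^Q$) of liabilities $2$, with $Q$ exceeding the number of non-chain banks. Using only $p_{ji}\le L_{ji}$ and proportionality, under \emph{every} compression and every clearing vector the heads $T_i^1,F_i^1$ jointly receive less than $2.5$. The target count therefore forces, for each $i$, exactly one chain to be entirely solvent (its head receives at least $2$), and then the other head, hence its chain end, receives less than $0.5$. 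With the version of \maxsat\ in which each literal occurs at most twice, a clause bank of liability $1$ fed only by ``false'' chain ends gets less than $1$ and defaults. Your copies could play the role of these chains, but only if built so that the two groups can never both survive under any compression, and so that the survival of one pushes the opposite literal below half a clause bank's requirement.

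Finally, for (c), your vertex-disjoint 2-cycles do meet the formal definition of a cycle compression. The paper, however, gives a second construction that threads the gadgets onto one long cycle $1\to T_1^0\to 2\to\cdots\to n\to T_n^0\to 1$ (using $F_i^0$ in place of $T_i^0$ as the assignment dictates). This yields hardness even for a single compression cycle, which your design cannot give.
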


Interestingly, deciding whether it is possible to find a compression that enables us to save \emph{all but three} banks from defaulting is computationally hard. 
In fact, even deciding whether a single, fixed bank can be saved from bankruptcy is $\NP$-hard.
These strong results are proved via a reduction from the well-known \partition\ problem.

\begin{restatable}[\linkproof{app:prf-thmnphallbutthree}]{theorem}{thmnphallbutthree}
\label{thm:nph:all-but-three}
    \FindComp\ is $\NP$-hard, even if $k=|N|-3$. Furthermore, \FindCompBank\ is $\NP$-hard. Both hardness results hold without default costs and nonnegative initial endowments.
\end{restatable}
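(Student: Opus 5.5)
We will reduce from \partition: given positive integers $a_1,\dots,a_n$ with $\sum_i a_i = 2A$, decide whether some subset sums to exactly~$A$. The core idea is to build a market with a designated bank~$b$ whose survival requires an amount of incoming money that can be achieved only by a compression carrying value exactly~$A$ through a ``selection'' gadget. If the compression carries less, $b$ is short of money. If it carries more, a second bank is short.

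The construction has a source bank~$s$ with a cycle structure: for each item~$i$ a bank~$x_i$ with $L_{s x_i}=a_i$ and $L_{x_i t}=a_i$, into a hub bank~$t$, and a return arc $L_{ts}$ large enough to close cycles. Compressing a cycle $s\to x_i\to t\to s$ by some amount removes part of these liabilities. To force the cycle compressions to be all-or-nothing, we want a fractional choice $0<C<a_i$ to cause $x_i$ to default. We achieve this by giving each $x_i$ a small endowment and an extra creditor, tuned so that $x_i$ stays solvent exactly when its incoming and outgoing amounts along the gadget match. This tuning is the delicate part.

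An alternative that may be cleaner is to make the integrality irrelevant and instead encode \partition\ through proportional-clearing nonlinearity. Let $s$ be a defaulting bank that pays proportionally to creditors $x_1,\dots,x_n$ and to~$b$. Compressing arcs $s\to x_i$ (via cycles $s\to x_i\to s$, using the bilateral liabilities $L_{x_i s}$) changes the proportions, and hence what $b$ receives.

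Concretely, the plan is:
\begin{enumerate}
\item Design the market and fix explicit endowments as functions of the $a_i$ and $A$.
\item Prove the \emph{completeness direction}: from a partition~$S$, compress the cycles indexed by~$S$, compute the maximal proportional clearing vector explicitly, and check that $b$ (and all banks except three fixed, unavoidably defaulting ones) is solvent.
\item Prove the \emph{soundness direction}: any compression under which $b$ is solvent (respectively, at most $|N|-3$ banks default) yields a partition.
\end{enumerate}
For the ``all but three'' version, the three unavoidably defaulting banks (for example $s$ and two sink-side banks) will have negative net worth under every compression. Since compression preserves net positions, their default is certain, and the count $k=|N|-3$ then simply says that every other bank is saved.

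The main obstacle is step~3. Compressions are arbitrary real flows, so we must rule out fractional compressions that achieve $b$'s threshold without encoding a subset. We would first try to express $b$'s income as a function of the compressed vector~$c$ and show that $b$'s solvency forces $\sum_i c_i \ge A$. We would then show that some other designated bank's solvency forces $\sum_i c_i\le A$, together with a per-gadget condition that $c_i\in\{0,a_i\}$. For the latter we exploit the discontinuity of default: each $x_i$ has endowment $0$ and liabilities arranged so that it is solvent only if either nothing or everything on its cycle is compressed. Intermediate values make its total income strictly smaller than its total liabilities by a quantity linear in $\min(c_i,a_i-c_i)$, so $x_i$ defaults.

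If making both bounds hold simultaneously proves awkward, we would instead use a fixed point of a two-bank loop in which $b$'s solvency is an equality constraint, which forces equality directly.
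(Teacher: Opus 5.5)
\textbf{There is a genuine gap.} The proposal never supplies the idea that turns a real-valued compression into a discrete subset, and the mechanisms you sketch do not work under the theorem's hypotheses.

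\emph{Forcing all-or-nothing compressions.} The theorem requires no default costs ($\alpha=\beta=1$), so there is no ``discontinuity of default'' in payments to exploit. Moreover, hold the other payments fixed and compress a cycle through $x_i$ by $c_i$: this lowers $x_i$'s liabilities by $c_i$ and its income by at most $c_i$, so its margin can only improve. A margin that is nonnegative at $c_i\in\{0,a_i\}$ but negative strictly in between would need a non-monotone indirect effect, and you exhibit none.

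\emph{Your fallback gadget.} Bilateral cycles $s\to x_i\to s$ feeding a proportionally paying $s$ are no better. When the $x_i$ pay in full, what $b$ receives depends on the $c_i$ only through $\sum_i c_i$. Any value of that sum is attainable fractionally, so no combinatorics survives.

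\emph{Soundness for \FindCompBank.} A yes-instance only guarantees that $b$ is solvent. Yet your soundness step takes the bound $\sum_i c_i\le A$ and the integrality $c_i\in\{0,a_i\}$ from the solvency of \emph{other} banks, which may well default.

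\emph{The threshold.} $k=|N|-3$ means at most $|N|-3$ defaults, i.e.\ \emph{at least three} solvent banks, not ``every other bank is saved.'' A design with three forced defaulters and all others savable addresses $k=3$. That is a different claim, and the paper leaves even $k=2$ open. For $k=|N|-3$ you need the opposite: without a partition, at most two banks can be solvent under any compression.

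\textbf{How the paper avoids these problems.} It never forces all-or-nothing compressions.
\begin{itemize}
\item For each item it builds two cycles, $x_i\to x_i^*\to\hat{x}_i\to x_i$ and $y_i\to x_i^*\to\hat{x}_i\to y_i$, sharing the arc $(x_i^*,\hat{x}_i)$ of capacity $R$.
\item $x_i^*$ also owes a huge, uncompressible amount to the sink $b^*$, so $\hat{x}_i$ receives almost nothing.
\item Flow conservation at $x_i^*$ leaves one of $x_i,y_i$ with liability at least $R/2$ toward $x_i^*$. By the choice of $R$, this dilutes that bank's proportional payment to $b_S$ (resp.\ $b_{-S}$) to a negligible amount, whatever the (possibly fractional) compression.
\item Exactness comes from the solvency of the single bank $b'$. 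It lies on no cycle, has no endowment, and owes $2A$ to $b^*$, where $A=\frac12\sum_i a_i$. Its only debtors $b_S,b_{-S}$ owe it $A$ each, so $b'$ is solvent only if both receive at least $A$. These are two lower bounds on complementary sides of a total of about $2A$, and integrality of the $a_i$ turns them into an exact partition.
\item For $k=|N|-3$: the $4n$ gadget banks default under every compression and $b^*$ never does. So three solvent banks force two of $b_S,b_{-S},b'$ to be solvent, which forces $b'$ to be solvent.
\end{itemize}
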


\section{Finding a Compression to Save One or All but One Banks}
\label{sec:single-default}

    First, we show that we can always prevent the situation that \emph{all} banks default: a greedy strategy for finding compression cycles is already sufficient to guarantee that at least one bank does not default. 
    \begin{proposition}
        \label{prop:greedy-saves-one-bank}
        When endowments are nonnegative, there is a polynomial-time that given a financial market~$M$ with $n$ banks finds a compression~$C$ for~$M$ such that $I-C$ admits a proportional clearing vector~$p$ with $|\dft^{M-C}(p)|\leq n-1$. 
    \end{proposition}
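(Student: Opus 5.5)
The plan is to compress away all liability cycles, so that the liability graph of the compressed market becomes acyclic. An acyclic graph always contains a bank with no outgoing liabilities, and such a bank can never default.

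\textbf{Step 1: greedy cycle elimination.} Start from the residual liabilities $L^0=L$ and the compression $C^0=0$. While the directed graph $G^t=\{(i,j): L^t_{ij}>0\}$ contains a cycle, do the following:
\begin{itemize}
\item find a cycle $i_0,\dots,i_{\ell-1}$ in $G^t$ by depth-first search;
\item let $\varepsilon_t=\min_j L^t_{i_j,i_{j+1 \bmod \ell}}$;
\item add $\varepsilon_t$ to $C$ on every arc of the cycle and subtract it from $L^t$ on those arcs.
\end{itemize}
Each round sets the residual value of at least one arc to zero, and no arc ever becomes positive again. Hence there are at most $|N|^2$ rounds, each taking polynomial time.

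\textbf{Step 2: the result is a valid compression.} The final $C$ is a sum of cycle flows, so it satisfies the flow condition $\sum_j C_{ji}=\sum_j C_{ij}$ at every bank. In each round we subtract at most the current residual value, so $0\le C_{ij}\le L_{ij}$ throughout.

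\textbf{Step 3: a bank that cannot default.} The graph underlying $M-C$ is acyclic. A finite acyclic digraph on $n\ge 1$ vertices has a sink $s$, so $L'_s=\sum_j L'_{sj}=0$. By Theorem~\ref{thm:alg-pripropclear}, the market $M-C$ admits a proportional clearing vector $p$; proportional clearing is the special case of priority-proportional clearing with a single priority group, and $p$ can be computed in polynomial time. All payments in $p$ are nonnegative and $e_s\ge 0$, so
\[
E_s^\nondeft(p)=e_s+\sum_j p_{js}\ \ge\ 0 = L'_s .
\]
Therefore $s\notin \dft^{M-C}(p)$, which gives $|\dft^{M-C}(p)|\le n-1$.

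This is the only place where nonnegativity of endowments is used, and it is essential: with $e_s<0$ the sink could default. The argument has no real obstacle. The one point needing care is that a clearing vector of $M-C$ exists and can be computed efficiently, and that is exactly what Theorem~\ref{thm:alg-pripropclear} provides.
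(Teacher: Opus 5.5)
Your proof is correct and follows the paper's argument: greedily compress each cycle by its bottleneck value until the liability graph of $M-C$ is acyclic, then observe that a sink bank with nonnegative endowment cannot default. You also take the minimum residual liability along the cycle, which is what the argument needs (the paper's ``maximum'' is a slip), and you make explicit via Theorem~\ref{thm:alg-pripropclear} that a clearing vector for $M-C$ exists, which the paper leaves implicit.
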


    \begin{proof}
        We create the desired compression~$C$ in a greedy manner. We start by setting~$C$ to be zero on all arcs of the digraph~$D=(N,A)$ underlying~$M$. Then, we repeat the following step: we find an arbitrary cycle~$K$ in~$D$, determine the maximum current liability along~$K$, say~$\varepsilon_K$, and compress along $K$ with value~$\varepsilon_K$. Formally, this means that we reduce each liability along~$K$ by~$\varepsilon_K$, increase the value of~$C$ along these arcs by~$\varepsilon_K$, 
        and update the graph~$D$ accordingly; note that this update will result in a digraph with strictly fewer arcs than~$D$. 
        
        Repeating this step as long as possible, 
        we end up with a compression~$C$ for which the graph~$D'$ underlying $M-C$ is acyclic. Then, $D'$ contains at least one sink, i.e., a bank~$b$ with no outgoing arc in~$D'$.
        Since $e_b\ge 0$, it will not default under any clearing vector for~$M-C$. 

        Since finding a cycle in a digraph can be done in linear time, and the number of iterations is at most~$|A|$, the running time of this algorithm is clearly polynomial in the size of~$M$.
    \end{proof}

    In the rest of this section, we deal with the problem that can be viewed as the \emph{dual} of the problem solved by Proposition~\ref{prop:greedy-saves-one-bank}: given a financial market, can we find a compression that enables us to save \emph{all but one} banks from defaulting? 
    As we will see, solving this problem requires advanced techniques and quite technical arguments. 
    To ease the presentation and understanding of our algorithm, we first explain it for the simpler model that uses proportional clearing in Section~\ref{sec:algo:save-all-but-one:proportional}, and 
    then present its extension for the model using priority-proportional clearing in Section~\ref{sec:algo:save-all-but-one:priority-proportional}.    

\subsection{Saving all but one bank under proportional clearing}    
    \label{sec:algo:save-all-but-one:proportional}

    This section is dedicated to proving the following result.
    \begin{theorem}
    \label{thm:allbutone}
        Given a financial market~$M$, one can decide in polynomial time 
        whether $M$ admits a compression~$C$ such that $M-C$ admits a proportional clearing vector~$p$ with $|\dft^{M-C}(p)|\leq 1$.
    \end{theorem}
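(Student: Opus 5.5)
We proceed as the introduction announces: first check whether some compression makes every bank solvent, and then, for each $b\in N$, check whether some compression leaves at most $b$ in default. Each test becomes a flow feasibility problem, so the total is $|N|+1$ polynomial-time checks.

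\emph{No defaulting bank.} The all-$L'$ payment vector is a clearing vector for $M-C$ if and only if every bank is solvent when everyone pays in full, i.e., $e_i+\sum_j L'_{ji}\ge \sum_j L'_{ij}$ for all $i$. A compression preserves every bank's net position: $\sum_j L'_{ji}-\sum_j L'_{ij}=\sum_j L_{ji}-\sum_j L_{ij}$ because $C$ is a circulation. Hence this case does not depend on $C$ at all. It holds if and only if $e_i+\sum_j L_{ji}-\sum_j L_{ij}\ge 0$ for all $i$, which is checked directly.

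\emph{Exactly bank $b$ defaults.} Suppose all $i\neq b$ are solvent and $b$ defaults. Every $i\ne b$ pays $L'_{ij}$ in full, and $b$ pays $p_{bj}=\frac{L'_{bj}}{L'_b}E_b$, where
\[
E_b=\alpha_b e_b+\beta_b\sum_{j\ne b}L'_{jb}.
\]
Write $x_j=p_{bj}\le L'_{bj}$, with total outflow $x=\sum_j x_j$.

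Solvency of $i\neq b$ reads $e_i+\sum_{j\ne b}L'_{ji}+x_i\ge L'_i$. Using net-position invariance, this becomes $e_i+\mathrm{net}_i-(L'_{bi}-x_i)\ge 0$, where $\mathrm{net}_i=\sum_jL_{ji}-\sum_jL_{ij}$ is fixed. So each creditor $i$ of $b$ can absorb a shortfall of at most $s_i:=e_i+\mathrm{net}_i$ from $b$. Default of $b$ means $e_b+\sum_j L'_{jb}<L'_b$, i.e.\ $e_b+\mathrm{net}_b<0$, again independent of $C$. If $e_b+\mathrm{net}_b\ge 0$, then $b$ cannot be the unique defaulter (that case is already covered by the first test), so we may assume $e_b+\mathrm{net}_b<0$.

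With $\lambda=E_b/L'_b$, the conditions are $L'_{bi}(1-\lambda)\le s_i$ for every $i\ne b$, together with $\lambda=E_b/L'_b$ (clipped at $0$). Since $L'_{bi}=L_{bi}-C_{bi}$, the unknowns are the residual out-liabilities of $b$ and its residual in-liabilities $I=\sum_j L'_{jb}$. These are tied together by $L'_b-I=-\mathrm{net}_b$ and by realizability as $L-C$ for a circulation $C$.

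I would parametrize by the scalar $t=L'_b$, so that $I=t+\mathrm{net}_b$. Then $\lambda(t)=(\alpha_be_b+\beta_b(t+\mathrm{net}_b))/t$ is explicit, and the constraints become linear in $C$:
\[
C_{bi}\ge L_{bi}-\frac{s_i}{1-\lambda(t)}, \qquad \sum_i C_{bi}=L_b-t, \qquad C \text{ a circulation with } 0\le C\le L.
\]
For fixed $t$ this is a circulation problem with lower and upper bounds, decidable by a max-flow computation.

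The main obstacle is that $t$ ranges over a continuum. I would analyze $\phi(t)=1/(1-\lambda(t))$, which is monotone in $t$ since $\lambda$ is a Möbius function of $t$. The feasible set of pairs $(C,t)$ is then non-convex only through the single scalar $\phi$. I would handle this as follows. The set of achievable values of $\sum_i C_{bi}$ under per-arc lower bounds $\ell_i$ forms an interval, computable by flow arguments, whose endpoints are piecewise linear in the bounds. This makes feasibility a one-dimensional condition on $t$ with $O(|N|)$ breakpoints, which can be checked on each piece by solving a linear inequality in $t$.

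An alternative is to substitute $u=1-\lambda$ and multiply through. This gives $u\,C_{bi}\ge uL_{bi}-s_i$, which is bilinear. Fixing which creditors have active bounds, however, turns it into an LP in the variables $(C\,u, u)$ via the substitution $y=uC$ (a Charnes--Cooper-style transform), since the circulation constraints are homogeneous. I would pursue this homogenization first, because it yields a single LP per bank $b$.

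Finally, the clipping case $E_b\le 0$ gives $\lambda=0$, so the condition is simply $L'_{bi}\le s_i$, which is a plain flow problem.
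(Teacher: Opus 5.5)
Your setup is correct and matches the paper's. With $t=L'_b=L_b-|f|$, your $\lambda(t)$ is the paper's $\lambda_{|f|}$, your $s_i$ is $L_{bi}-q_i$ up to the truncation at $0$ in $q_i$, and for fixed $t$ you get a flow problem with lower bounds on the arcs leaving $b$. Working with net positions is even slightly cleaner: it shows directly that the all-solvent test and the default status of $b$ do not depend on $C$, and you never need $G[\widetilde N]$ to be acyclic.

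The gap is in the one step that carries the difficulty, deciding whether some $t$ in a continuum works.

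\emph{Your preferred homogenization does not give an LP.} You never impose the constraint linking $u$ to the compression, namely $u\,t=(1-\beta_b)t-\kappa$ with $t=L_b-\sum_i C_{bi}$ and $\kappa=\alpha_b e_b+\beta_b\,\mathrm{net}_b$ (the paper's $A_b-\beta_bL_b$). Without this constraint $u$ is free, and the ``LP'' is a vacuous relaxation. With it, the substitution $y=uC$ turns the constraint into $(1-\beta_b-u)\bigl(uL_b-\sum_i y_{bi}\bigr)=\kappa u$, which contains $u^2$ and $u\sum_i y_{bi}$. It is linear only for $\beta_b=1$, which is exactly the case where the original constraints $((1-\beta_b)t-\kappa)L'_{bi}\le s_i t$ are already linear in $C$.

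\emph{Your first route is the right kind of idea but is unproven.} The bound of $O(|N|)$ breakpoints does not follow from piecewise linearity of the interval endpoints in the bound vector. The lower endpoint is the minimum flow value under $\ell_i(t)=\max\{0,L_{bi}-s_i\phi(t)\}$. It is a maximum over exponentially many cuts and can a priori change slope many times along $t\mapsto\ell(t)$. You need an extra argument, e.g.\ that all $\ell_i$ move in the same direction (as $s_i\ge0$), so the optimal cuts are nested as in Gallo--Grigoriadis--Tarjan parametric flow. Also, on each piece the condition is $a+c\,\phi(t)\le L_b-t$ with $\phi(t)=t/((1-\beta_b)t-\kappa)$. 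This is a quadratic, not linear, inequality in $t$, and feasible $t$ can have irrational endpoints, so it must be decided exactly.

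For comparison, the paper settles this step in Claim~\ref{clm:flowsize} by splitting on the sign of $\kappa$.

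If $\kappa\ge0$, then $\phi$ increases with $|f|$. Augmenting a valid flow to a maximum flow never decreases any $f_{bi}$, so it suffices to test the single size equal to the maximum flow value, with fixed lower bounds. This disposes of that case far more simply than a breakpoint analysis.

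If $\kappa<0$, the paper takes a minimum-size valid flow, asserts that every solvency constraint is tight, and sums them to get a closed-form size. That step overlooks creditors with $f_{bi}=0$ and slack, whose flow cannot be decreased, and the claim actually fails. Take the following instance, with no other liabilities:
\begin{itemize}
\item $e_b=e_z=0$, $\beta_b=\tfrac12$;
\item $L_{bi_1}=1$, $L_{bi_2}=L_{i_2b}=10$, $L_{i_1z}=1$;
\item $e_{i_1}=0.7$, $e_{i_2}=5$.
\end{itemize}
Compressing the $2$-cycle through $b,i_2$ by $y$ yields a clearing vector in which only $b$ defaults iff $y\in[6-\sqrt{26},8.5]$. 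The two sizes the paper tests are $0.6$ and $10$, and both are infeasible.

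So for $\kappa<0$ a genuine one-dimensional search over $t$, like the one you sketch, really is needed. There $\phi$ is concave in $t$, so the feasible region is convex and the feasible $t$ form an interval, but the search still has to be carried out rigorously.
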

    
    \begin{proof}
    Let $D=(N,A)$ denote the underlying digraph of~$M$ where $(i,j) \in A$ if and only if $L_{ij}>0$. For a set~$F$ of arcs in~$D$, let $\chi^F \in \{0,1\}^{N \times N}$ denote the characteristic vector of~$F$, i.e., $\chi^F_{ij}=1$ if and only if $(i,j) \in F$ and $\chi^F_{ij}=0$ otherwise.
    The following observation is  straightforward to verify. \begin{observation}
    \label{obs:remove-cycles}
        Let $C$ be a cycle in~$D$ such that there is a clearing vector~$p$ for~$M$ for which no bank appearing on~$C$ defaults under~$p$, and let $\varepsilon=\min_{(i,j) \in C}L_{ij}$. 
        Then $p'=p-\varepsilon \cdot \chi^C$ is a clearing vector for~$M-C$ with $\dft^M(p)=\dft^{M-C}(p')$. 
        Conversely, if~$q$ is a clearing vector for~$M-C$ such that no bank on~$C$ defaults under~$q$, then $q'=q+\varepsilon \cdot \chi^C$ is a clearing vector for~$M$ with $\dft^M(q')=\dft^{M-C}(q)$.
    \end{observation}

    Observation~\ref{obs:remove-cycles} implies  that there is a compression where all banks remain solvent if and only if all banks remain solvent under a clearing vector for the original market~$M$. As we can efficiently decide whether this is the case, from now on we assume that $b$ defaults after any compression.

\medskip
    We start with guessing the single bank~$b$ for which some compression~$C$ yields a clearing vector~$p$ for~$M-C$ with $\dft^{M-C}(p) \subseteq \{b\}$. 
    We set $\widetilde{N}=N \setminus \{b\}$, and we call a compression~$C$ \emph{suitable} if $M-C$ admits a clearing vector~$p$ with $\dft^{M-C}(p) \subseteq \{b\}$.

    Due to Observation~\ref{obs:remove-cycles}, we may assume w.l.o.g.\ that the graph~$D$ underlying our financial market $M=(N,L,e,\alpha,\beta)$ is such that $G[\widetilde{N}]$ is acyclic: 
    Otherwise, we can repeat the following step as long as possible: if $C$ is a compression cycle containing only banks in~$\widetilde{N}$, then we replace~$M$ with $M-C$. By Observation~\ref{obs:remove-cycles}, there is a clearing vector~$p$ for~$M$ under which no bank in~$\widetilde{N}$ defaults if and only if there is a clearing vector for~$M-C$ under which no bank in~$\widetilde{N}$ defaults.
    Hence, $M$ admits a suitable compression if and only if $M-C$ admits one.

    \smallskip
    We proceed by computing for each $i \in \widetilde{N}$ the value $q_i=\max\{0,\sum_{j \in N} L_{ij}-e(i)-\sum_{j \in \widetilde{N}}L_{ji}\}$ which can be thought of as the payment that $i$ needs to receive from~$b$ in order to remain solvent (in the market without  compression). Note that if $q_i > L_{bi}$, then $i$ defaults even if $b$ pays all its liabilities towards~$i$, contradicting our assumption on~$b$. Hence, we may assume that $q_i\leq L_{bi}$ for each~$i \in \widetilde{N}$.
    
    We construct a flow network~$D'$ from~$D$ as follows: we first 
    set the capacity of each arc~$(i,j)$ as~$L_{ij}$. Then we split $b$ into~$b^-$ and~$b^+$ in the usual way, i.e., by replacing each arc $(b,i)$ with $(b^+,i)$ and each arc~$(j,b)$ with~$(j,b^-)$. We set~$b^+$ as the source and $b^-$ as the sink of the network. Notice that each compression for~$M$ corresponds to a flow in~$D'$ and vice versa; we write $M^f$ for the financial market obtained by applying the compression determined by some flow~$f$ in~$D'$.
    
    Let $f$ be a flow in~$D'$, let $|f|$ denote its \emph{size} (i.e., the total flow value leaving~$b^+$),  and let $p$ be a proportional clearing vector for~$M^f$. 
    The sum of liabilities in~$M^f$ entering~$b$ is  $\sum_{i \in \widetilde{N}}L_{ib}-|f|$,
    and since all banks other than~$b$ remain solvent, this equals the total incoming payment of~$b$ in under~$p$. Thus, the total payment leaving~$b$ under~$p$ is
    \[
    p_b=\alpha_b e(b)+\beta_b(\sum_{i \in \widetilde{N}}L_{ib}-|f|)=A_b-\beta_b |f|\]  
    where we let $A_b=\alpha_b e(b)+\beta_b \sum_{i \in \widetilde{N}}L_{ib}$. 
    Hence, for each $i \in \widetilde{N}$ we get that $i$ is solvent in under the proportional clearing vector for $M^f$ if and only if 
        \begin{equation}
        \label{eq:i-remains-solvent}
    \frac{L_{bi}-f_{bi}}{L_b -|f|}
    \cdot (A_b-\beta_b |f|) \geq q_i -f_{bi}.  
        \end{equation}

    If $\beta_b=1$, then the constraints~(\ref{eq:i-remains-solvent}) are, in fact, linear constraints on the variables $\{f_{bi}\}_{i \in \widetilde{N}}$. Hence, adding these constraints to the standard LP  describing a flow in~$D'$ and solving the obtained LP, we can find in polynomial time a flow~$f$ that satisfies~(\ref{eq:i-remains-solvent}) for each $i \in \widetilde{N}$ whenever such a flow exists.

\smallskip
    Hence, from now on we assume $\beta_b<1$.
    Let us define 
    \[\lambda_{|f|}=\frac{A_b-\beta_b |f|}{L_b-|f|}=\frac{A_b-\beta_b L_b}{L_b-|f|}+\beta_b,\]
    then condition (\ref{eq:i-remains-solvent}) can be rephrased as
    \begin{equation}
    \label{eq:lower-bound-convex_combination}
    \phi_i(f):=(1-\lambda_{|f|}) \cdot f_{bi} + \lambda_{|f|} \cdot L_{bi}
    \geq q_i .        
    \end{equation}

    Note that $|f| \leq L_b$ due to the definition of the network~$D'$. If $|f|=L_b$, then $b$ remains solvent after applying the compression corresponding to~$f$, which yields that all banks remain solvent, a contradiction to our assumption. Moreover, 
    if $p_b=A_b-\beta_b|f|\geq L_b-|f|$, then the total payment leaving~$b$ under~$p$ is at least the total liabilities of~$b$ in~$M^f$, i.e., $b$ remains solvent; again a contradiction. Therefore, we may assume that   $0 \leq p_b<L_b-|f|$ and so $0 \leq \lambda_{|f|} <1$. 

\begin{claim}
\label{clm:flowsize}
    Suppose $\beta_b<1$. If there exists a flow that satisfies~(\ref{eq:lower-bound-convex_combination}) 
    for each $i \in \widetilde{N}$, then there exists such a flow~$f$ that, in addition, either is a maximum-size flow in~$D'$, or has size $\frac{(\sum_{i \in \widetilde{N}}q_i)-A_b}{1-\beta_b}$.
\end{claim}

\begin{claimproof} 
    We distinguish between two cases.
    
    \myparagraph{Case A:}  $A_b-\beta_b L_b \geq 0$. In this case, $\odv{\lambda_{|f|}}{|f|}=\frac{A_b-\beta_b L_b}{(L_b-|f|)^2}\geq 0$, so $\lambda_{|f|}$ weakly increases if $|f|$ increases.
    
    Let $f^{\max}$ be a maximum flow in~$D'$. Note that we may assume that $f^{\max}_{bi}\geq f_{bi}$ for each $i \in \widetilde{N}$, as a maximum flow obtained from~$f$ using only shortest augmenting paths never decreases the flow value on any arc leaving~$b^+$.
    
    Observe further that $f_{bi} \leq L_{bi}$ by our construction of the network~$D'$, 
    and 
    $\phi_i(f)$  as defined by (\ref{eq:lower-bound-as-convex_combination})
    is the convex combination of $f_{bi}$ and~$L_{bi}$ with coefficients $1-\lambda_{|f|}$ and $\lambda_{|f|}$, respectively. Hence, taking~$f^{\max}$ instead of~$f$, 
    not only does the coefficient for the larger value ($L_{bi}$) weakly increase (since $\lambda_{|f^{\max}|} \geq \lambda_{|f|}$), 
    but additionally, the smaller value ($f_{bi}$) also weakly increases (since $f^{\max}_{bi}\geq f_{bi}$). Thus, we get that 
    \[(1-\lambda_{|f^{\max}|}) \cdot f^{\max}_{bi} + \lambda_{|f^{\max}|} \cdot L_{bi}
    \geq q_i \]
    holds for each $i \in \widetilde{N}$, which means that the compression corresponding to~$f^{\max}$ satisfies~(\ref{eq:lower-bound-as-convex_combination}) for each $i \in \widetilde{N}$, as required.

    \myparagraph{Case B:} $A_b-\beta_b L_b < 0$. 
     In this case, $\odv{\lambda_{|f|}}{|f|}=\frac{A_b-\beta_b L_b}{(L_b-|f|)^2} < 0$, meaning that $\lambda_{|f|}$  increases if $|f|$ decreases. Let us now examine how $\phi_i(f)$ changes if we decrease the flow value~$f_{bi}$ on some arc~$(b^+,i)$:
     \begin{align*}
     & \pdv{\left((1-\lambda_{|f|})  \cdot f_{bi} + \lambda_{|f|} \cdot L_{bi}\right)}{f_{bi}} =
     -\pdv{\lambda_{|f|}}{f_{bi}} \cdot {f_{bi}} + 
     (1-\lambda_{|f|}) +  \pdv{\lambda_{|f|}}{f_{bi}} \cdot L_{bi} \\
     &= 1-\beta_b-\frac{A_b - \beta_b L_b}{L_b-|f|}+(L_{bi}-f_{bi})\frac{A_b- \beta_b L_b}{(L_b-|f|)^2} \\
     &= 1 - \beta_b -\frac{A_b - \beta_b L_b}{(L_b-|f|)^2} \left( L_b -|f|-L_{bi}+f_{bi}\right) \\
     &= 1 - \beta_b -\frac{A_b - \beta_b L_b}{(L_b-|f|)^2} \left( \sum_{j \in  \widetilde{N}\setminus \{ i\}} (L_{bj} -f_{bj}) \right)  \geq 1-\beta_b > 0
     \end{align*}
     
     where we used that $L_{bj} -f_{bj} \geq 0$ for each $j \in \widetilde{N}$ and our assumption that $A_b-\beta_b L_b< 0$.
     %
    Therefore,
    decreasing the flow value~$f_{bi}$ but leaving $f_{bj}$ unchanged for all $j \in \widetilde{N} \setminus \{i\} $ strictly decreases $\phi_i(f)$ but increases $\phi_j(f)$ for each $j \in \widetilde{N} \setminus \{i\}$ due to the increase in~$\lambda_{|f|}$. 

    Let $f'$ be a flow in~$D'$ that satisfies~(\ref{eq:lower-bound-as-convex_combination}) for each $i \in \widetilde{N}$ and has minimum total size. 
   We claim that $\phi_i(f')=q_i$ for each~$i \in \widetilde{N}$.
   Otherwise, $\phi_i(f')>q_i$ for some~$i \in \widetilde{N}$, so by the above paragraph we can decrease $f'_{bi}$ slightly while maintaining the inequalities~(\ref{eq:lower-bound-as-convex_combination}) for each $j \in \widetilde{N}$. 
   However, this contradicts our choice of~$f'$, proving that 
   $f$ must fulfill (\ref{eq:lower-bound-as-convex_combination}) for each $i \in \widetilde{N}$ with equality, which also means that (\ref{eq:i-remains-solvent}) is satisfied with equality for each $i \in \widetilde{N}$.
    Summing up 
    these equalities, we get 
    \[A_b -\beta_b |f| = \left(\sum_{i \in \widetilde{N}} q_i \right)-|f|.\] 
    Due to $\beta_b<1$, this leads to $|f|=\frac{(\sum_{i\in \widetilde{N}} q_i) -A_b}{1-\beta_b}$, as required.
\end{claimproof}    
\smallskip

    Due to Claim~\ref{clm:flowsize}, we may assume that we know~$|f|$ and hence also $\lambda_{|f|}$.\footnote{Notice that by the proof of Claim~\ref{clm:flowsize}, we can completely determine the size of~$f$ depending on whether $A_b -\beta_b L_b\geq 0$.} 
    Then (\ref{eq:lower-bound-as-convex_combination}) 
    can be re-written as a simple lower bound 
    \begin{equation}
    \label{eq:lower-bounds-simpler}
        f_{bi} \geq \frac{q_i-L_{bi}\cdot \lambda_{|f|}}{1-\lambda_{|f|}}.
    \end{equation} 
    Using standard flow techniques, we can decide in polynomial time whether there exists a flow~$f$ achieving the required size~$|f|$ that, additionally, also respects the lower bound~(\ref{eq:lower-bounds-simpler}) for each $i \in \widetilde{N}$. If such a flow~$f$ exists, then each bank in~$\widetilde{N}$ is solvent under the proportional clearing vector for~$M^f$. If no such flow exists, then we conclude that there is no suitable compression for~$M$. 
    \end{proof}

\subsection{Priority-proportional cleaning model}    
\label{sec:algo:save-all-but-one:priority-proportional}

For the proof of the following generalization of Theorem~\ref{thm:allbutone}, see Appendix~\ref{app:prf-thmpolyallbutoneprio}.
\begin{restatable}[\linkproof{app:prf-thmpolyallbutoneprio}]{theorem}{thmpolyallbutoneprio}
\label{thm:poly:all-but-one:prio}
Given a financial market~$M$, one can decide in polynomial time whether $M$ admits a compression~$C$ such that $M-C$ admits a priority-proportional clearing vector~$p$ with $|\dft^{M-C}(p)|\leq 1$.
\end{restatable}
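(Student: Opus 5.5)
We follow the proof of Theorem~\ref{thm:allbutone}, adapting its single nonlinear step to the priority structure. First we check whether all banks can remain solvent. By the analogue of Observation~\ref{obs:remove-cycles}, which still holds under priority-proportional clearing because solvent banks pay all liabilities in full regardless of priorities, this reduces to running Algorithm~\ref{alg:priority_proportional_clearing} on~$M$. Otherwise we guess the unique defaulting bank~$b$ and set $\widetilde{N}=N\setminus\{b\}$. Using the same observation, we repeatedly compress cycles inside $\widetilde{N}$, so that $D[\widetilde{N}]$ is acyclic. We then compute the deficits $q_i$ and build the flow network~$D'$ with source $b^+$ and sink $b^-$, exactly as before. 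Any compression corresponds to a flow~$f$, and if all banks of $\widetilde{N}$ are solvent then $b$ pays out $p_b=A_b-\beta_b|f|$. The only change is that bank~$b$ now distributes $p_b$ according to its priority groups $g_b^1,\dots,g_b^{k_b}$ in the compressed market, where its liability to~$i$ is $L_{bi}-f_{bi}$.

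Next we guess the critical group $\gamma=\gamma_b$ of $b$ after compression; there are at most $k_b+1$ choices. For this guess, every $i\in g_b^\ell$ with $\ell<\gamma$ is paid in full, so its condition is the linear constraint $L_{bi}-f_{bi}\ge q_i-f_{bi}$, i.e.\ $L_{bi}\ge q_i$. Every $i\in g_b^\ell$ with $\ell>\gamma$ receives nothing, which forces $f_{bi}\ge q_i$. The guess itself is consistent iff
\[
\sum_{h<\gamma}\sum_{j\in g_b^h}(L_{bj}-f_{bj})\le A_b-\beta_b|f|<\sum_{h\le\gamma}\sum_{j\in g_b^h}(L_{bj}-f_{bj}),
\]
which is linear in~$f$. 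For $i\in g_b^\gamma$, let $R_\gamma(f)=\sum_{j\in g_b^\gamma}(L_{bj}-f_{bj})$ and let $Q(f)=A_b-\beta_b|f|-\sum_{h<\gamma}\sum_{j\in g_b^h}(L_{bj}-f_{bj})$ be the residual amount reaching group~$\gamma$. The solvency condition becomes $\frac{L_{bi}-f_{bi}}{R_\gamma(f)}Q(f)\ge q_i-f_{bi}$. This has the same shape as~(\ref{eq:i-remains-solvent}), with the ratio $\lambda=Q(f)/R_\gamma(f)\in[0,1)$ replacing $\lambda_{|f|}$.

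The main obstacle is the analogue of Claim~\ref{clm:flowsize}. Here $\lambda$ depends not on $|f|$ alone but on two linear aggregates: the total flow $|f|$ and the flows on the arcs into $G_b^{\gamma}$. To recover linearity, we would parametrize by the two scalars $s=|f|$ and $t=\sum_{j\in g_b^\gamma}f_{bj}$. With $u=\sum_{h<\gamma}\sum_{j\in g_b^h}f_{bj}$, we get $Q=A_b-\beta_b s-\sum_{h<\gamma}L_b^h+u$ and $R_\gamma=L_b^\gamma-t$.

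Mimicking Case~B of Claim~\ref{clm:flowsize}, we show that a minimal feasible flow restricted to the arcs into $g_b^\gamma$ is tight for every $i\in g_b^\gamma$. Summing these equalities gives $Q=\sum_{i\in g_b^\gamma}q_i-t$. Flow on the other arcs out of $b^+$ only raises $Q$ when it goes to higher-priority groups, and only matters through $s$ when it goes to lower-priority groups. So we first decide the flow outside $g_b^\gamma$ greedily, in the style of Case~A: push as much as possible into $G_b^{\gamma-1}$ using shortest augmenting paths, and as little as necessary into lower groups, namely exactly $f_{bi}=q_i$ for those arcs when $\beta_b<1$.

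Once this fixes $s-t$ and $u$ up to a one-parameter family, the tightness equation determines~$t$, hence~$\lambda$. Each condition then becomes a lower bound on $f_{bi}$ as in~(\ref{eq:lower-bounds-simpler}). Feasibility is then a flow problem with lower bounds and a prescribed value, which is solvable in polynomial time. If the monotonicity argument fails in some subcase, the fallback is to check directly whether the resulting constraint system is feasible: for fixed $\lambda$ it is an LP, and the candidate $\lambda$ values come from the finitely many tightness equations, each linear in one parameter. This yields polynomially many LPs over all guesses of $b$ and~$\gamma$.
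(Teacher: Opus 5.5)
Your setup agrees with the paper: guessing $b$ and the critical group, the network $D'$, fixing $f_{bi}=q_i$ below the critical group, and writing solvency in $g_b^\gamma$ as $\phi_i(f)\ge q_i$ with $\lambda=Q(f)/R_\gamma(f)$. The gap is in the step meant to reduce everything to a few LPs.

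First, your tightness claim for a minimal flow on the arcs into $g_b^\gamma$ holds only when $\partial\lambda/\partial f_{bi}=(\lambda-\beta_b)/R_\gamma<0$. If $\lambda>\beta_b$, lowering $f_{bi}$ for a slack bank $i$ lowers $\lambda$. That can break the constraint of a tight bank $i'\in g_b^\gamma$. When $\lambda\ge\beta_b$ the right move is the opposite one, augmenting to a maximum flow, and you never treat this case.

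Second, summing the tight equalities does not determine $t$. Since $s=u+t+\sum_{\ell>\gamma}\sum_{i\in g_b^\ell}q_i$, the sum only yields $(1-\beta_b)(u+t)=\mathrm{const}$, i.e.\ the value $|f|$. To recover $t$ you fix $u$ by first pushing as much flow as possible into $G_b^{\gamma-1}$. This fails, because arcs into higher groups can compete with arcs into $g_b^\gamma$ for downstream capacity.

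For example, take $g_b^1=\{j\}$, $g_b^2=\{i_1,i_2\}$, $L_{bj}=L_{jv}=L_{vb}=5$, $L_{bi_1}=L_{bi_2}=L_{i_1v}=10$, $L_{i_1w}=3$, $L_{i_2w}=0.9$, $e_b=4$, $e_{i_1}=10$, all other endowments $0$, $\alpha_b=1$, $\beta_b=\frac12$. Your greedy sets $u=5$, which saturates $(v,b)$ and forces $t=0$. Then $b$ splits its $4$ units equally over $i_1,i_2$, and $i_1$ receives $2<q_{i_1}=3$. Yet $u=2.7$, $t=2.3$ leaves $b$ with $1.7$ for group~2. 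Then $i_1$ gets about $0.74\ge 3-2.3$ and $i_2$ about $0.96\ge 0.9$, so only $b$ defaults. Here $\lambda\approx 0.1<\beta_b$, exactly the case your argument is meant to cover.

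Your fallback does not repair this. For fixed $\lambda$ the system is indeed an LP, but you give no polynomial candidate set for $\lambda$. Among maximum flows, $\lambda$ varies with the split between $u$ and $t$, so no tightness equation pins it down.

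The missing observation is that you never need $\lambda$ itself. Clear the denominator in $\frac{L_{bi}-f_{bi}}{R_\gamma}Q\ge q_i-f_{bi}$. The only bilinear terms are $-f_{bi}u$ on the left and $f_{bi}t$ on the right, and they combine into $f_{bi}(u+t)=f_{bi}\bigl(|f|-\sum_{\ell>\gamma}\sum_{i\in g_b^\ell}q_i\bigr)$. So once $|f|$ is fixed, all conditions are linear in $f$, and one LP over flows of that size searches every split between $u$ and $t$ at once.

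The paper first augments along paths that start with arcs into higher-priority groups, which only helps. It then shows that $|f|$ can be taken to be one of two values: the maximum-flow value when $\partial\lambda/\partial f_{bi}\ge 0$, or the value from summing the tight equalities when $\partial\lambda/\partial f_{bi}<0$. This gives two LPs per guess of $b$ and the critical group.
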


\section{A  Mixed Integer Linear Program for Optimal Compression}
\label{sec:ilp}

We present a MILP for finding the optimal priority proportional compression and clearing vector, given nonnegative endowements. 
Let $A=\{ (i,j) \in N\times N\colon L_{ij}>0\}$ be the set of arcs in the liability graph underlying the market.
For a bank~$i$, let $\delta (i)$ denote the set of outgoing arcs (debts) and $\rho (i)$ the set of incoming arcs (liabilities owed).

We assume the compression on arc $a$ is a discrete value $C_a = \sum_{\ell=0}^{\log L_a}2^{\ell}z_{a,\ell}$ where $z_{a,\ell}\in \{ 0,1\}$. 

We have proportion rate variables $\lambda_i^r \in [0,1]$ for each  bank $i\in N$ and priority class $g_i^r$, $r\in [k_i]$. Let $\delta_r(i)$ denote the set of arcs in priority class $g_i^r$.  Rate values must satisfy $1 \ge \lambda_i^1 \ge \dots \ge \lambda_i^{k_i} \ge 0$, and at most one $\lambda_i^r$ can be fractional (the transition point). 

\myparagraph{Variables:}
\begin{itemize}
    \item $q_i \in \{0,1\}$ for each $i \in N$: Indicator variable; $q_i=1$ if bank $i$ defaults.
    \item $z_{a,\ell} \in \{0,1\}$ for each $a \in A$, $\ell \in [\log L_a]_0$: Binary expansion variable for compression on arc~$a$.
    \item $\lambda_i^r \in [0,1]$ for each $i \in N$, $r \in [k_i]$: Proportion rate for priority class $r$ of bank $i$.
    \item $\mu_i^r \in \{0,1\}$  for each $i \in N$, $r \in [k_i]$: Auxiliary binary variable to enforce the structure of $\lambda_i^r$.
    \item $p_a \ge 0$ for each $a \in A$: Actual payment on arc $a$.
    \item $y_{a,\ell} \in [0,1]$ for each $a \in A$, $\ell \in [\log L_a]_0$: Variable for linearizing the product $\lambda_i^r \cdot z_{a,\ell}$ where $a \in \delta_r(i)$.
    \item $x_i \ge 0$ for each $i \in N$: Auxiliary variable for the total incoming payment $q_i \sum_{a \in \rho(i)} p_a$ of a defaulting bank.
\end{itemize}

\myparagraph{The Formulation:}

Let $R_i := \max \{e_i,0\} + \sum_{a \in \rho(i)} L_a$ be a trivial upper bound on bank $i$'s assets. 

\allowdisplaybreaks

\begin{align}
    \text{min} \quad & \sum_{i\in N} q_i 
    \notag 
    \\
    \text{s.t.} \quad 
    & \sum_{a\in \rho (i)} \sum_{\ell=0}^{\log L_a}2^{\ell}z_{a,\ell} = \sum_{a\in \delta (i)} \sum_{\ell=0}^{\log L_a}2^{\ell}z_{a,\ell}, 
    \quad \forall i \in N \tag{Circulation} \\
    & p_a = \lambda_i^r L_a - \sum_{\ell=0}^{\log L_a}2^{\ell}y_{a,\ell}, 
    \quad \forall i \in N, r \in [k_i], a \in \delta_r(i) \tag{Pay-Def} \\
    & y_{a,\ell} \le z_{a,\ell}, \quad y_{a,\ell} \le \lambda_i^r, \quad y_{a,\ell} \ge \lambda_i^r - (1 - z_{a,\ell}), \quad y_{a,\ell} \ge 0 \notag \\
    & \qquad \qquad \forall i \in N, r \in [k_i], a \in \delta_r(i), \ell \in [\log L_a]_0 \tag{McCormick} \\
        & \sum_{r=1}^{k_i} \mu_i^r = 1, \quad \forall i \in N \tag{Priority-Group} \\
    & \sum_{j=r+1}^{k_i} \mu_i^j \le \lambda_i^r \le \sum_{j=r}^{k_i} \mu_i^j, \quad \forall i \in N, r \in [k_i] \tag{Priority-Shape} \\
    & \sum_{a\in \delta (i)} \left(L_a - \sum_{\ell=0}^{\log L_a}2^{\ell}z_{a,\ell}\right) \le L_i q_i + \sum_{a\in \delta (i)} p_a, 
    \quad \forall i \in N \tag{Full-Pay} \\
    & (1 - (1-\alpha_i)q_i) e_i + \sum_{a\in \rho(i)} p_a - (1-\beta_i)x_i \ge \sum_{a\in \delta(i)} p_a, 
    \quad \forall i \in N \tag{Budget-LB} \\
    & (1 - (1-\alpha_i)q_i) e_i + \sum_{a\in \rho(i)} p_a - (1-\beta_i)x_i \le \sum_{a\in \delta(i)} p_a + (1-q_i)R_i, 
    \quad \forall i \in N \tag{Budget-UB} \\
    & x_i \le R_i q_i, \quad x_i \le \sum_{a\in \rho(i)} p_a, \quad x_i \ge \sum_{e\in \rho(i)} p_a - R_i(1-q_i), \quad x_i \ge 0 
    \quad \forall i \in N \tag{Def-Income}
\end{align}

\subsection*{Explanation and Proof of Correctness}

We now show that an optimal solution to the above MILP corresponds exactly to an optimal compression and a corresponding clearing vector adhering to priority-proportional payment rules.

\myparagraph{1. Compression Validity}
The constraint (Circulation) ensures that for every node, the total compression on incoming arcs equals the total compression on outgoing arcs. This is the definition of a valid compression circulation.

\myparagraph{2. Within Class Proportionality}
For an arc $a \in \delta_r(i)$, let $C_a = \sum 2^\ell z_{a,\ell}$ be the compression amount.
Constraints (McCormick) enforce $y_{a,\ell} = \lambda_i^r z_{a,\ell}$ because $z_{a,\ell}$ is binary.
Substituting this into (Pay-Def), we get
\[
p_a = \lambda_i^r L_a - \lambda_i^r C_a = \lambda_i^r (L_a - C_a).
\]
Thus, $p_a$ is exactly the fraction $\lambda_i^r$ of the remaining debt on arc $a$ for any $a\in \delta_r(i)$.

\myparagraph{3. Priority-Proportionality}
Constraints (Priority-Group) and (Priority-Shape) enforce the structure of the proportion rates.
By (Priority-Group), exactly one $\mu_i^p$ has valu~$1$.
Substituting into (Priority-Shape), we get
\begin{itemize}
    \item For $r < p$: $1 \le \lambda_i^r \le 1 \implies \lambda_i^r = 1$. (Highest priority groups paid in full).
    \item For $r = p$: $0 \le \lambda_i^r \le 1$. (Critical group paid partially).
    \item For $r > p$: $0 \le \lambda_i^r \le 0 \implies \lambda_i^r = 0$. (Lowest priority groups paid nothing).
\end{itemize}
This exactly models the priority-proportional rule.

\myparagraph{4. Default Logic and Budget Clearing}
We examine the consistency of the default variable $q_i$ and the budget constraints.

\textbf{Case A: Bank $i$ is solvent ($q_i = 0$).}
\begin{itemize}
    \item (Def-Income) forces $x_i = 0$.
    \item (Full-Pay) becomes $\sum (L_a - C_a) \le \sum p_a$. Since $p_a \le L_a - C_a$ by definition (as $\lambda \le 1$), this forces equality: $p_a = L_a - C_a$. Consequently, $\lambda_i^r = 1$ for all $r$, and the bank pays all debts fully.
    \item (Budget-LB/UB) reduce to $e_i + \sum_{a\in \rho (i)} p_a \ge \sum_{a\in \delta (i)} p_a$ and $e_i + \sum_{a\in \rho (i)} p_a \le \sum_{a\in \delta (i)} p_a + R_i$. Since the bank is solvent, it has enough assets to cover payments, satisfying the inequality and $R_i$ is large enough.
\end{itemize}

\textbf{Case B: Bank $i$ defaults ($q_i = 1$).}
\begin{itemize}
    \item (Def-Income) forces $x_i = \sum_{a \in \rho(i)} p_a$ (total incoming payments).
    \item (Full-Pay) becomes loose (true due to $\sum (L_a - C_a) \le L_i$), allowing $\lambda_i^r < 1$.
    \item (Budget-LB) and (Budget-UB) collapse to a single equality:
    \[
    \alpha_i e_i + \beta_i\sum_{a\in \rho(i)} p_a = \sum_{a \in \delta(i)} p_a.
    \]
    This states that the total outgoing payments $\sum_{a\in \delta (i)} p_a$ must exactly equal the bank's available assets after default costs $\alpha_i$ and $\beta_i$ are applied.
\end{itemize}

\myparagraph{Conclusion:}
The IP constraints ensure that payments are consistent with the compression, and follow the priority-proportional rules. Minimizing $\sum q_i$ maximizes the number of solvent banks. \hfill $\square$

\section{Simulations}
\label{sec:sim}

In this section, we evaluate the performance of our proposed MILP algorithm for finding an optimal compression in two scenarios: for a real-world dataset and for a synthetically generated one. For simplicity, we assumed a proportional clearing model. Relying on
Algorithm~\ref{alg:priority_proportional_clearing} for finding a maximal proportional clearing, 
we compare three different strategies: 
(1) the ``baseline'' clearing without any compressions, 
(2) the ``greedy'' approach to find a compression presented in Proposition~\ref{prop:greedy-saves-one-bank}, and 
(3) the MILP presented in Section~\ref{sec:ilp}. 

\myparagraph{Our datasets.}
The simulations were conducted on markets whose size ranges from~10 to~500 banks. For each size, we run simulations for 10 independently chosen instances. We assumed default costs of $\alpha $ chosen uniformly randomly per instance from $[0.4,0.8]$ (ratio of usable endowment) and $\beta $ chosen uniformly randomly per instance from $[0.6,0.9] $ (ratio of usable incoming payments) to reflect significant liquidity friction during a crisis.

The first dataset used in this study is based on a real-world financial network derived from the Romanian netting system, which is managed by the Romanian Ministry of Economy. This system employs a novel algorithm to clear financial obligations between companies, managing a significant volume of inter-company debt. The network is constructed from approximately 1.3 million invoices, representing the documented liabilities between a large number of market participants. 
We used a snowball sampling method\footnote{That is, after picking a starting bank uniformly at random, the algorithm samples the next bank from those banks towards which the last picked bank has some positive liability.} to preserve the realistic ``hub-and-spoke'' topology often found in Over-the-Counter markets. 
Initial endowments were randomized since they were not originally provided in the dataset:
we generated the initial endowment value of each bank~$i$ using a uniform distribution from 
$[0,0.8L_i]$ where $L_i$ is the total outward liability of bank~$i$.

To complement the real-world invoice data, we introduced a second experimental scenario using synthetic networks generated via the Erdős–Rényi model $G(n,p)$. While the real-world dataset exhibits sparse networks, our synthetic markets were generated with a fixed edge probability of $p=0.2$. This setup creates denser, more homogeneous networks that are rich in cyclic structures. These synthetic instances allow us to stress-test the compression algorithms in an environment where cyclic interdependencies are frequent, maximizing the theoretical potential for portfolio compression. 
Liabilities were generated uniformly at random between $100$ and $1,000$ units, providing a controlled baseline to contrast with the heavy-tailed distribution of the real invoice data.
%
Appendix~\ref{app:more-sim} contains additional results for cases where liabilities and/or endowments were generated by a log-normal distribution (instead of a uniform one).

\myparagraph{Results on real-world data.}
Our simulations showed that even with a snowball sampling, the resulting networks were very sparse, with an average degree of around 2 in most cases. This also suggests that the topology in this market is hierarchical in local clusters, being reflected in small and medium-sized supply chains that form open chains of credit (e.g., supplier → manufacturer → retailer) rather than closed loops.
The results for the Romanian netting dataset reveal that the greedy approach struggles to decrease the number of defaulting banks in this networks (it can only attain a decrease of at most~$1$), while the MILP can be much more effective. We list two possible reasons for the significant advantage of the MILP. First, the greedy approach finds liability cycles and applies compression, but only the MILP finds the optimal \emph{compression values} for the cycles found---the importance of this is illustrated via the situation of bank~$b$ in Example~\ref{ex}. Second, if cycles have common liability arcs, then greedy may only choose one of them, which then renders the other cycles infeasible later, while the MILP can consider all possible linear combinations of such cycles to find the most advantageous combination.


\begin{figure}[htbp]
    \centering
    \begin{subfigure}{0.49\textwidth}
\includegraphics[width=\textwidth]{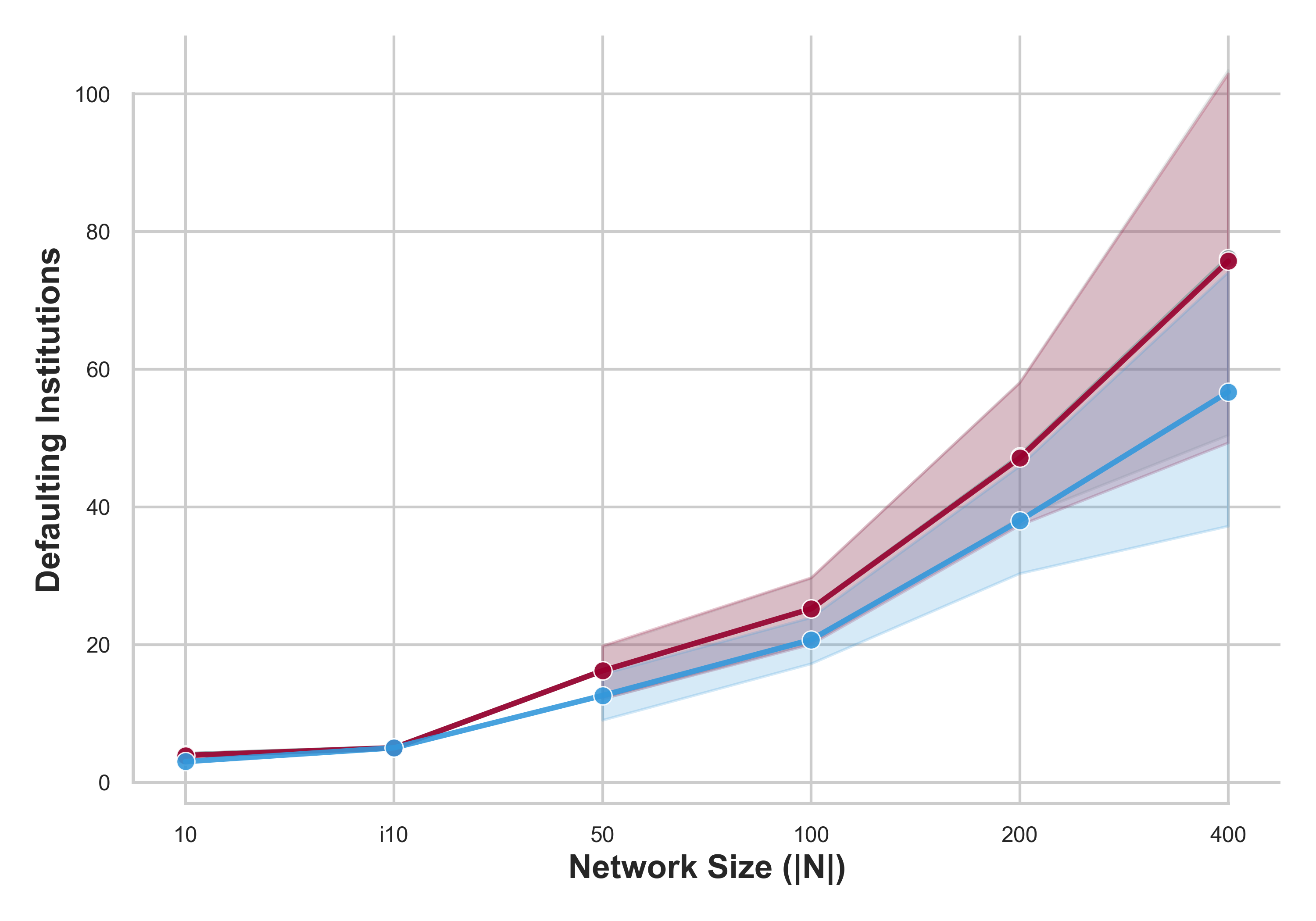}
    \subcaption{Results on the real-world dataset.}
    \label{fig:defaults}    
    \end{subfigure}
    \hfill
    \begin{subfigure}{0.49\textwidth}
    \includegraphics[width=\textwidth]{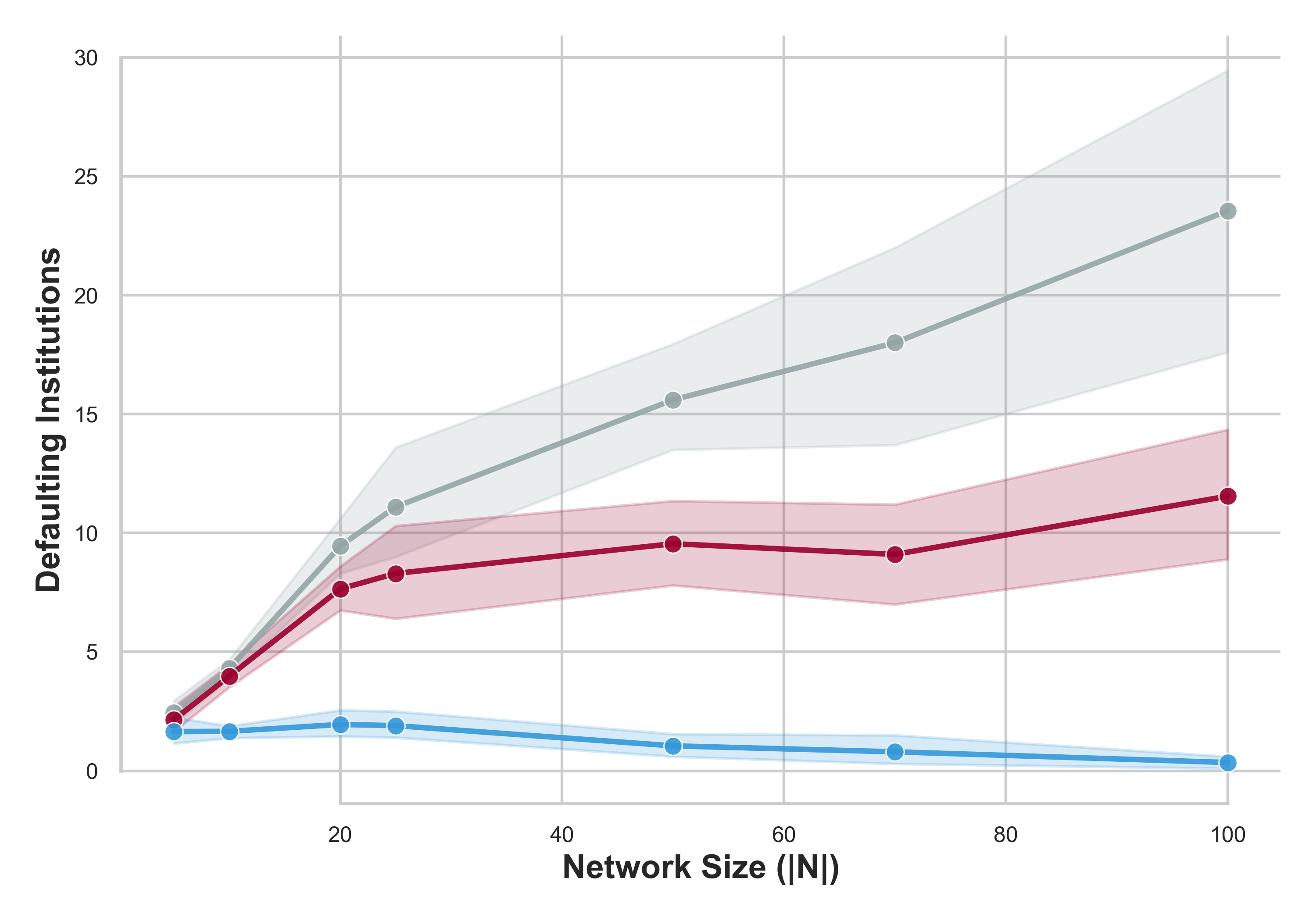}
    \subcaption{Results on the synthetic dataset.}
    \label{fig:milp}    
    \end{subfigure}
    \caption{Visual representation of our results. The plots show the number of defaulting banks for various sample sizes. Results for the \legendsquare{mygrey}~baseline, \legendsquare{myred}~greedy, and \legendsquare{myblue}~MILP methods are shown in grey, red, and blue, respectively. In Figure~(a), the red and grey lines overlap almost completely, showing the almost identical performance of the baseline and the greedy methods. The colored ``zones'' represent the $95\%$ confidence intervals.
    }
\end{figure} 

\myparagraph{Results on synthetic data.}
In high contrast to the results from the Romanian netting dataset, the results from the Erdős–Rényi model show that the systemic risk profile and algorithmic performance shift dramatically as cyclic interdependencies become the primary drivers of instability. The baseline algorithm---maximal proportional clearing without compression---resulted in the highest number of defaults in these dense topologies. This 
underscores the intrinsic fragility of a cyclic system without debt netting. Now that there are many cycles and liabilities compared to the Romanian dataset, even the greedy method effectively mitigates this by unlocking liquidity through cycle removal, though it remains substantially suboptimal compared to the MILP. Figure \ref{fig:defaults} compares the performance of these three methods in terms of the number of defaulting banks. 
Unlike the greedy method, which cancels every cycle it identifies, the MILP evaluates the systemic impact of each compression. By finding the optimal compression, it can choose to leave a cycle intact if the resulting payment flows are more beneficial for the solvency of a specific critical node. 
We even observe that even though the number of defaulting banks increases when the instance size gets larger, the number of defaulting banks
in the MILP solution gets even smaller. One interpretation of this phenomenon could be that the defaults do not increase as fast as the instance size, so the more solvent banks can help to save more of them in an optimal compression, which the MILP finds.
That being said, the MILP has its own downsides. It struggled to perform on iterations with more than~$100$ banks, by reaching the $1$-hour threshold per simulation.

\myparagraph{Conclusion.} Our MILP markedly outperforms the greedy method for both real-world and synthetic data in terms of the number of defaults, thanks to its ability to explore and optimize over an often exponentially large set of possible compression cycles allowing arbitrary compression values.

\section{Summary and Future Directions}
\label{sec:conclusions}

We have investigated the problem of finding a portfolio compression in financial markets in a way that minimizes the number of defaulting banks. We showed that this problem is $\NP$-hard; in fact, even deciding whether a single fixed bank can be saved from defaulting is $\NP$-hard. Despite the strong computational intractability of the problem, we were able to discover some positive results. First, we presented the first polynomial-time algorithm for simply computing a priority-proportional clearing in a financial market. Second, we gave a polynomial-time algorithm for determining whether the market admits a compression whose application results in at most one defaulting banks. 
Third, we gave a MILP formulation for our problem. We have demonstrated the usefulness of our MILP via extensive simulations on both synthetic and real-world data. 

Several open questions remain. 
Can our problem be solved efficiently if the underlying network of liabilities is \emph{almost} acyclic in some specific sense?
We remark that not even the case when there is only a single directed cycle in the network seems trivial.
Another intriguing question is whether the algorithm in Theorem~\ref{thm:poly:all-but-one:prio} can be extended for finding compressions where we aim to ensure that at most two banks default? What about larger constant values?

From a more general viewpoint, is it possible to extend our algorithmic results to more general models? An important generalization would be to allow for credit default swaps where the liability of a bank~$i$ towards some bank~$j$ depends also on whether some third bank~$k$ defaults or not. 






    \clearpage
    
    \bibliographystyle{ACM-Reference-Format}
    \bibliography{bib}

\begin{appendices}
\clearpage
\section{Omitted Proofs from Section~\ref{sec:alg-clearing}}

\subsection{Proof of Lemma~\ref{lem:sol}}
\label{app:prf-lemsol}
\lemsol*
\begin{proof}
     As in $(p^\star,z^\star)$, we have that $\EE_i(p^\star,z^\star)\ge 0$ for all $i\in N$, it is easy to see that it gives a solution to \ref{prob:min_subsidy}.
    Suppose that $(p',z')$ is another solution such that $\sum_{i\in N}z'_i<\sum_{i\in N}z_i^\star$. 
    
        Let $\delp_{ij}= p_{ij}'-p_{ij}^\star$ and $\delz_i = z_i'-z_i^\star$. 
             Define a set $S\subseteq N$, such that $i\in S$ if and only if $z_i^\star>0$ or ($\EE_i(p^\star,z^\star)=0$ and $\delp_{ij}>0$ for some $j\in \Gamma_i$). By the proportionality constraints, $\delp_{ij}>0$ for some $j\in \Gamma_i$ implies that $\delp_{ij}>0$ for all $j\in \Gamma_i$. Furthermore, if $z_i^\star>0$, then $\delp_{ij}\ge 0$  for all $j\in \Gamma_i$ as $p_{ij}^\star=0$ by~\specprop.
             This means that for all banks $i \in S$, we know $\delp_{ij}>0$ for all $j\in \Gamma_i$, which in turn implies $\delp_{ij}\geq 0$ for all $j \in N$.
    
        By property~\specprop, we have that for all $i\in S$, 
        the difference in net equity for bank~$i$ is $\delz_i + \sum_{j\in N} (\bet_{i}\delp_{ji} - \delp_{ij})\ge 0$,
        because $\EE_i(p^\star,z^\star)=0$
        and 
        $(p',z')$ satisfies (Budget) in \ref{prob:min_subsidy}.
%
    %
        Then, we have that $\sum_{i\in S}(\delz_i + \sum_{j\in N} (\bet_{i}\delp_{ji} - \delp_{ij}))\ge 0$. Since for $i,j\in S$, $\delp_{ij}\ge 0$ and $\bet_i\le 1$ we obtain
        $$ \sum\limits_{i\in S}\sum\limits_{j\in S}(\bet_i\delp_{ji}-\delp_{ij})=\sum\limits_{i,j\in S} (\bet_j-1)\delp_{ij} \le 0.$$    Therefore,    
        
        \begin{equation}
\label{eq:nonnegdiff}
0\le \sum_{i\in S}\Bigl(
\delz_i
+ \sum_{j\in S} (\bet_i\delp_{ji} - \delp_{ij})
+ \sum_{j\notin S} (\bet_{i}\delp_{ji} - \delp_{ij})
\Bigr)
\le 
\sum_{i\in S}\Bigl(
\delz_i
+ \sum_{j\notin S} (\bet_{i}\delp_{ji} - \delp_{ij})
\Bigr).
\end{equation} 

    For $i,j$ such that $i\notin S$, $j\in S$, we have that $\delp_{ij}\le 0$. Indeed, if $\EE_i(p^\star,z^\star)=0$, then $\delp_{ij}>0$ implies $i\in S$, contradiction; and if $\EE_i(p^\star,z^\star)>0$, then we must have that $p_{ij}^\star=L_{ij}$, so $\delp_{ij}\le 0$ by $p'_{ij}\le L_{ij}$. For $i\in S, j \notin S$ we have that $\delp_{ij}\ge 0$. Hence, $$\sum_{i\in S,j\notin S} (-\bet_{i}\delp_{ji} + \delp_{ij}))\ge 0.$$ Adding this to \eqref{eq:nonnegdiff}, we get that $\sum_{i\in S}\delz_i \ge 0$. 
    Since $\delz_i<0$ implies $z^\star_i>0$ from which $i \in S$ follows, we know that     
    $\delz_i\ge 0$ for each $i\notin S$. This implies $\sum_{i\in N}\delz_i\ge 0$, a contradiction. 
\end{proof}

\subsection{Proof of Lemma~\ref{lem:uniq}}
\label{app:prf-lemuniq}
\lemuniq*
\begin{proof}
     Let $(p,z)$ and $(p',z')$ be two optimal solutions of \ref{prob:min_subsidy} that satisfy property~\specprop. Create a vector $(p'',z'')$ as follows. Let $p''=\max \{ p,p'\}$,  and $z''=\min\{ z,z'\}$, where max and min are taken coordinate-wise. We claim that $(p'',z'')$ is a solution of \ref{prob:min_subsidy}, which implies that it must hold that $z_i=z_i'$ for all $i\in N$, as otherwise $\sum_{i\in N}z_i''<\sum_{i\in N}z_i$, contradicting the optimality of $(p,z)$.
    
    If $z_i=z_i'$, then it is clear that $\EE_i(p'',z'')\ge 0$ holds, as by the priority-proportionality constraint (Prop), either $p_{ij}\ge p_{ij}'$ for all $j\in N$ or $p_{ij}\le p_{ij}'$ for all $j\in N$ (that is, the $\max \{ p_{ij},p'_{ij}\}$ is $p_{ij}$ for all $j \in N$ or it is $p'_{ij}$ for all $j \in N$). Therefore, taking the maximum on the incoming and outgoing arcs cannot destroy the property that the net equity is non-negative. By the same reason, the proportionality constraints are also satisfied and so are all other constraints  in \ref{prob:min_subsidy}. 
    
    Suppose that $z_i \neq z'_i$; by symmetry, we may assume $z_i<z_i'$. Then, $p_{ij}'=0$ by property~\specprop, so $p_{ij}''=\max \{ p_{ij},p_{ij}'\} = p_{ij}$ for all $j\in \Gamma_i$ (and thus for all $j\in N$). Hence, similarly as before, $\EE_i(p'',z'')\ge 0$ is satisfied too and so are all constraints in \ref{prob:min_subsidy}.

    Therefore, $(p'',z'')$ is indeed a solution to \ref{prob:min_subsidy}, implying $z_i=z_i'$ for $i\in N$.
\end{proof}

\subsection{Proof of Lemma~\ref{lem:minp}}
\label{app:prf-lemminp}
\lemminp*
\begin{proof}
    Let $(p,z)$ be optimal for \ref{prob:min_payment}. Since it satisfies the constraint $\sum z_i = Z^*$, it is optimal for \ref{prob:min_subsidy}. Suppose it does not satisfy~\specprop. Then there exists some $i$ with $z_i > 0$ and $p_{ij} > 0$ for some $j \in \Gamma_i$ (the case $\EE_i(p,z)>0$ is impossible: then, $z_i$ could be decreased by some small $\varepsilon>0$ without violating any constraint of \ref{prob:min_subsidy}, contradicting its optimality).

    Again, we have $p_{ij} > 0$ for all $j \in \Gamma_i$ by the priority-proportionality constraint (Prop).
    We construct a new solution. 
    Define $\varepsilon=\min\{x:x \leq z_i,x \frac{L_{ij}}{L_i^{\gamma_i}}\leq p_{ij} \text{ for all }j \in \Gamma_i\}$, and notice that $\varepsilon>0$ because $z_i>0$ and $p_{ij}>0$ for all $j \in \Gamma_i$. Now, let decrease the subsidy~$z_i$ by $\varepsilon$ and, for each $j \in \Gamma_i$, decrease the payment $p_{ij}$ by $x \frac{L_{ij}}{L_i^{\gamma_i}}$ and simultaneously increase the subsidy~$z_j$ by the same amount
    so that we maintain the (Budget) inequality for bank~$j$. Observe that this yields a solution to \ref{prob:min_payment}: the (Budget) inequality for~$i$ is maintained, and the payments remain priority-proportional.  
    Moreover, the total sum $\sum_{i\in N} z_i$ of the subsidies remains the same, but the total payment $\sum_{i,j\in N} p_{ij}$ strictly decreases. This contradicts the optimality of $(p,z)$ to \ref{prob:min_payment}.
    Thus, the solution must satisfy~\specprop.
\end{proof}

\subsection{Proof of Lemma~\ref{lem:zero}}
\label{app:prf-lemzero}
\lemzero*
\begin{proof}
It is clear that $(p^\star,z^\star)$ is a solution to \ref{prob:max_payment}, as it is an optimal solution to \ref{prob:min_subsidy} by Lemma~\ref{lem:sol} and it satisfies~\specprop. It remains to show that it is the unique optimal solution. Hence, to prove the lemma, it is enough to prove that for any solution $(p,z^\star)$ to \ref{prob:max_payment}, we have that $p_{ij}\le p_{ij}^\star$ for all $i,j\in N$.

Suppose for the contrary that there is a solution $(p,z^\star)$ and banks $i,j\in N$ such that $p_{ij}>p_{ij}^\star$.
If $z_i^\star>0$, then this would mean $p_{ij}=p^\star_{ij}$ by the constraints of \ref{prob:max_payment}, a contradiction. 

If $z_i^\star=0$, then $z_i^k=0$ for all $k\in \mathbb{N}$.
Then, by the indirect assumption, there exists some $k\in \mathbb{N}$, such that $p_{ij}^{k+1}<p_{ij}\le p_{ij}^{k}$, using that $p_{ij}^0=L_{ij}$ for $j\in G_i^{\gamma_i}$ and $p_{ij}^0=p_{ij}=0$ otherwise. Choose $i$ and $j$ such that $k$ is as small as possible. Since $p_{ij}^k\ne p_{ij}^{k+1}$, we have that $\sum_{j'\in N}p_{ij'}^{k+1}=\alp_ie_i + \sum_{j'\in N}\bet_ip_{j'i}^k$. Then, we get $\sum_{j'\in N}p_{ij'}\le \alp_ie_i + \sum_{j'\in N}\bet_ip_{j'i}\le \alp_ie_i+\sum_{j'\in N}\bet_ip_{j'i}^k=\sum_{j'\in N}p_{ij'}^{k+1}$ by the choice of $k$. By the priority-proportionality constraint of \ref{prob:min_subsidy}, $p_{ij'}\le p_{ij'}^{k+1}$ for all $j'\in N$, a contradiction.
\end{proof}

\section{Simplification for Proportional Clearing with Nonnegative Endowments}
\label{app:simplify}
We show that if the initial endowments $e_i$ are nonnegative, and we only need a proportional clearing for our market (instead of a priority-proportional one), then it suffices to solve the following single LP repeatedly:
\begin{equation}
\tag{\textsc{LP:Max-payment-prop}}
\label{prob:max_payment_prop}
\begin{array}{rlcll}
    \max & \multicolumn{4}{l}{\sum\limits_{i \in N} p_{ij}} \\[4pt]
    \text{s.t.}
    & \sum_{j\in N}p_{ij} & \leq & \alp_i e_i + \sum_{j\in N}\bet_{i}p_{ji} & \forall i \in N \quad \text{(Budget)} \\[2pt]
    & p_{ij} & = & \lambda_iL_{ij} & \forall i \in N \quad \text{(Prop)} \\[2pt]
    & \lambda_i& \in & [0,1] & \forall i\in N, 
\end{array}
\end{equation}

In Algorithm~\ref{alg:proportional_clearing}, we show how we can simplify the algorithm  of Theorem~\ref{thm:alg-pripropclear} if we only aim to compute a proportional clearing.

\begin{algorithm}[ht]
\caption{Proportional Clearing for Nonnegative Endowments}
\label{alg:proportional_clearing}

\ForEach{$i \in N$}{
    Set $\;(\alpha_i^0,\beta_i^0) \gets (1,1)$\;
}

\For{$r \gets 0$ \KwTo $\infty$}{
    $\mathrm{Changed \gets False}$\;
    Solve \ref{prob:max_payment_prop} to obtain $p^r$\;
    \ForEach{$i \in N$}{
        \If{$\alpha_i^r e_i +  \sum_{j\in N} \beta_i^r p_{ji}^r < L_i$
        \textbf{and} $(\alpha_i^r,\beta_i^r)=(1,1)$}{
            Set $\;(\alpha_i^{r+1},\beta_i^{r+1}) \gets (\alpha_i,\beta_i)$\;
            $\mathrm{Changed \gets True}$\;
        }
        \Else{ 
        Set $\;(\alpha_i^{r+1},\beta_i^{r+1}) \gets (\alpha_i^r,\beta_i^r)$\;}
    }
    \If{$\mathrm{Changed=False}$}{
    \Return $p^r$\;}
}
\end{algorithm}
\begin{theorem}
    Algorithm~\ref{alg:proportional_clearing} computes a coordinate-wise maximum proportional clearing vector if initial endowment values are nonnegative.
\end{theorem}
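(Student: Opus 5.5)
We will prove the theorem by running the same two-level argument as for Algorithm~\ref{alg:priority_proportional_clearing}, specialised to the case with a single priority group per bank. Nonnegative endowments are what make the simplification possible.

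First, we fix the parameters $(\alp^r,\bet^r)$ of iteration~$r$ and show that \ref{prob:max_payment_prop} has a coordinate-wise maximum feasible solution, which is therefore its unique optimum~$p^r$. Feasibility is immediate: $p=0$ (all $\lambda_i=0$) satisfies (Budget), because $\alp_i e_i\ge 0$; this is exactly where nonnegativity of the endowments is used. To get the maximum, take two feasible vectors $p$ and $p'$ and let $p''=\max\{p,p'\}$ coordinate-wise. Under (Prop) each row of $p''$ equals either the row of $p$ or the row of $p'$, since it corresponds to $\max\{\lambda_i,\lambda_i'\}$. Incoming payments only grow, so (Budget) still holds for every $i$, and the feasible set is closed under coordinate-wise maxima. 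The feasible set is also compact, so it has a greatest element, and that element is the unique maximiser of the linear objective with positive coefficients. Equivalently, with nonnegative endowments the subsidies of \ref{prob:min_subsidy} are all zero, so \ref{prob:min_payment} is vacuous and the fixed-point/limit-point argument of Lemma~\ref{lem:zero} reduces to the above. There are no critical priority groups to update, since $\gamma_i\equiv 1$.

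Second, we carry over the induction from the correctness proof of Algorithm~\ref{alg:priority_proportional_clearing}. Let $p$ be any proportional clearing vector, and set $(\alp_i,\bet_i)=(\alpha_i,\beta_i)$ for banks that default under $p$ and $(1,1)$ otherwise. We show by induction on $r$ that $p\le p^r$ and $(\alp,\bet)\le(\alp^r,\bet^r)$.

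\begin{itemize}
\item \textbf{Base case.} This is trivial for the parameters. For the payments, $p$ is feasible for the LP with parameters $(\alp^r,\bet^r)$, because $\alp^r_ie_i+\sum_j\bet^r_ip_{ji}\ge \alp_ie_i+\sum_j\bet_ip_{ji}\ge\sum_jp_{ij}$. This uses $e_i\ge0$ and $p\ge0$ to make the expression monotone in $(\alp,\bet)$. Since $p^r$ is the greatest feasible point, $p\le p^r$.
\item \textbf{Inductive step.} If bank~$i$ is switched to default in round~$r$, then $L_i>\alp^r_ie_i+\sum_j\bet^r_ip^r_{ji}\ge \alp_ie_i+\sum_j\bet_ip_{ji}$. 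So $i$ also defaults under~$p$, and the parameter inequality is preserved. Feasibility of $p$ in round $r+1$ then gives $p\le p^{r+1}$.
\end{itemize}

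Third, we check that the returned $p^r$ is a clearing vector, which together with the bound above makes it the coordinate-wise maximum one. On termination, every bank with parameters $(1,1)$ has $e_i+\sum_jp^r_{ji}\ge L_i$. Maximality of $p^r$ then forces $\lambda_i=1$; otherwise we could raise $\lambda_i$, since that only increases others' budgets. So such a bank pays in full, as the definition requires. A bank marked as defaulting satisfies $\alpha_ie_i+\sum_j\beta_ip^r_{ji}<L_i$, and we must show both that its (Budget) is tight and that it genuinely defaults.

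\begin{itemize}
\item \textbf{Tightness.} If (Budget) were slack for such a bank, then $\lambda_i<1$, and increasing $\lambda_i$ slightly keeps all constraints feasible, contradicting maximality. Hence its payment is exactly $\frac{L_{ij}}{L_i}E_i^\deft(p^r)$.
\item \textbf{Genuine default.} We also need $e_i+\sum_jp^r_{ji}<L_i$. This comes from monotonicity: $p^r\le p^{r'}$ for the earlier round $r'$ in which $i$ was switched, because the parameters only decrease and so the feasible set shrinks. Therefore $e_i+\sum_j p^r_{ji}\le e_i+\sum_j p^{r'}_{ji}<L_i$.
\end{itemize}

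The main obstacle is this last consistency check. The switching test uses the defaulting parameters $(\alpha_i,\beta_i)$, whereas solvency must be judged with the full income $E_i^\nondeft$, so the comparison has to be stated precisely. Termination is immediate: each bank switches at most once, giving at most $|N|+1$ LP solves.
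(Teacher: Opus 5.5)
Your proof is correct, but it takes a different route from the paper's.

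The paper proves this theorem purely by reduction to Algorithm~\ref{alg:priority_proportional_clearing}. With a single priority group, a positive subsidy $z_i^r$ would force $\alp_i^r e_i<0$, so nonnegative endowments make all subsidies vanish. Then \ref{prob:min_payment} and the extra constraints of \ref{prob:max_payment} become vacuous, and \ref{prob:max_payment} collapses to \ref{prob:max_payment_prop}. Correctness, maximality and termination are then inherited from the general priority-proportional theorem, and through it from Lemmas~\ref{lem:sol}--\ref{lem:zero}.

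You instead give a self-contained argument. The feasible set of \ref{prob:max_payment_prop} is a compact polytope that contains $0$ and is closed under coordinate-wise maxima. So it has a greatest element, which is the unique optimum. Monotonicity of the (Budget) right-hand side in $(\alp,\bet)$, which uses $e_i\ge 0$, then gives two things: the bound $p\le p^r$ for every clearing vector $p$, and $p^{r}\le p^{r'}$ across rounds $r\ge r'$. Finally, you verify the clearing conditions of the output directly.

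What your route buys:
\begin{itemize}
\item It avoids the iterative-sequence/limit-point machinery and the three-LP pipeline entirely.
\item It checks explicitly that a bank carrying defaulting parameters at termination genuinely defaults, via $E_i^{\nondeft}(p^r)\le E_i^{\nondeft}(p^{r'})<L_i$. The reduction leaves this implicit.
\end{itemize}
What the paper's route buys is brevity, at the cost of depending on the full general theorem.

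A small wording point: your closing remark says the switching test uses the defaulting parameters. In fact, when bank $i$ is switched its current parameters are still $(1,1)$, so the test is exactly $E_i^{\nondeft}(p^{r'})<L_i$. That is precisely what your monotonicity argument uses.
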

\begin{proof}
    Since all banks $i$ have a single priority group, if we have $z_i^r>0$ in Algorithm~\ref{alg:priority_proportional_clearing} then $\gamma_i^r$ will decrease to $0$ and we can conclude that $\alp_i^r e_i \le \alp_i^r e_i +\sum_{j\in N}\bet_i^rp_{ji}^r<0$, implying $e_i<0$ which contradicts our initial assumption for this section. Hence, we can assume that we need no subsidies at all. Thus, computing $p^r$ simplifies to solving \ref{prob:max_payment_prop}. Indeed, it has a solution if and only if \ref{prob:min_subsidy} has a solution with $z^r=0^N$, (setting $k_i=1 \; \forall i\in N$). Thus, we can assume $z^r=0^N$, so \ref{prob:max_payment} becomes \ref{prob:max_payment_prop}, since the extra constraints are never needed (they are only needed for $i$ with $z_i^r>0$), also implying that \ref{prob:max_payment_prop} is always feasible.
\end{proof}

\section{Omitted proofs from Section~\ref{sec:np-hardness}}

Before providing the proofs for each of our intractability results, let us first state the $\NP$-hard computational problems that we use in our reductions, namely, \maxsat\ and \partition.

\noindent
\begin{center}
\begin{minipage}{\textwidth}
\fbox{
    \begin{tabular}{@{\hspace{2pt}}l@{\hspace{4pt}}p{0.84\textwidth}@{\hspace{2pt}}}%
        \multicolumn{2}{@{\hspace{2pt}}p{0.87\textwidth}}{
        \maxsat
        } \\
        {\bf Input:} &
            A \twosat\ formula $\varphi$ with clauses $C_1,\dots,C_m$ and variables $X_1,\dots ,X_n$ with a number~$K$. 
        \\
        {\bf Question:} &
            Is there a truth assignment $f:[n]\to \{ T,F\}$, such that at least $K$ clauses are satisfied?
    \end{tabular}
    }
\end{minipage}
\end{center}
\medskip
\maxsat\ is known to be $\NP$-hard  even if each variable appears at most two times (but at least once) in positive and at most two times (but at least once) in negative form~\cite{berman2002some}.

The \partition\ problem is also a classic problem known to be $\NP$-hard~\cite{Karp1972}.
\noindent
\begin{center}
\begin{minipage}{\textwidth}
\fbox{
    \begin{tabular}{@{\hspace{2pt}}l@{\hspace{4pt}}p{0.84\textwidth}@{\hspace{2pt}}}%
        \multicolumn{2}{@{\hspace{2pt}}p{0.87\textwidth}}{
        \partition
        } \\
        {\bf Input:} &
            A set of integers $a_1,\dots, a_n$. 
        \\
        {\bf Question:} &
            Is there a subset $S\subseteq [n]$, such that $\sum_{i\in S}a_i = \sum_{i \in [n] \setminus S} a_i$?
    \end{tabular}
    }
\end{minipage}
\end{center}
\medskip

\subsection{Proof of Theorem \ref{thm:nph:findcomp}}
\label{app:prf-thmnphfindcomp}
\thmnphfindcomp*
\begin{proof}
We prove the result in two steps.

\medskip
\noindent
\textbf{Reduction for~(a) and~(b).}
    To prove the hardness for the two problems in~(a) and~(b), we provide a single reduction from \maxsat. Let $\varphi = C_1\wedge \dots \wedge C_m$ be a \textsc{2-sat} formula with variables $X_1,\dots, X_n$ and $K$ be the parameter.

    Let $Q=2m+6n+1$.
    We create a financial network with no default costs. For each variable $X_i$, we have banks $i,i'$ and banks $T_i^0,T_i^1,\dots ,T_i^Q$ as well as $F_i^0,F_i^1,\dots , F_i^Q$. For each clause $C_j$ we create two banks $c_j,c_j'$.
    
    For each $i\in [n]$, bank $i$ has an initial endowment of $e_i=3$. All other banks have initial endowment 0. The liabilities are as follows:
        \begin{itemize}
            \item $L_{i,i'}=7$ for $i\in [n]$.
            \item $L_{i,T_i^0}=L_{T_i^0,i}=L_{i,F_i^0}=L_{F_i^0,i}=16$ for $i\in [n]$.
            \item $L_{T_i^0,T_i^1}=L_{F_i^0,F_i^1}=2^{100}$.
            \item $L_{T_i^{l},T_i^{l+1}}=L_{F_i^{l},F_i^{l+1}}=2$ for $i\in [n],l\in [Q-1]$.
            \item $L_{T_i^Q,c_j}=1$ whenever $X_i\in C_j$, otherwise 0.
            \item $L_{F_i^Q,c_j}=1$ whenever $\overline{X}_i\in C_j$, otherwise 0.
            \item $L_{c_j,c_j'}=1$ for $j\in [m]$.
        \end{itemize}
    
    If $X_i$ (resp. $\overline{X}_i$) appears in only one clause then we set $L_{T_i^Q,c_j}=2$ (resp. $L_{F_i^Q,c_j}=2$) instead.
    
    We claim that there is a compression, such that at least $nQ+K+m+n$ banks remain solvent, if and only if there is a truth assignment that satisfies at least $K$ clauses. Furthermore, if there is such a compression, then it is a bilateral compression. The last statement follows from the fact that the only cycles in the digraph whose arcs are the nonzero liabilities are the cycles between $i$ and $T_i^0$ or $i$ and $F_i^0$ for $i\in [n]$.
    
    Suppose that it is possible that at least $nQ+K+m+n$ banks remain solvent in the maximal proportional clearing after a compression $C$. We first show that for any $i\in [n]$, either all banks $T_i^l$, $l\in Q$ or all banks $F_i^l$, $l\in [Q]$ must remain solvent. 

    Since $p_{ji}\le L_{ji}$ for any clearing and any $j$ and $e_i=3$, the total money outflow from $i$ towards $\{ T_i^0,F_i^0\}$ is bounded by $3\cdot\frac{32}{39} +16\cdot 2 \cdot\frac{16}{2^{100}+16}<2.5$, using proportionality and that $\frac{7}{32}\cdot 3$ must arrive at $i'$. Hence, the banks $T_i^1,F_i^1$ receive less than $2.5$ together. Hence, at most one of them can get enough to pay its liabilities fully and when one cannot, then neither can any bank in its chain ($T_i^1,\dots, T_i^Q$ or $F_i^1,\dots, F_i^Q$) until $F_i^{Q}$ or $T_i^{Q}$. Since $Q$ is larger than the number of banks not of the form $T_i^l,F_i^l$ for some $i\in [n],l\in [Q]$, it follows that saving $nQ+K+m+n$ banks is only possible, if all banks are solvent in one of the chains for each $i\in [n]$. 
    
    It is easy to see that the banks $i,T_i^0,F_i^0$ always default, while the banks $i',c_j'$ never default and there are $n+m$ of the latter. Hence, to have $nQ+K+m+n$ banks not defaulting, we must have at least $K$ such $c_j$ banks. 
    
    As $T_i^1,F_i^1$ get less than $2.5$ together, but one of them gets at least $2$, we get that the other gets strictly less than $0.5$. Hence, for a bank $c_j$ to remain solvent, at least one of its two debtor must be an agent $T_i^Q$ or $F_i^Q$ that has at least 2.
    
    Hence, we can define a truth assignment, such that $X_i$ is True, if $T_i^1$ is solvent and False, if $F_i^1$ is solvent and this will satisfy at least $K$ clauses.
    
    In the other direction, take a truth assignment that satisfies at least $K$ clauses. For each $i\in [n]$, if $X_i$ is true, then compress the cycle $\{(i,F_i^0),(F_i^0,i)\}$ completely, and otherwise compress the cycle $\{ (i,T_i^0),(T_i^0,i)\}$ completely. 
    
    Then, take an $i\in [n]$ and say $X_i$ was true. Then, $T_i^1$ will get at least $\frac{16}{23}\cdot 3\cdot \frac{2^{100}}{2^{100}+16}>2$, else $F_i^1$ will get at least $2$ in the maximal proportional clearing. It is straightforward to verify that each bank $c_j$ such that  $C_j$ has a true literal will remain solvent and so will all banks in one of the chains for each $i\in [n]$ as well as the banks $c_j'$ and $i'$. Hence, it gives a compression with at least $nQ+k+n+m$ banks solvent as desired. 

\medskip
\noindent
\textbf{Reduction for~(c).}
Next, we prove the hardness of 
    \FindCompCycle. 
      Again, we reduce from \maxsat. Let $\varphi = C_1\wedge \dots \wedge C_m$ be a \textsc{2-sat} formula with variables $X_1,\dots, X_n$ and $K$ be the parameter.

        Let $Q=2m+6n+1$.
        We create a financial network with no default costs as follows. For each variable $X_i$, we have banks $i,i'$ and banks $T_i^0,T_i^1,\dots ,T_i^Q$ as well as $F_i^0,F_i^1,\dots , F_i^Q$. For each clause $C_j$ we create two banks $c_j,c_j'$.
    
        For each $i\in [n]$, bank $i$ has an initial endowment of 3. All other banks have initial endowment 0. The liabilities are as follows:
        \begin{itemize}
            \item $L_{i,i'}=7$ for $i\in [n]$.
            \item $L_{i,T_i^0}=L_{T_i^0,i+1}=L_{i,F_i^0}=L_{F_i^0,i+1}=16$ for $i\in [n]$, where $n+1:=1$.
            \item $L_{T_i^0,T_i^1}=L_{F_i^0,F_i^1}=2^{100}$.
            \item $L_{T_i^{l},T_i^{l+1}}=L_{F_i^{l},F_i^{l+1}}=2$ for $i\in [n],l\in [Q-1]$.
            \item $L_{T_i^Q,c_j}=1$ whenever $X_i\in C_j$, otherwise 0.
            \item $L_{F_i^Q,c_j}=1$ whenever $\overline{X}_i\in C_j$, otherwise 0.
            \item $L_{c_j,c_j'}=1$ for $j\in [m]$.
        \end{itemize}
    
    If $X_i$ (resp. $\overline{X}_i$) appears in only one clause then we set $L_{T_i^Q,c_j}=2$ (resp. $L_{F_i^Q,c_j}=2$) instead.
    
    We claim that there is a compression cycle, such that at least $nQ+K+m+n$ banks remain solvent, if and only if there is a truth assignment that satisfies at least $K$ clauses.
    
    Suppose that it is possible to to have at least $nQ+K+m+n$ banks in the maximal proportional clearing remain solvent after a compression $C$ involving a single cycle. We fist show that for any $i\in [n]$, either all banks $T_i^l$, $l\in [Q]$ or all banks $F_i^l$, $l\in [Q]$ must be solvent.

     Since $p_{ij}\le L_{ij}$ for any clearing and $e_i=3$, the total money outflow from $i$ towards $\{ T_i^0,F_i^0\}$ is bounded by $3\cdot\frac{32}{39} +16\cdot 2 \cdot\frac{16}{2^{100}+16}<2.5$, using proportionality and that $\frac{7}{32}\cdot 3$ must arrive at $i'$. Hence, the banks $T_i^1,F_i^1$ receive less than $2.5$ together.
  
Since $Q$ is larger than the number of banks not of the form $T_i^l,F_i^l$ for some $i\in [n],l\in [Q]$, it follows that saving $nQ+K+m+n$ banks is only possible, if all banks remain solvent in one of the chains for each $i\in [n]$. 
    
    It is easy to see that the banks $i,T_i^0,F_i^0$ always default, while the banks $i',c_j'$ never default and there are $n+m$ of the latter. Hence, to have $nQ+K+m+n$ banks not defaulting, we must at least $K$ of the $c_j$ banks remaining solvent. 
    
    As $T_i^1,F_i^1$ get strictly less than $2.5$ together, but one of them gets at least $2$, we get that the other gets strictly less than $0.5$. Hence, for a bank $c_j$ to remain solvent, at least one of its two debtor must be an agent $T_i^Q$ or $F_i^Q$ that has at least 2.
    
    Hence, we can define a truth assignment, such that $X_i$ is True, if $T_i^1$ is solvent and False, if $F_i^1$ is solvent and this will satisfy at least $K$ clauses.
    
    In the other direction, take a truth assignment that satisfies at least $K$ clauses. We create a compression cycle $C$ as follows. For $i\in [n]$, if $X_i$ is true, then we add the arcs $\{(i,F_i^0),(F_i^0,i+1)\}$, and otherwise we add $\{(i,T_i^0),(T_i^0,i+1)\}$ and then compress the cycle completely, i.e., set $C_a=10$ for all $a\in C$.
    
    Then, take an $i\in [n]$ and say $X_i$ was true. Then, $T_i^1$ will get at least $\frac{16}{23}\cdot 3>2$, else $F_i^1$ will get at least $2$ in the maximal proportional clearing. It is straightforward to verify that each bank $c_j$ such that  $C_j$ has a true literal will remain solvent and so will all banks in one of the chains for each $i\in [n]$ as well as the banks $c_j'$ and $i'$. Hence, it gives a compression with at least $nQ+k+n+m$ banks solvent in $M-C$ as desired. 
\end{proof}

\subsection{Proof of Theorem~\ref{thm:nph:all-but-three}}
\label{app:prf-thmnphallbutthree}
\thmnphallbutthree*
\begin{proof}
We reduce from \partition. Let $I=\{ a_1,\dots, a_n\}$ be an instance of \partition. Clearly, we can assume that $\sum_{i\in [n]}a_i$ is even, else there cannot be any solution. We create an instance $I'$ of \FindComp\ and an instance $I''$ of \FindCompBank\ as follows.
For both, we have no default costs.

\begin{itemize}
    \item For each $i\in [n]$, we create banks $\hat{x}_i,x_i,y_i,x_i^*$. Here, $e_{x_i}=e_{y_i}=a_i$ and $e_{\hat{x}_i}=e_{x^*_i}=0$.
    \item We create banks $b_S,b_{-S},b',b^*$. We have $e_{b_S}=e_{b_{-S}}=e_{b'}=e_{b^*}=0$.
\end{itemize}

Let $R=4n\sum_{i\in [n]}a_i^2+1$.
The liabilities are as follows. 
\begin{itemize}
    \item $L_{x_i,x_i^*}=L_{y_i,x_i^*}=L{x_i^*,\hat{x}_i}=L_{\hat{x_i},y_i}=L_{\hat{x}_i,x_i}=R$ for $i\in [n]$.
    \item $L_{x_i^*,b^*}=4R^2n+1$ and $L_{x_i,b_S}=L_{y_i,b_{-S}}=a_i+\frac{1}{2n}$ for $i\in [n]$.
    \item $L_{b_S,b'}=L_{b_{-S},b'}=\frac{\sum_{i\in [n]}a_i}{2}$, and $L_{b',b^*}=\sum_{i\in [n]}a_i$.
\end{itemize}
For~$I'$, we set the maximum number of defaulting banks as $k=|N|-3$.
For $I''$, we let the fixed bank be $b'$.

\begin{claim}
\label{clm:NP1}
    If $I''$ is a yes-instance for \FindCompBank, then $I$ is a yes-instance for \partition.
\end{claim}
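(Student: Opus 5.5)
The plan is to read a partition directly off the compression values on the two small cycles attached to each index~$i$. First I record the cycle structure. Apart from arcs into sinks, the only arcs are $x_i\to x_i^*$, $y_i\to x_i^*$, $x_i^*\to\hat{x}_i$, $\hat{x}_i\to x_i$, $\hat{x}_i\to y_i$, together with arcs into $b_S,b_{-S},b',b^*$ and the arcs from $b_S,b_{-S}$ to $b'$ and from $b'$ to~$b^*$. The latter arcs are acyclic, ending at~$b^*$.

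\textbf{Cycle structure.} Every liability cycle is therefore $x_i\to x_i^*\to\hat{x}_i\to x_i$ or $y_i\to x_i^*\to\hat{x}_i\to y_i$. A compression~$C$ is thus described by values $c_i,d_i\ge 0$ on these two cycles. Since they share the arc $(x_i^*,\hat{x}_i)$ of capacity~$R$, we have $c_i+d_i\le R$. Consequently, for each~$i$, at least one of $R-c_i$ and $R-d_i$ is at least~$R/2$.

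\textbf{What solvency of $b'$ forces.} Suppose $b'$ is solvent in $M-C$. Since $e_{b'}=0$ and there are no default costs, $b'$ must receive $\sum_i a_i$. This forces $b_S$ and $b_{-S}$ to pay fully, so $b_S$ receives at least $\frac{1}{2}\sum_i a_i$ from the banks~$x_i$ and $b_{-S}$ receives at least $\frac{1}{2}\sum_i a_i$ from the banks~$y_i$.

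\textbf{Bounding the leakage back to $x_i$ and $y_i$.} Bank $x_i^*$ has total income at most $2R$ and owes $4R^2n+1$ to~$b^*$, but at most $R$ to~$\hat{x}_i$. Under proportional payment, $\hat{x}_i$ therefore receives at most $2R\cdot R/(4R^2n+1)<\frac{1}{2n}$. Hence the total income of $x_i$ is less than $a_i+\frac{1}{2n}$, and the same holds for~$y_i$. So $x_i$ always defaults (its liability to $b_S$ alone is $a_i+\frac{1}{2n}$) and pays proportionally. Its payment to $b_S$ is at most
\[
\Bigl(a_i+\tfrac{1}{2n}\Bigr)\cdot\frac{a_i+\frac{1}{2n}}{a_i+\frac{1}{2n}+R-c_i}.
\]
There are two cases:
\begin{itemize}
\item If $R-c_i\ge R/2$, this payment is at most $2(a_i+1)^2/R$. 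Summed over all such~$i$, these payments total less than $\frac{1}{2}$ by the choice $R=4n\sum_i a_i^2+1$; I expect to tweak this constant slightly, since this is the one place where care is needed.
\item Otherwise the payment is less than $a_i+\frac{1}{2n}$.
\end{itemize}

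\textbf{Extracting the partition.} Let $S=\{i: R-c_i<R/2\}$. By the cycle-structure step, every $i\notin S$ satisfies $R-c_i\ge R/2$, while every $i\in S$ satisfies $R-d_i\ge R/2$. Applying the bound to the $x$-side gives
\[
\tfrac{1}{2}\sum_j a_j\;\le\;\sum_{i\in S}a_i+\tfrac{|S|}{2n}+\tfrac{1}{2}\;<\;\sum_{i\in S}a_i+1.
\]
Symmetrically, using the $y_i$ banks and $b_{-S}$,
\[
\tfrac{1}{2}\sum_j a_j\;<\;\sum_{i\notin S}a_i+1.
\]
The $a_i$ are integers and $\sum_j a_j$ is even, so $\frac{1}{2}\sum_j a_j$ is an integer. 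Both strict inequalities then upgrade to $\sum_{i\in S}a_i\ge\frac{1}{2}\sum_j a_j$ and $\sum_{i\notin S}a_i\ge\frac{1}{2}\sum_j a_j$. Since the two sums add up to $\sum_j a_j$, both must be equal to $\frac{1}{2}\sum_j a_j$, and $S$ is a solution of~$I$.

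\textbf{Main obstacle.} The delicate step is the leakage estimate: the two tiny error terms (the money recirculating through $\hat{x}_i$ and the payments from the ``wrong'' side) must together stay below~$1$, so that integrality absorbs them. I would verify the inequality $2R^2/(4R^2n+1)<\frac{1}{2n}$ and the bound on $\sum_i 2(a_i+1)^2/R$ explicitly, adjusting~$R$ if necessary.
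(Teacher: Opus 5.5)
Your argument is the paper's argument. The leakage through $\hat{x}_i$ is below $\frac{1}{2n}$. The shared arc $(x_i^*,\hat{x}_i)$ forces one of $x_i,y_i$ to keep residual liability at least $R/2$ towards $x_i^*$, so that bank's payment to $b_S$ or $b_{-S}$ is negligible. Integrality of $\frac12\sum_j a_j$ then absorbs the errors. Your set $S=\{i: R-c_i<R/2\}$ is exactly the paper's $\{i: y_i\in B\}$.

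The one step that fails as written is the estimate $\sum_i 2(a_i+1)^2/R<\frac12$. For $n=2$, $a_1=a_2=2$ one gets $R=65$, and if $S=\emptyset$ your bounds sum to $36/65>\frac12$. Your proposed remedy of adjusting $R$ is not available, because the claim concerns the fixed instance $I''$ with $R=4n\sum_i a_i^2+1$.

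The repair is simply not to round $a_i+\frac{1}{2n}$ up to $a_i+1$. The wrong-side payments then total at most
\[
\frac{2}{R}\sum_{i}\Bigl(a_i+\frac{1}{2n}\Bigr)^2=\frac{2}{R}\Bigl(\sum_i a_i^2+\frac{1}{n}\sum_i a_i+\frac{1}{4n}\Bigr).
\]
This is below $\frac12$ for $n\ge 2$, because $4(n-1)\sum_i a_i^2\ge\frac{4}{n}\sum_i a_i$ for nonnegative integers $a_i$.

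The case $n=1$ is checked directly. There the low-side bank pays less than $\frac{a_1^2+a_1+1/4}{2a_1^2+a_1+1}<1\le\frac{a_1}{2}$ whenever $a_1\ge 2$. With this sharper estimate, the rest of your proof goes through unchanged.
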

\begin{claimproof}
    Suppose that $b'$ does not default in $M-C$ after a compression $C$. 

First, we claim that for each $i\in [n]$ at least one of $p_{x_i,b_S}<\frac{1}{n}$ or $p_{y_i,b_{-S}}<\frac{1}{n}$ holds. For this, consider the bank $x_i^*$. It receives a payment of at most $2R$. Of this payment, at most $\frac{R}{R+4R^2n+1}$ fraction will be paid to $\hat{x}_i$ (with equality only if compression~$C$ has zero value on~$(x_i^*,\hat{x}_i)$). Hence, the payment that $\hat{x}_i$ receives is at most $\frac{1}{2n}$, irrespective of the compression $C$. 

Since any compression cycle that contains either of the arcs $(x_i,x_i^*),(y_i,x_i^*)$ also contains the arc~$(x_i^*,\hat{x}_i)$, we get that at least one of the liability values $L_{x_i,x^*_i}-C_{x_i,x^*_i}$ and $L_{y_i,x^*_i}-C_{y_i,x^*_i}$ obtained after applying compression~$C$ is at least $\frac{R}{2}$. 
Suppose by symmetry that $L_{x_i,x_i^*}-C_{x_i,x_i^*} \geq \frac{R}{2}$.

Note that $x_i$ and $y_i$ only receive payment from $\hat{x}_i$, so the total outgoing payment of~$x_i$ is at most $\frac{1}{2n}+a_i$.
Then, the payment $p_{x_i,b_S}$ is at most $(a_i+\frac{1}{2n})\cdot \frac{a_i}{a_i+R/2}<\frac{1}{2n}$ by the choice of $R$.

Hence, we obtain that for each $i\in [n]$, we can choose one $x_i$ or $y_i$ such that the total payment for all of these $n$ banks towards $\{ b_S,b_{-S}\}$ is strictly less than $\frac{1}{2}$. Let the set of these banks be $B$.

Since $b_S,b_{-S}$ and $b'$ cannot be included in any compression cycles, the only way that $b'$ does not default is if $p_{b_S,b'}=p_{b_{-S},b'}=\frac{\sum_{i\in [n]}a_i}{2}$.
Therefore, setting $S=\{ i\in [n]\mid y_i\in B\}$, we have that \[\sum_{i\in S}(a_i+\frac{1}{2n})>\frac{\sum_{i\in [n]}a_i}{2}-\frac{1}{2}\] and 
\[\sum_{i\in [n] \setminus S}(a_i+\frac{1}{2n})>\frac{\sum_{i\in [n]}a_i}{2}-\frac{1}{2}.\] Since each $a_i$ is integer, this implies that $\sum_{i\in S}a_i=\sum_{i\in [n] \setminus S}a_i=\frac{\sum_{i\in [n]}a_i}{2}$, as desired.
\end{claimproof}

\begin{claim}
    If $I'$ is a yes-instance for \FindComp, then $I''$ is a yes-instance for \FindCompBank.
\end{claim}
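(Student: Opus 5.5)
The plan is to show that whenever some compression $C$ leaves at least three banks solvent in $M-C$, the bank $b'$ must be one of them. Then $C$ itself witnesses that $I''$ is a yes-instance. So I fix a compression $C$ and a proportional clearing vector $p$ for $M-C$ with at most $|N|-3$ defaulting banks, and argue by listing which banks can possibly be solvent.

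The first step is to identify the banks that always default, regardless of $C$.
\begin{itemize}
\item Each $x_i^*$ has outgoing liability at least $L_{x_i^*,b^*}=4R^2n+1$. No compression cycle uses the arc $(x_i^*,b^*)$, since $b^*$ has no outgoing arcs. The incoming liabilities of $x_i^*$ total at most $2R<4R^2n+1$, and it has zero endowment, so it defaults.
\item Each $\hat x_i$ receives at most $\frac{1}{2n}$, by the bound already computed in the proof of Claim~\ref{clm:NP1}. Its outgoing liabilities, after compression, sum to at least $R-C_{x_i^*,\hat x_i}$. I need this to be positive. If $C_{x_i^*,\hat x_i}=R$, then the circulation through $\hat x_i$ forces the compression on $(\hat x_i,x_i)$ and $(\hat x_i,y_i)$ to total $R$ as well. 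Those arcs originally carry $2R$, so $\hat x_i$ still owes at least $R>\frac{1}{2n}$. In all cases $\hat x_i$ defaults.
\item Each $x_i$ has total assets at most $a_i+\frac{1}{2n}$. It owes $a_i+\frac{1}{2n}$ to $b_S$, an arc that no compression can touch because $b_S$ is not on any cycle, plus a positive remainder on $(x_i,x_i^*)$ unless that arc is fully compressed. So $x_i$ can be solvent only if its arc to $x_i^*$ is fully compressed and it receives essentially the full $\frac{1}{2n}$. The same holds symmetrically for $y_i$.
\end{itemize}

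The second step handles the remaining banks. Bank $b^*$ has no liabilities, so it is always solvent. That leaves $b_S$, $b_{-S}$, $b'$, and possibly some of the $x_i,y_i$. I then argue:
\begin{itemize}
\item If $b'$ defaults, then $b_S$ and $b_{-S}$ cannot both be solvent. Otherwise $b'$ would receive $\sum_i a_i$ and be solvent.
\item At most one bank from each pair $\{x_i,y_i\}$ can be solvent, since $\hat x_i$ pays less than $\frac{1}{2n}$ in total to the two of them. Even that requires a very specific compression.
\end{itemize}
The cleanest way I see to rule out these $x_i/y_i$ solvency cases is to exploit the factor $\frac{R}{R+4R^2n+1}$ in $\hat x_i$'s income. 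That income is then strictly below what an $x_i$ or $y_i$ needs to pay its obligation to $b_S$ or $b_{-S}$. The reason is that $x_i$ has only $a_i$ of endowment toward the $a_i+\frac{1}{2n}$ it owes, while its incoming payment is at most $2R\cdot\frac{R}{R+4R^2n+1}<\frac{1}{2n}$. So every $x_i$ and $y_i$ defaults.

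Combining the two steps, the solvent banks lie among $b^*$, $b_S$, $b_{-S}$, $b'$. Having at least three solvent banks therefore forces either $b'$ to be solvent, or both $b_S$ and $b_{-S}$ to be solvent, which in turn makes $b'$ solvent. In both cases $b'$ survives, and hence $I''$ is a yes-instance.

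I expect the main obstacle to be the strict inequality showing that $x_i$ and $y_i$ never become solvent. That bound on $\hat x_i$'s income must hold uniformly over every compression $C$, including compressions that fully cancel arcs around $\hat x_i$. I would verify it using the maximality of the clearing vector and the fact that $x_i^*$'s outflow toward $b^*$ cannot be compressed.
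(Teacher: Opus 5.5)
Your proof is correct and follows essentially the paper's route: every $x_i^*,\hat x_i,x_i,y_i$ defaults and $b^*$ is always solvent, so at least two of $b_S,b_{-S},b'$ are solvent, which forces $b'$ to be solvent. Your explicit strict bound $2R\cdot\frac{R}{R+4R^2n+1}<\frac{1}{2n}$ is exactly what makes $x_i$ and $y_i$ default (the paper only says ``at most''), and it holds for every clearing vector, so you do not need maximality. One small fix: for $\hat x_i$, positivity of $R-C_{x_i^*,\hat x_i}$ is not enough; instead apply your circulation argument for every value of $C_{x_i^*,\hat x_i}$, which gives outgoing liabilities $2R-C_{x_i^*,\hat x_i}\ge R>\frac{1}{2n}$.
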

\begin{claimproof}
    If $I'$ is a yes-instance for \FindComp, then at least $3$ banks must not default in the market $M-C$ after some compression $C$. It is clear that $x_i^*,x_i,y_i$ and $\hat{x}_i$ default for all $i\in [n]$ as they have more outgoing liabilities than their total possible capital after collecting their payments from others: this is obvious for $x_i^*$ and~$\hat{x}_i$, for~$x_i$ and~$y_i$ follows from the fact that $\hat{x}_i$ receives at most~$\frac{1}{2n}$ incoming payment irrespective of the compression applied, as we have seen in the proof of Claim~\ref{clm:NP1}. Hence, for at least three banks to be solvent under some proportional clearing vector for~$M-C$, it must be the case that besides~$b^*$, at least two banks from~$\{ b_S,b_{-S},b'\}$ do not default. Since $b'$ only remains solvent if and only if both $b_S$ and $b_{-S}$ do too, this is equivalent to saying that $b'$ remains solvent, showing that $I''$ is a yes-instance.
\end{claimproof}

\begin{claim}
\label{clm:part-to-clearing}
    If $I$ is a yes-instance for \partition, then $I'$ is a yes-instance for \FindComp.
\end{claim}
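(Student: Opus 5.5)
Given a solution $S\subseteq[n]$ of $I$ with $\sum_{i\in S}a_i=\sum_{i\notin S}a_i=\frac{1}{2}\sum_{i\in[n]}a_i$, I would build a compression $C$ such that $b_S$, $b_{-S}$, $b'$ and $b^*$ are all solvent in $M-C$. That is at least three solvent banks, so at most $|N|-3$ defaults. The construction uses the $4$-cycles through $\hat{x}_i$. For each $i\in[n]$ there are two cycles:
\[
x_i\to x_i^*\to\hat{x}_i\to x_i \qquad\text{and}\qquad y_i\to x_i^*\to\hat{x}_i\to y_i .
\]
Each arc on these cycles has liability $R$. If $i\in S$, I compress the cycle through $y_i$ fully (value $R$); if $i\notin S$, I compress the cycle through $x_i$ fully. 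The cycles for different $i$ are disjoint, so $C$ is a valid compression.

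After the compression, take $i\in S$. Bank $y_i$ now has exactly one liability, $L_{y_i,b_{-S}}=a_i+\frac{1}{2n}$. Its resources are the endowment $a_i$ plus whatever it receives from $\hat{x}_i$, but its arc from $\hat{x}_i$ was compressed away. So $y_i$ defaults and pays its whole endowment $a_i$, plus nothing further, to $b_{-S}$. Symmetrically, for $i\notin S$ the bank $x_i$ pays at least $a_i$ to $b_S$. Since all payments are nonnegative, the total income of $b_S$ is at least
\[
\sum_{i\notin S}a_i=\tfrac{1}{2}\sum_{i\in[n]}a_i=L_{b_S,b'} .
\]
The same bound holds for $b_{-S}$. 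This holds under any proportional clearing vector, because it uses only the endowments of the isolated banks.

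Hence $b_S$ and $b_{-S}$ pay $b'$ in full. Then $b'$ receives $\sum_{i\in[n]}a_i=L_{b',b^*}$ and is solvent. Finally, $b^*$ has no outgoing liabilities and nonnegative endowment, so it never defaults. These four banks are therefore solvent under a proportional clearing vector of $M-C$, for example the maximal one from Algorithm~\ref{alg:proportional_clearing}. That gives at most $|N|-4\le k$ defaults.

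The main point needing care is confirming that the compressed arcs leave $y_i$ (respectively $x_i$) with only its endowment. It should also be noted, though it is not needed, that the partner bank $x_i$ (respectively $y_i$) may receive some small positive amount. The key observation is that the lower bound $a_i$ on the contribution of the isolated bank does not depend on the rest of the clearing. Nonnegativity of all other payments then closes the argument.
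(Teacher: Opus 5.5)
Your proof is correct and is essentially the paper's argument. For each $i$ you fully compress one of the two $3$-cycles through the shared arc $(x_i^*,\hat{x}_i)$ (you wrote ``4-cycles'', but the cycles you list have three arcs), so the isolated bank among $x_i,y_i$ forwards exactly $a_i$ to $b_S$ or $b_{-S}$; you only swap which cycle is compressed for $i\in S$, which is harmless by the symmetry of the partition.

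Your ``at least'' bound is more careful than the paper's statement that $b_S$ and $b_{-S}$ receive exactly $\frac{1}{2}\sum_{i\in[n]}a_i$, since the non-isolated partner still pays a small positive share of its endowment to them. Contrary to your side remark, though, that partner receives nothing itself, because $\hat{x}_i$ has no income once $(x_i^*,\hat{x}_i)$ is compressed away.
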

\begin{claimproof}
    Let $S \subseteq [n]$ be a set of indices such that $\sum_{i\in S}a_i=\sum_{i\in [n] \setminus S}a_i$. We define a compression $C$ as follows. For each $i\in S$, we let $C_{x_i^*,\hat{x}_i}=C_{\hat{x}_i,x_i}=C_{x_i,x_i^*}=R$, and for each $i \in [n] \setminus S$, we let $C_{x_i^*,\hat{x}_i}=C_{\hat{x}_i,y_i}=C_{y_i,x_i^*}=R$.

Consider the market $M-C$. For some $i\in S$, $x_i$ has positive liability only towards~$b_S$. Hence, all of its endowment~$a_i$ is paid to $b_S$. Similarly, for some $i\in [n] \setminus S$, $y_i$ has positive liability only towards~$b_{-S}$, so all of its endowment $a_i$ is paid to $b_{-S}$. Therefore, both $b_S$ and $b_{-S}$ receive $\frac{\sum_{i\in [n]}a_i}{2}$, so they remain solvent and so does $b^*$. Hence, $I'$ is a yes-instance as claimed.
\end{claimproof}

    The statement of the theorem now follows from Claims~\ref{clm:NP1}--\ref{clm:part-to-clearing}. 
\end{proof}

\section{Proof of Theorem~\ref{thm:poly:all-but-one:prio}}
\label{app:prf-thmpolyallbutoneprio}
    \thmpolyallbutoneprio*
    \begin{proof}
    Let $D=(N,A)$ denote the underlying digraph of~$M$ where $(i,j) \in A$ if and only if $L_{ij}>0$. For a set~$F$ of arcs in~$D$, let $\chi^F \in \{0,1\}^{N \times N}$ denote the characteristic vector of~$F$, i.e., $\chi^F_{ij}=1$ if and only if $(i,j) \in F$ and $\chi^F_{ij}=0$ otherwise.
        
    We start with guessing the single bank~$b$ for which some compression~$C$ yields a clearing vector~$p$ for~$M-C$ with $\dft^{M-C}(p) \subseteq \{b\}$.     
    We also guess the lowest priority group $g_b^{\gamma}$ whose liabilities are fully paid by~$b$ under~$p$ in~$M-C$.
    We set $\widetilde{N}=N \setminus \{b\}$.
    We partition the set of agents in~$\widetilde{N}$ as 
    \[N^+=\bigcup_{i=1}^{\gamma} g_b^i, \qquad 
    N^\propto  = g_b^{\gamma+1}, 
    \qquad 
    N^0=\bigcup_{i=\gamma+2}^{k_i} g_b^i. 
    \]
    We call a compression~$C$ \emph{suitable} if $M-C$ admits a priority-proportional clearing vector~$p$ with $\dft^{M-C}(p) \subseteq \{b\}$ and where $N^+$ is exactly the set of banks whose liabilities from~$b$  are fully paid under~$p$ in~$M-C$.
    
    Due to Observation~\ref{obs:remove-cycles}, we may assume w.l.o.g.\ that the graph~$D$ underlying our financial market $M=(N,L,e,\alpha,\beta)$ is such that $G[\widetilde{N}]$ is acyclic: 
    Otherwise, we can repeat the following step as long as possible: if $C$ is a cycle containing only banks in~$\widetilde{N}$, then we replace~$M$ with $M-C$. By Observation~\ref{obs:remove-cycles}, there is a clearing vector~$p$ for~$M$ under which no bank in~$\widetilde{N}$ defaults if and only if there is a clearing vector for~$M-C$ under which no bank in~$\widetilde{N}$ defaults.
    Hence, $M$ admits a suitable compression if and only if $M-C$ admits one.

    Consequently, there is a compression where all banks remain solvent if and only if all banks remain solvent under a clearing vector for the original market~$M$. As we can efficiently decide whether this is the case, we further assume that $b$ defaults after any suitable compression.

    \smallskip
    We proceed by computing for each $i \in \widetilde{N}$ the value $q_i=\max\{0,\sum_{j \in N} L_{ij}-e(i)-\sum_{j \in \widetilde{N}}L_{ji}\}$ which can be thought of as the payment that $i$ needs to receive from~$b$ in order to become solvent (in the market without any compression). Note that if $q_i > L_{bi}$, then $i$ defaults even if $b$ can pay all its liabilities towards~$i$, a contradiction to our assumption on~$b$. Hence, we may assume that $q_i\leq L_{bi}$ for each~$i \in \widetilde{N}$.
    
    We construct a flow network~$D'$ from~$D$ as follows: we first 
    set the capacity of each arc~$(i,j)$ as~$L_{ij}$. Then we split $b$ into~$b^-$ and~$b^+$ in the usual way, i.e., by replacing each arc $(b,i)$ with $(b^+,i)$ and each arc~$(j,b)$ with~$(j,b^-)$. We set~$b^+$ as the source and $b^-$ as the sink of the network. Notice that each compression for~$M$ corresponds to a flow in~$D'$ and vice versa; we write $M^f$ for the financial market obtained by applying the compression determined by some flow~$f$ in~$D'$. We say that some flow in~$D'$ is \emph{suitable} if and only if the corresponding compression is suitable.
    
    Let $f$ be a flow in~$D'$, let $|f|$ denote its \emph{size} (i.e., the total flow value leaving~$b^+$),  and let $p$ be a priority-proportional clearing vector for~$M^f$. 
    Furthermore, let 
    \[|f^+|=\sum_{i \in N^+} f_{bi}, \qquad 
    |f^\propto|  = \sum_{i \in N^\propto} f_{bi},
    \qquad 
    |f^0|=\sum_{i \in N^0} f_{bi}. 
    \]    
    The sum of liabilities in~$M^f$ entering~$b$ is  $\sum_{i \in \widetilde{N}}L_{ib}-|f|$,
    and since all banks other than~$b$ remain solvent, this equals the total incoming payment of~$b$ in under~$p$. Thus, the total payment leaving~$b$ under~$p$ is
    \[
    p_b=\alpha_b e(b)+\beta_b\bigg(\sum_{i \in \widetilde{N}}L_{ib}-|f|\bigg).
    \]  
    Since $b$ fully pays its liabilities towards all banks in~$N^+$, which amounts to $\sum_{i \in N^+} (L_{bi} -f_{bi})$, 
    the total payment leaving~$b$ towards banks in~$N^\propto$ is exactly
    \[
    \alpha_b e(b)+\beta_b\bigg(\sum_{i \in \widetilde{N}}L_{ib}-|f|\bigg)-
    \sum_{i \in N^+} (L_{bi} -f_{bi})
    =A_b-\beta_b |f|+|f^+|
    \]  
    where we let $A_b=\alpha_b e(b)+\beta_b (\sum_{i \in \widetilde{N}}L_{ib}) - \sum_{i \in N^+} L_{bi}$. 
    Note that if we guessed $g_i^\gamma$ correctly, then we have that 
    \begin{equation}
    \label{eq:prop-money-bounds}
        0 \leq A_b-\beta_b |f|+|f^+| < L_b^\propto -|f^\propto|
    \end{equation}
    where $L_b^{\propto}=\sum_{i \in N^\propto} L_{bi}$.
     
    \myparagraph{Conditions for solvency.}
    Next, let us formulate the necessary and sufficient conditions for each $i \in \widetilde{N}$ to be solvent under the priority-proportional clearing vector~$p$ for $M^f$.
    \begin{itemize}
    \item Banks in~$N^+$: 
    Recall that we assumed $q_i \leq L_{bi}$ for each $i \in \widetilde{N}$. Since $b$ pays its full liability $L_{bi}$ towards each bank $i \in N^+$, we know that these banks remain solvent. Keep in mind, however, the condition
    \begin{equation}
    \label{eq:priority-banks-ok}
        A_b-\beta_b |f|+|f^+| \geq 0
    \end{equation}
    expressing the requirement that indeed all banks in~$N^+$ can be full paid by~$b$.
    \item Banks in~$N^\propto$: 
    For each bank~$i \in N^\propto$, bank~$i$ remains solvent under~$p$ in~$M^f$ if and only if
        \begin{equation}
        \label{eq:prop-banks-ok}
    \frac{L_{bi}-f_{bi}}{L_b^\propto -|f^\propto|}
    \cdot (A_b-\beta_b |f|+|f^+|) \geq q_i -f_{bi}. 
        \end{equation}
    \item Banks in~$N^0$: Since the payment from~$b$ towards these banks is zero, we obtain that $i \in N^0$ remains solvent if and only if 
    \begin{equation}
    \label{eq:lowprio-banks-ok}
        f_{bi} \geq q_i.
    \end{equation} 
    \end{itemize}

    \begin{claim}
    \label{clm:suitable-flow}
        A suitable flow~$f$ satisfies~(\ref{eq:priority-banks-ok}), (\ref{eq:prop-banks-ok}) for each $i \in N^\propto$, and (\ref{eq:lowprio-banks-ok}) for each $i \in N^0$.
        Conversely, a flow satisfying these conditions yields a flow for which $M^f$ admits a priority-proportional clearing vector.
    \end{claim}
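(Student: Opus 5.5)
We prove both directions by writing out how every bank's incoming payment and its solvency condition look in $M^f$, using that $D[\widetilde{N}]$ is acyclic and that only $b$ defaults.

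\emph{Forward direction.} Let $f$ be suitable, and let $p$ be a priority-proportional clearing vector for $M^f$ with $\dft^{M^f}(p)\subseteq\{b\}$ and with $N^+$ the fully paid set. Every bank of $\widetilde{N}$ is solvent, so it pays all its liabilities. Hence, for $i\in\widetilde{N}$, the incoming payment from other banks of $\widetilde{N}$ is exactly $\sum_{j\in\widetilde{N}}(L_{ji}-f_{ji})$.

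Bank $i$ is solvent exactly when $e_i+\sum_{j\in\widetilde{N}}(L_{ji}-f_{ji})+p_{bi}\ge\sum_j (L_{ij}-f_{ij})$. By flow conservation at $i$ in $D'$ we have $\sum_j f_{ji}=\sum_j f_{ij}$, and the $f$-terms cancel except for $f_{bi}$. The condition becomes $p_{bi}\ge \sum_j L_{ij}-e_i-\sum_{j\in\widetilde{N}}L_{ji}-f_{bi}$. Since $p_{bi}\ge 0$, this is equivalent to $p_{bi}\ge q_i-f_{bi}$.

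Because $b$'s incoming payment is $\sum_{i\in\widetilde{N}}(L_{ib}-f_{ib})=\sum_i L_{ib}-|f|$, the total money $b$ has for $N^\propto$ after paying $N^+$ in full is $A_b-\beta_b|f|+|f^+|$. The three cases then follow.
\begin{itemize}
\item Nonnegativity of this amount is exactly the requirement that $N^+$ is fully paid, which gives~(\ref{eq:priority-banks-ok}).
\item For $i\in N^\propto$, $p_{bi}$ is the proportional share $\frac{L_{bi}-f_{bi}}{L_b^\propto-|f^\propto|}(A_b-\beta_b|f|+|f^+|)$, which gives~(\ref{eq:prop-banks-ok}).
\item For $i\in N^0$, $p_{bi}=0$, which gives~(\ref{eq:lowprio-banks-ok}).
\end{itemize}

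\emph{Converse.} Given $f$ satisfying the conditions, define $p$ explicitly. Set $p_{ij}=L_{ij}-f_{ij}$ for all $i\in\widetilde{N}$. Define $b$'s payments by the priority rule with the amount $A_b-\beta_b|f|+|f^+|$:
\begin{itemize}
\item full payment $L_{bi}-f_{bi}$ to $N^+$;
\item proportional payment to $N^\propto$;
\item zero payment to $N^0$.
\end{itemize}

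We then verify the clearing conditions bank by bank. For $i\in\widetilde{N}$, the computation above read backwards shows $E_i^\nondeft(p)\ge L_i-C_i$, so paying in full is consistent. For $b$, the incoming payment equals $\sum L_{ib}-|f|$ by construction, so its post-default budget $E_b^\deft(p)$ equals what we distributed. The priority rule is respected provided that $b$ indeed defaults and that the amount reaching $N^\propto$ is less than $L_b^\propto-|f^\propto|$. If instead $b$ could pay everything, then $p$ is still a clearing vector (with even fewer defaults), so the claim only needs "admits a priority-proportional clearing vector". In that case we simply take the clearing vector in which $b$ pays in full, whose existence follows by the same budget computation.

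\emph{Main obstacle.} The delicate step is the case distinction at $b$ in the converse. We must check that the configuration is consistent with the guessed critical group $g_b^{\gamma+1}$, or otherwise exhibit a clearing vector with a different critical group in which still no bank of $\widetilde{N}$ defaults. Because payments from $b$ only increase when $b$ has more money, any alternative clearing gives every $i\in\widetilde{N}$ at least $p_{bi}$. Monotonicity of the priority payments in $b$'s budget therefore handles this.
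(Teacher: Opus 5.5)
Your argument is correct and follows the same route as the paper: the forward direction comes from the solvency computation preceding the claim, and the converse checks that each priority class receives enough, including the case where $b$ can pay all of $N^\propto$ in full; you simply spell out more detail. One point to tighten: when $A_b-\beta_b|f|+|f^+|\ge L_b^\propto-|f^\propto|$, your proportional share exceeds $L_{bi}-f_{bi}$, so the true clearing does \emph{not} give $i\in N^\propto$ ``at least $p_{bi}$''; what you need there (and also for $i\in N^+$ when you read the computation backwards) is that $i$ receives $L_{bi}-f_{bi}\ge q_i-f_{bi}$, which holds by the standing assumption $q_i\le L_{bi}$.
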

    \begin{claimproof}
        The first statement follows from the discussion above. For the second statement, observe that the inequalities guarantee that each priority group receives enough payment to ensure that all banks remain solvent. Note, however that it is possible that the payments even suffices to pay all banks in~$N^\propto$ fully, in which case the $f$ is not suitable according to our definitions (which requires the last priority group to be fully paid by~$b$ to be~$g_b^{\gamma}$), but $M-C$ still admits a priority-proportional clearing vector where only~$b$ defaults.
    \end{claimproof}
        
    \begin{claim}
    \label{clm:lowprio-banks-sharp}
        If there is a flow~$f$ that satisfies~(\ref{eq:priority-banks-ok}), (\ref{eq:prop-banks-ok}) for each $i \in N^\propto$, and (\ref{eq:lowprio-banks-ok}) for each $i \in N^0$, then there exists such a flow~$f$ that, in addition, satisfies 
        \begin{equation}
            \label{eq:lowprio-banks-equality}
            f_{bi}=q_i
        \end{equation}for each $i \in N^0$, and hence, also $|f^-|=q^0:=\sum_{i \in N^0} q_i$.
    \end{claim}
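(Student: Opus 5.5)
The plan is to start from an arbitrary flow~$f$ satisfying~(\ref{eq:priority-banks-ok}), (\ref{eq:prop-banks-ok}) for each $i \in N^\propto$, and (\ref{eq:lowprio-banks-ok}) for each $i \in N^0$. I will then lower $f_{bi}$ to exactly~$q_i$ for every $i \in N^0$ with $f_{bi}>q_i$, by removing flow along $b^+$--$b^-$ paths, and check that none of the three families of conditions is destroyed. Once~(\ref{eq:lowprio-banks-equality}) holds for each $i\in N^0$, summing over $N^0$ gives $|f^0|=q^0$, which is the quantity the statement denotes $|f^-|$.

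\emph{Decomposing the flow.} Fix $i \in N^0$ with $f_{bi}>q_i$, and decompose~$f$ into $b^+$--$b^-$ paths (and possibly cycles; by our assumption that $G[\widetilde{N}]$ is acyclic, and since $b$ is split into the source $b^+$ and the sink $b^-$, there are in fact none). Some path~$P$ in this decomposition uses the arc $(b^+,i)$ and carries positive value, say~$\varepsilon_P$. Because $b^+$ is a source, $P$ leaves $b^+$ only once, so $(b^+,i)$ is the only arc of $P$ of the form $(b^+,j)$. Set $\varepsilon=\min\{\varepsilon_P,\, f_{bi}-q_i\}>0$ and let $f'$ be obtained from $f$ by decreasing the flow on every arc of~$P$ by~$\varepsilon$. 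Then $f'$ is still a feasible flow: capacities are respected and values stay nonnegative.

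\emph{Effect on the quantities in the constraints.} Since the only source arc touched is $(b^+,i)$ and $i\in N^0$, we get
\[
f'_{bj}=f_{bj}\ (j\neq i),\qquad f'_{bi}=f_{bi}-\varepsilon,\qquad |f'|=|f|-\varepsilon,
\]
while $|f'^+|=|f^+|$ and $|f'^\propto|=|f^\propto|$. The money that $b$ has for $N^\propto$ therefore changes as
\[
A_b-\beta_b|f'|+|f'^+| \;=\; A_b-\beta_b|f|+|f^+|+\beta_b\varepsilon \;\geq\; A_b-\beta_b|f|+|f^+|.
\]
Hence (\ref{eq:priority-banks-ok}) is preserved. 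In (\ref{eq:prop-banks-ok}), for each $j\in N^\propto$ the fraction $\frac{L_{bj}-f_{bj}}{L_b^\propto-|f^\propto|}$ is unchanged and nonnegative, the right-hand side $q_j-f_{bj}$ is unchanged, and the remaining factor weakly increases; so (\ref{eq:prop-banks-ok}) still holds. Condition~(\ref{eq:lowprio-banks-ok}) is untouched for $j\in N^0\setminus\{i\}$, and holds for~$i$ by the choice of~$\varepsilon$. I also need that the conditions do not depend on the flow on arcs inside~$\widetilde{N}$. This holds because the thresholds $q_j$ were defined in the uncompressed market: compressing an internal arc lowers a solvent bank's incoming and outgoing liabilities by the same amount. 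So each condition involves only $f_{bj}$, $|f|$, $|f^+|$ and $|f^\propto|$.

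\emph{Termination and the main obstacle.} Each step either makes $f_{bi}=q_i$ or saturates the path value~$\varepsilon_P$, so only finitely many steps are needed for~$i$; treating every $i\in N^0$ in turn gives the required flow. I expect the main subtle point to be the path-decomposition step: one must make sure that removing flow through $(b^+,i)$ alters no other source arc and does not touch $|f^+|$ or $|f^\propto|$, which is exactly what the source/sink split of~$b$ guarantees. The monotonicity argument itself is easy. Note, though, that the strict upper bound in~(\ref{eq:prop-money-bounds}) may fail after this modification. That is harmless, because the claim, like Claim~\ref{clm:suitable-flow}, involves only the three listed conditions.
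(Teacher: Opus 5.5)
Your proposal is correct and uses the same monotonicity argument as the paper: lowering $f_{bi}$ for $i\in N^0$ only raises $A_b-\beta_b|f|+|f^+|$, and hence the left-hand sides of (\ref{eq:priority-banks-ok}) and (\ref{eq:prop-banks-ok}). Your path-decomposition step makes explicit what the paper leaves implicit, namely that this decrease is realized by an actual flow and leaves $|f^+|$ and $|f^\propto|$ unchanged. It relies on $b^+$ being a source, $b^-$ a sink, and $G[\widetilde{N}]$ being acyclic.
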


    \begin{claimproof}
        Observe that decreasing the value of~$f_{bj}$ for some $j \in N^0$ increases the left-hand side of~(\ref{eq:priority-banks-ok}), as well that of~(\ref{eq:prop-banks-ok}) for each $i \in N^\propto$; hence, such a decrease does not violate any of the required inequalities as long as (\ref{eq:lowprio-banks-ok}) remains true for~$j$, proving the claim.
    \end{claimproof}
    \smallskip

    \myparagraph{Valid flows.}
    By Claim~\ref{clm:lowprio-banks-sharp}, we can modify our network~$D'$ by setting~$q_i$ for each~$i \in N^0$ as both the lower bound and the capacity of the arc~$(b^+,i)$; let $D'$ be the obtained network.\footnote{Standard flow techniques allow us to use networks that where we can set a lower bound on the flow value on each arc in addition to the capacity value;}
    Let us say that a flow~$f$ in~$D'$ is \emph{valid} if it satisfies~(\ref{eq:priority-banks-ok}), (\ref{eq:prop-banks-ok}) for each $i \in N^\propto$, and (\ref{eq:lowprio-banks-equality}) for each $i \in N^0$. 
    In the rest of the proof, we show how to find a valid flow in~$D'$ assuming there exists one.

    \myparagraph{Case $\beta_b=1$.}
    If $\beta_b=1$, then constraint~(\ref{eq:prop-banks-ok}) can be re-formulated as follows:
    \[
    L_{bi}(A_b-q^0-|f^\propto|)-f_{bi}(A_b-q^0) \geq 
    (q_i-f_{bi})L_b^\propto -q^0|f^\propto|.
    \] 
    These constraints for $i \in N^\propto$, as well as those in~(\ref{eq:priority-banks-ok}) and~(\ref{eq:lowprio-banks-equality}) for each $i \in N^0$, depend linearly on the variables $\{f_{bi}\}_{i \in \widetilde{N}}$. Hence, adding these constraints to the standard linear program  describing a flow in~$D'$ and solving the obtained LP, we can find in polynomial time a flow~$f$ that satisfies all required constraints in Claim~\ref{clm:suitable-flow} for each $i \in \widetilde{N}$ whenever such a flow exists.
    Hence, from now on we assume $\beta_b<1$.

    \smallskip
    \myparagraph{Case $\beta_b<1$.}
    Let us define 
    \[\lambda_{f}=\frac{A_b-\beta_b |f|+|f^+|}{L_b^\propto-|f^\propto|}
    .\]
    Condition (\ref{eq:prop-banks-ok}) can then be rephrased as
    \begin{equation}
    \label{eq:lower-bound-as-convex_combination}
    \phi_i(f):=(1-\lambda_{f}) \cdot f_{bi} + \lambda_{f} \cdot L_{bi}
    \geq q_i .        
    \end{equation}

    \begin{claim}
    \label{clm:easy-valid-flow}
        Suppose $\beta_b<1$. 
        If there exists a flow~$f$ in~$D'$  that satisfies (\ref{eq:lowprio-banks-equality}) for each $i \in N^0$
        and has size 
        \[
        |f| \geq \frac{L_b^\propto -A_b+q^0}{1-\beta_b},
        \]            
        then such a flow is valid.
    \end{claim}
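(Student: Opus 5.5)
The plan is a direct computation. We show that the assumed lower bound on $|f|$ makes $b$'s payment to the proportional group $N^\propto$ at least its remaining liabilities towards that group, i.e.\ $\lambda_f \geq 1$. Both defining conditions of validity, (\ref{eq:priority-banks-ok}) and (\ref{eq:prop-banks-ok}), then follow at once. Condition (\ref{eq:lowprio-banks-equality}) for $i\in N^0$ holds by hypothesis, so it does not need to be checked.

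First, I would decompose the flow size. Every arc leaving $b^+$ goes to some bank in $N^+\cup N^\propto\cup N^0$, so $|f| = |f^+| + |f^\propto| + |f^0|$. By (\ref{eq:lowprio-banks-equality}) we have $|f^0| = q^0$. Substituting this, the amount $b$ pays towards $N^\propto$ compares with its remaining liability towards $N^\propto$ as follows:
\[
\bigl(A_b - \beta_b|f| + |f^+|\bigr) - \bigl(L_b^\propto - |f^\propto|\bigr) = A_b + (1-\beta_b)|f| - q^0 - L_b^\propto .
\]
Since $\beta_b<1$, the hypothesis $|f| \geq \frac{L_b^\propto - A_b + q^0}{1-\beta_b}$ is equivalent to the right-hand side being nonnegative. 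Hence $A_b - \beta_b|f| + |f^+| \geq L_b^\propto - |f^\propto|$.

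Next, I would derive the two conditions. Every arc $(b^+,i)$ has capacity $L_{bi}$, so $L_b^\propto - |f^\propto| = \sum_{i\in N^\propto}(L_{bi}-f_{bi}) \geq 0$. Combined with the previous step, this gives (\ref{eq:priority-banks-ok}).

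For (\ref{eq:prop-banks-ok}), first suppose $L_b^\propto - |f^\propto| > 0$. Then $\lambda_f \geq 1$, so
\[
\phi_i(f) = f_{bi} + \lambda_f (L_{bi} - f_{bi}) \geq L_{bi} \geq q_i ,
\]
using $L_{bi}-f_{bi}\geq 0$ and the standing assumption $q_i \leq L_{bi}$. This is exactly (\ref{eq:lower-bound-as-convex_combination}), which is equivalent to (\ref{eq:prop-banks-ok}).

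Now suppose $L_b^\propto - |f^\propto| = 0$. Then $f_{bi} = L_{bi} \geq q_i$ for every $i\in N^\propto$, so each such bank is solvent without receiving anything from $b$. In this case I would read (\ref{eq:prop-banks-ok}) as the requirement $L_{bi}-f_{bi} \geq q_i - f_{bi}$, which holds trivially.

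The only delicate point is this degenerate case, together with the observation that $\lambda_f\geq 1$ means $b$ can fully pay $N^\propto$ as well. In that situation the strict upper bound in (\ref{eq:prop-money-bounds}) fails, so $f$ need not be suitable for the guessed $\gamma$. However, validity only requires (\ref{eq:priority-banks-ok}), (\ref{eq:prop-banks-ok}) and (\ref{eq:lowprio-banks-equality}), and not (\ref{eq:prop-money-bounds}). By the second part of Claim~\ref{clm:suitable-flow}, such a flow still yields a priority-proportional clearing vector in which only $b$ may default, which is all the algorithm needs.
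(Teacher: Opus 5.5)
Your proof is correct and follows the paper's own argument. The paper likewise notes that the size bound is equivalent to $A_b-\beta_b|f|+|f^+|\geq L_b^\propto-|f^\propto|$, and then derives (\ref{eq:priority-banks-ok}) and (\ref{eq:prop-banks-ok}) from $q_i\leq L_{bi}$; your explicit treatment of the case $L_b^\propto=|f^\propto|$, where the denominator vanishes, fills in a detail the paper leaves implicit.
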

    \begin{claimproof}
    It suffices to observe that the bound on~$|f|$ 
    is equivalent to $A_b-\beta_b|f|+|f^+|\geq L_b^\propto-|f^\propto|$ 
    which, in turn, implies not only (\ref{eq:priority-banks-ok})
    but also~(\ref{eq:prop-banks-ok}) for each $i \in N^\propto$, due to our assumption $q_i \leq L_{bi}$.
    \end{claimproof}
    \smallskip

    Using Claim~\ref{clm:easy-valid-flow}, we proceed by searching for a flow that fulfills the conditions of the claim; if we find one such flow~$f$, then its validity guarantees a priority-proportional clearing vector for~$M^f$.
    Hence, from now on we may assume that every flow that satisfies~(\ref{eq:lowprio-banks-equality}) for each $i \in N^0$ have size less than 
    $\frac{L_b^\propto -A_b+q^0}{1-\beta_b}$. Thus, we can assume that 
    \begin{equation}s
        \label{eq:upper-money-bound}
        A_b-\beta_b|f|+|f^+| < L_b^\propto - |f^\propto|.
    \end{equation}
    Note that this also implies $0 \leq \lambda_{f} <1$. 


\begin{claim}
\label{clm:flowsize-prio}
    Suppose $\beta_b<1$.
    If there is a valid flow~$f$ in~$D'$,
    then there exists a valid flow~$f'$ in~$D'$ that, in addition, either is a maximum-size flow in~$D'$, or satisfies 
    \begin{equation}
    \label{eq:flowsize-notmax}
    |f'|=\frac{q^\propto+\beta_b q^0 -A_b}{1-\beta_b}+q^0    
    \end{equation}
    where $q^\propto=\sum_{i \in N^\propto} q_i$.
    %
\end{claim}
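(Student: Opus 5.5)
The plan is to follow the proof of Claim~\ref{clm:flowsize} closely, after first rewriting $\lambda_f$ in the restricted network~$D'$. By Claim~\ref{clm:lowprio-banks-sharp} and the construction of~$D'$, every flow satisfies $f_{bi}=q_i$ for all $i\in N^0$. Hence $|f|=|f^+|+|f^\propto|+q^0$, and the numerator of~$\lambda_f$ becomes $A_b-\beta_b q^0+(1-\beta_b)|f^+|-\beta_b|f^\propto|$. I will write
\[
\lambda_f=\beta_b+\frac{K_f}{L_b^\propto-|f^\propto|},\qquad K_f:=A_b-\beta_b q^0-\beta_b L_b^\propto+(1-\beta_b)|f^+|.
\]
So $K_f$ depends only on~$|f^+|$ and is nondecreasing in it. Fix a valid flow~$f$; by~(\ref{eq:upper-money-bound}) we have $0\le\lambda_f<1$. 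I then split into cases according to the sign of~$K_f$.

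\emph{Case $K_f\ge 0$.} Augment~$f$ to a maximum flow~$f^{\max}$ of~$D'$ using shortest augmenting paths. This never decreases the flow on an arc leaving~$b^+$, so $f^{\max}_{bi}\ge f_{bi}$ for every~$i$; the arcs into~$N^0$ stay fixed because their lower bound equals their capacity. Consequently $|f^+|$ weakly increases, so $K$ stays nonnegative and weakly grows. Also $|f^\propto|$ weakly increases, so the denominator $L_b^\propto-|f^\propto|$ weakly shrinks. Together these give $\lambda_{f^{\max}}\ge\lambda_f$.

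As in Case~A of Claim~\ref{clm:flowsize}, each $\phi_i$ is a convex combination of $f_{bi}\le L_{bi}$ and~$L_{bi}$. Both the weight on the larger term and the smaller term weakly increase, so~(\ref{eq:lower-bound-as-convex_combination}) still holds for every $i\in N^\propto$. Condition~(\ref{eq:priority-banks-ok}) follows from $\lambda_{f^{\max}}\ge\beta_b\ge 0$, since the left-hand side of~(\ref{eq:priority-banks-ok}) is $\lambda_{f^{\max}}(L_b^\propto-|f^\propto|)$. Here I may need to treat the degenerate situation $|f^{\max,\propto}|=L_b^\propto$ separately. If augmentation makes~(\ref{eq:upper-money-bound}) fail, then Claim~\ref{clm:easy-valid-flow} already gives validity. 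So~$f^{\max}$ is valid and of maximum size.

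\emph{Case $K_f<0$.} Among valid flows that agree with~$f$ on all arcs $(b^+,i)$ with $i\in N^+\cup N^0$, I choose~$f'$ minimizing $|f'^\propto|$. Such a minimizer exists by compactness, since validity is a closed condition once~(\ref{eq:upper-money-bound}) is in force. Along this family $K$ is constant and negative. The partial-derivative computation from Case~B of Claim~\ref{clm:flowsize}, with $\widetilde N$ replaced by~$N^\propto$ and $A_b-\beta_bL_b$ replaced by~$K$, shows two things. First, $\partial\phi_i/\partial f_{bi}\ge 1-\beta_b>0$. Second, lowering~$f_{bi}$ raises~$\lambda$, and hence raises $\phi_j$ for every $j\ne i$.

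Now suppose $\phi_i(f')>q_i$ for some $i\in N^\propto$. I would decrease~$f'$ slightly along an $b^+$--$b^-$ path of a flow decomposition that uses the arc $(b^+,i)$. This keeps $f'$ a feasible flow and leaves it unchanged on the arcs into~$N^+\cup N^0$. It increases the left-hand side of~(\ref{eq:priority-banks-ok}) by~$\beta_b$ per unit and preserves~(\ref{eq:lower-bound-as-convex_combination}) for all banks, which contradicts minimality. Hence~(\ref{eq:prop-banks-ok}) is tight for every $i\in N^\propto$.

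Summing these equalities over~$N^\propto$ gives $A_b-\beta_b|f'|+|f'^+|=q^\propto-|f'^\propto|$. Substituting $|f'^+|=|f'|-|f'^\propto|-q^0$ yields $(1-\beta_b)|f'|=q^\propto+q^0-A_b$, which is exactly~(\ref{eq:flowsize-notmax}) after rearranging.

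The step I expect to be the main obstacle is choosing the right case split. In the proportional setting the sign of $A_b-\beta_bL_b$ was a fixed quantity. Here the sign of~$K_f$ depends on~$|f^+|$, and moving flow on $N^+$-arcs pushes~$\lambda$ in the opposite direction from moving flow on $N^\propto$-arcs. My plan handles this by freezing the $N^+$-arcs in Case~B and by exploiting the monotonicity of augmenting paths in Case~A. The point to verify carefully is that none of these modifications breaks~(\ref{eq:priority-banks-ok}) or the bound $\lambda<1$ that is needed for the convexity argument.
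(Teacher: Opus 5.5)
\textbf{Where the proof fails.} Your structure matches the paper's proof. Your $K_f$ is exactly the numerator of the paper's $\partial\lambda_f/\partial f_{bi}$. Your case $K_f\ge 0$ is sound and tidier than the paper's version: you augment $f$ itself to a maximum flow, source arcs never lose flow, so $\lambda$ and every $\phi_i$ weakly increase. Your case $K_f<0$ is the paper's minimal-flow-plus-summation argument, and it breaks at the tightness step.

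You can push flow back along a $b^+$--$b^-$ path through $(b^+,i)$ only if $f'_{bi}>0$. Minimality therefore gives only this: for each $i\in N^\propto$, either (\ref{eq:prop-banks-ok}) is tight or $f'_{bi}=0$. A bank with $f'_{bi}=0$ and $\lambda_{f'}L_{bi}>q_i$ stays slack, for example any bank with $q_i=0$ or with no path back to $b$. Summing only the tight constraints does not produce (\ref{eq:flowsize-notmax}). The paper's Case~B also decreases $f'_{bi}$ ``slightly'' without checking $f'_{bi}>0$, so it has the same hole. The hole cannot be patched: the claim is false, and the example below also refutes Claim~\ref{clm:flowsize}.

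\textbf{Counterexample.} Take banks $b,1,2,3$ with the following data:
$L_{b1}=L_{b2}=10$, $L_{b3}=1$, $L_{1b}=10$, $L_{2b}=8$, $L_{23}=1$;
$e_b=e_3=0$, $e_1=10$, $e_2=\frac12$, $\beta_b=\frac12$;
and a single priority group for $b$, so $\gamma=0$, $N^+=N^0=\emptyset$, $N^\propto=\{1,2,3\}$.

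Then $q_1=q_3=0$, $q_2=\frac{17}{2}$, $A_b=9$ and $L_b^\propto=21$. The flows are exactly $x=f_{b1}\in[0,10]$ and $y=f_{b2}\in[0,8]$; bank~$3$ has no outgoing arcs, so $f_{b3}=f_{23}=0$. The threshold of Claim~\ref{clm:easy-valid-flow} is $\frac{21-9}{1/2}=24>18$. So every flow satisfies the standing assumption $A_b-\beta_b|f|+|f^+|<L_b^\propto-|f^\propto|$, and $K=9-\frac{21}{2}=-\frac32<0$.

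With $\lambda=\frac{9-(x+y)/2}{21-x-y}$, validity reduces to $y+\lambda(10-y)\ge\frac{17}{2}$. Three checks show the claim fails:
\begin{itemize}
\item The flow $(0,8)$ is valid: $\lambda=\frac{5}{13}$ and $8+\frac{10}{13}\ge\frac{17}{2}$. So compressing the cycle $b\to2\to b$ by $8$ leaves only $b$ in default.
\item The unique maximum flow $(10,8)$ gives $\lambda=0$ and $8<\frac{17}{2}$, so it is not valid.
\item Equation (\ref{eq:flowsize-notmax}) prescribes $|f'|=\frac{17/2-9}{1/2}=-1$, which no flow has.
\end{itemize}

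In fact the valid sizes form exactly $\bigl[\frac{31-\sqrt{253}}{2},15\bigr]$. At the minimizer $x=0$, $y=\frac{31-\sqrt{253}}{2}$, banks $1$ and $3$ are slack with zero flow, which is precisely the case your argument skips.

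A repair would have to handle such banks; at a minimizer, $f'_{bi}=\max\{0,\frac{q_i-\lambda L_{bi}}{1-\lambda}\}$. But summing only over the tight banks gives an equation quadratic in $|f^\propto|$, as the irrational endpoint shows. So no size formula like (\ref{eq:flowsize-notmax}) holds in general. An algorithm that tests only the two sizes in the claim would wrongly reject this instance.
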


\begin{claimproof} 
    First, note that increasing 
    the value of~$f_{bj}$ for some $j \in N^+$ increases the left-hand side of~(\ref{eq:priority-banks-ok}), as well that of~(\ref{eq:prop-banks-ok}) for each $i \in N^\propto$; hence, such an increase does not violate any of the required inequalities. Hence, we know that we can always choose a flow~$f$ that satisfies the required conditions and, additionally, cannot be increased via any augmenting path that starts with any of the arcs leading from~$b^+$ to~$N^+$.

    To consider the possibility of increasing the flow value via augmenting paths starting with an arc~$(b^+,i)$ for some~$i \in N^\propto$, let us investigate the effects of an increase in~$f_{bi}$ for some $i \in N^\propto$.
    Note that 

     \[\pdv{\lambda_{f}}{f_{bi}}=\frac{A_b-\beta_b (L^\propto_b +|f|-|f^\propto|)+|f^+|}{(L_b^\propto-|f^\propto|)^2} = 
     \frac{A_b-\beta_b L^\propto_b +|f^+|-\beta_b(|f^+|+|f^0|)}{(L_b^\propto-|f^\propto|)^2}
    .\]
    
    We distinguish between two cases.
    
    {{\bf Case A:}} $A_b-\beta_b L^\propto_b +|f^+|-\beta_b(|f^+|+|f^0|) \geq 0$. In this case, $\pdv{\lambda_{f}}{f_{bi}}\geq 0$, meaning that $\lambda_{f}$ weakly increases if $f_{bi}$ increases.
    
    Assume that $f$ is not a maximum flow but maximizes $|f|$ among all valid flows. 
    Then, $f$ can be increased 
    Let $f'$ be a flow in~$D'$ obtained from~$f$ via increasing its value via an augmenting path that starts with the arc~$(b^+,i)$ with an infinitesimal value for some $i \in N^\propto$.
    We are going to show that $f'$ is valid, contradicting our choice of~$f$. This contradiction proves that $f$ must be a maximal flow.
    
    Observe that $f_{bj} \leq L_{bj}$ by our construction of the network~$D'$, 
    and 
    $\phi_j(f)$  as defined by (\ref{eq:lower-bound-as-convex_combination})
    is the convex combination of $f_{bj}$ and~$L_{bj}$ with coefficients $1-\lambda_{f}$ and $\lambda_{f}$, respectively. Hence, taking~$f'$ instead of~$f$, 
    not only does the coefficient for the larger value ($L_{bj}$) weakly increase (since $\lambda_{f'} \geq \lambda_{f}$), 
    but additionally, the smaller value ($f_{bj}$) also weakly increases (since $f'_{bj}\geq f_{bj}$ holds for each $j \in \widetilde{N}$). Thus, we get that 
    \[(1-\lambda_{f'}) \cdot f'_{bj} + \lambda_{f'} \cdot L_{bi}
    \geq q_j \]
    holds for each $j \in N^\propto$, which means that $f'$ satisfies~(\ref{eq:lower-bound-as-convex_combination}) for each $j \in N^\propto$, as required.

    Note also that since $\lambda_{f'}\ge\lambda_f\geq 0$, in particular, by $L_b^\propto -|f^\propto| \geq 0$ we know that $A_b-\beta_b|f|+|f^+|\geq 0$, that is, constraint~(\ref{eq:priority-banks-ok}) remains true as well. 
    Constraints~(\ref{eq:lowprio-banks-equality}) remain unaffected.
    This proves that $f'$ is valid.

    {\bf Case B: $A_b-\beta_b L_b^\propto + |f^+| -\beta_b(|f^+|+|f^0|< 0$. }
     In this case, $\odv{\lambda_{f}}{f_{bi}}< 0$, meaning that $\lambda_{f}$  increases if $f_{bi}$ decreases for some $i \in N^\propto$. 
     Using that $|f|=|f^+|+|f^\propto|+|f^0|$ we obtain 
     \[\lambda_{f}=
     \frac{A_b-\beta_b |f|+|f^+|}{L_b^\propto-|f^\propto|}=
     \beta_b+ \frac{A_b-\beta_b L_b^\propto +|f^+|-\beta_b(|f^+|+|f^0|)}{L_b^\propto-|f^\propto|}
     =\beta_b + \pdv{\lambda_f}{f_{bi}} (L_b^\propto-|f^\propto|)
    .\]
     Let us now examine how $\phi_i(f)$ changes if we decrease the flow value~$f_{bi}$ on some arc~$(b^+,i)$:
     \begin{align*}
     \pdv{\phi_i(f)}{f_{bi}} 
     &= \pdv{\left((1-\lambda_{f})  \cdot f_{bi} + \lambda_{f} \cdot L_{bi}\right)}{f_{bi}} =
     -\pdv{\lambda_f}{f_{bi}} \cdot {f_{bi}} + 
     (1-\lambda_{f}) +  \pdv{\lambda_{f}}{f_{bi}} \cdot L_{bi} \\
     &= 1-\beta_b- \pdv{\lambda_f}{f_{bi}} (L_b^\propto-|f^\propto|) + (L_{bi}-f_{bi}) \pdv{\lambda_f}{f_{bi}}
      \\
     &= 1 - \beta_b -\pdv{\lambda_f}{f_{bi}}
     \left( \sum_{j \in  N^\propto \setminus \{ i\}} (L_{bj} -f_{bj}) \right) 
     \geq 1-\beta_b > 0
     \end{align*}
     where we used that  $L_{bj} -f_{bj} \geq 0$ for each $j \in \widetilde{N}$,
     and our assumption that $\pdv{\lambda_f}{f_{bi}}
     < 0$.
     %
    Therefore,
    decreasing the flow value~$f_{bi}$ but leaving $f_{bj}$ unchanged for all $j \in \widetilde{N} \setminus \{i\} $ strictly decreases $\phi_i(f)$ but, due to the increase in~$\lambda_{f}$, increases $\phi_j(f)$ for each $j \in N^\propto     \setminus \{i\}$ . 

    Let $f'$ be a valid flow in~$D'$ 
    that 
    has minimum total size. 
   We claim that $\phi_i(f')=q_i$ for each~$i \in N^\propto$.
   Otherwise, $\phi_i(f')>q_i$ for some~$i \in N^\propto$, so by the above paragraph we can decrease $f'_{bi}$ slightly while maintaining the inequalities~(\ref{eq:lower-bound-as-convex_combination}) for each $j \in \widetilde{N}$. Note that this operation increases the left-hand side of (\ref{eq:priority-banks-ok}) and has no effect on~(\ref{eq:lowprio-banks-equality}). Thus, $f'$ is valid.
   However, this contradicts our choice of~$f'$, proving that 
   $f$ must fulfill (\ref{eq:lower-bound-as-convex_combination}) for each $i \in \widetilde{N}$ with equality, which also means that (\ref{eq:prop-banks-ok}) is satisfied with equality for each $i \in N^\propto$.
    Summing up 
    these equalities, we get the equation 
    \begin{align*}
    A_b -\beta_b |f|+|f^+| &= \left(\sum_{i \in N^\propto} q_i \right)-|f^\propto|;
    \\
    (1-\beta_b)(|f^\propto|+|f^+|) 
    &= q^\propto  
    +
    \beta_b q^0 - A_b
    \end{align*}
    Due to $\beta_b<1$, this leads to $|f^\propto|+|f^+|=\frac{q^\propto+\beta_b q^0 -A_b}{1-\beta_b}$, which proves (\ref{eq:flowsize-notmax}).
\end{claimproof}    
\smallskip

    Due to Claim~\ref{clm:flowsize-prio}, we may assume that we know~$|f|$. Then (\ref{eq:priority-banks-ok}) can be reformulated as a simple lower bound 
    \begin{equation}
    \label{eq:fplus-lower-bound}
        |f^+| \geq \beta_b |f|-A_b.
    \end{equation}
    Moreover, 
    (\ref{eq:prop-banks-ok}) 
    can be re-written as 
    \begin{equation}
    \label{eq:lower-bounds-simpler-prio}
        L_{bi}(A_b -\beta_b |f|+|f^+|)-f_{bi}(A_b -\beta_b|f|) \geq q_i (L_b^\propto-|f^\propto|)
        -f_{bi}L_b^\propto + f_{bi}(|f^+|+|f^\propto|)
        .
    \end{equation} 
    Note that (\ref{eq:lower-bounds-simpler-prio}) is a linear inequality over variables $f_{bi}$, $i \in N^+ \cup N^\propto$, because both $|f|$ and $|f^+|+|f^\propto|=|f|-q^0$ are known and thus can be considered constant. 
    Hence, by adding the constraints~(\ref{eq:fplus-lower-bound}), constraints (\ref{eq:lower-bounds-simpler-prio}) for each $i \in N^\propto$, and (\ref{eq:lowprio-banks-equality}) for each $i \in N^0$ to the standard linear program describing a flow in~$D'$ of size~$|f|$, 
    we can decide in polynomial time whether there exists a valid flow~$f$. If such a flow~$f$ exists, then each bank in~$\widetilde{N}$ is solvent under the proportional clearing vector for $M^f$. If no such flow exists, then we conclude that there is no suitable compression for~$M$. 
\end{proof}

\section{Additional simulations}
\label{app:more-sim}
In this section we present the results of simulations on synthetic data where liabilities and  endowments 
are not generated using uniform distributions but 
according to some log-normal distribution.
All other settings used for these simulations are as described in Section~\ref{sec:sim}.

Figure~\ref{fig:uni-liab} presents the simulation results where liabilities are generated using a uniform distribution from $[100,1000]$.
Figure~\ref{fig:uni-liab-uni-end} depicts the 
case when the endowment of each bank~$i$ chosen from a uniform distribution from $[0,0.8L_i]$ where $L_i$ is the total outward liability of bank~$i$; this
is exactly the setting described in Section~\ref{sec:sim}, repeated only to make comparisons easier.
By contrast, Figure~\ref{fig:uni-liab-uni-end} shows  results for endowments generated using a log-normal distribution to obtain a model with numerous small endowments and few much larger ones.
More precisely, the initial endowment of bank~$i$ is set to $0.8 L_i \cdot X$ where $X$ is a random noise  variable drawn from the log-normal distribution $X \sim \log(N(0,0.5))$.

It can be seen that under log-normal endowments (but uniform liabilities), the number of defaulting banks starts to decrease after a certain market size, and the effects of   compressions are also more significant, both for the greedy and the MILP method. We attribute this effect to the mismatch between the two distributions, which induces less defaulting banks, due to the relatively few banks with really large negative balance.
\begin{figure}[htbp]
    \centering
    \begin{subfigure}{0.49\textwidth}
\includegraphics[width=\textwidth]{pictures/scaling_defaults_synthetic.png}
    \subcaption{Results for uniform liabilities and uniform endowments.}
    \label{fig:uni-liab-uni-end}    
    \end{subfigure}
    \hfill
    \begin{subfigure}{0.49\textwidth}
    \includegraphics[width=\textwidth]{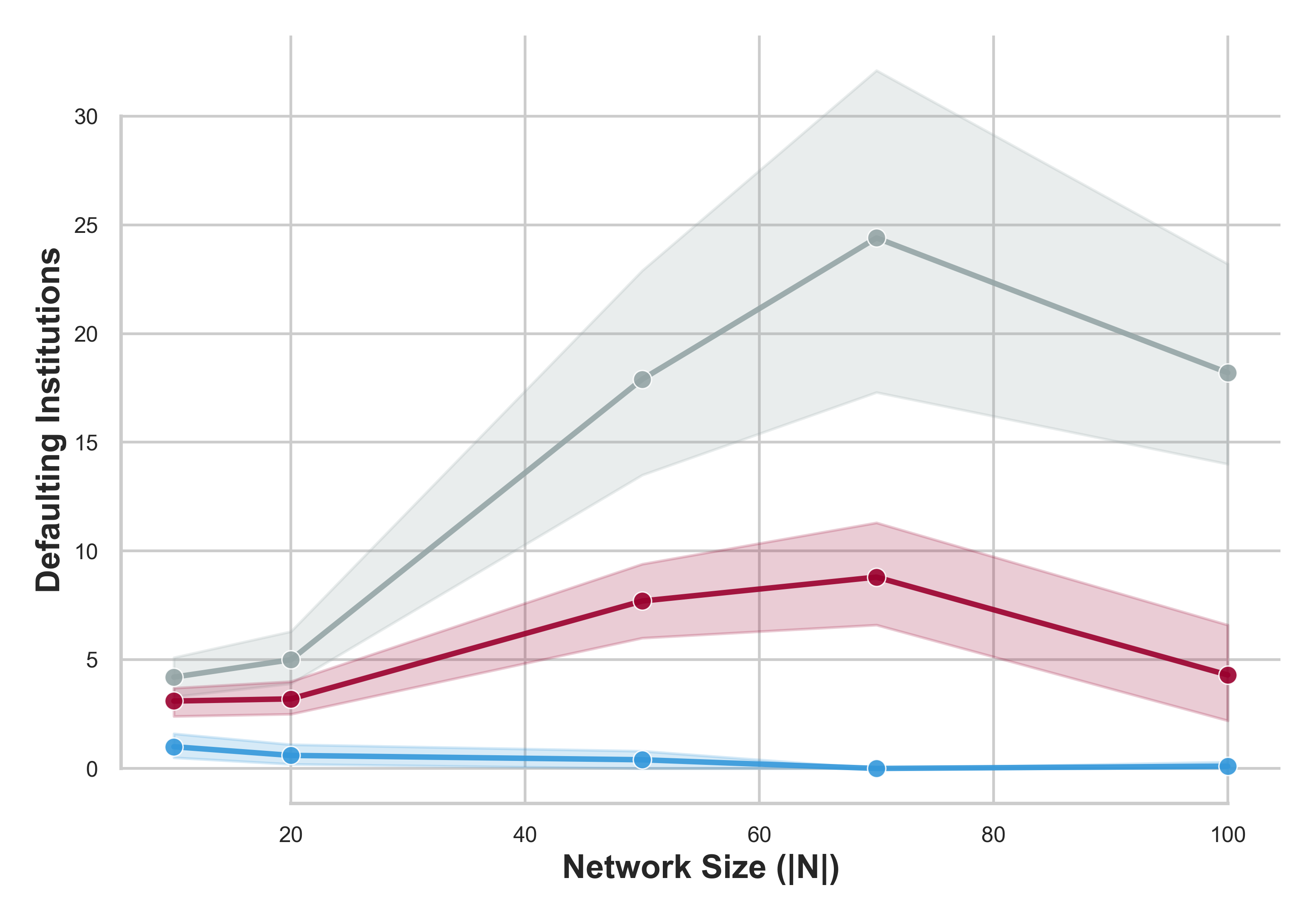}
    \subcaption{Results for uniform liabilities and log-normal endowments.}
    \label{fig:uni-liab-lognorm-end}    
    \end{subfigure}
    \caption{Simulation results for synthetic data generated with uniform liabilities.}
    \label{fig:uni-liab}
\end{figure} 

Next, Figure~\ref{fig:uni-lognorm} presents the simulation results where liabilities are generated using a log-normal distribution.
More precisely, liabilities are drawn randomly from the log-normal distribution $Y=\log(N(\mu,1))$ for some constant~$\mu$ such that $Y$ has mean~$200$.
Figure~\ref{fig:lognorm-liab-uni-end} depicts the 
case when endowments are chosen  uniformly from $[0,0.8L_i]$, whereas
Figure~\ref{fig:lognorm-liab-lognorm-end} shows our results for endowments generated by multiplying $0.8L_i$ with a log-normal noise $X \sim \log(N(0,0.5))$, as described for Figure~\ref{fig:uni-liab-lognorm-end}. 
We observe that for log-normal liabilities combined with uniform endowments, the number of defaulting banks without compression increases faster with the market size when compared to the case when liabilities are uniform as well (Figure~\ref{fig:uni-liab-uni-end}), while the greedy and MILP methods yield similar results, both performing slightly better under uniform liabilities. 
In the case when both liabilities and endowments are generated using log-normal distributions, we see results that are similar to the case when 
both these values are obtained from a uniform distribution (see again Figure~\ref{fig:uni-liab-uni-end}).
\begin{figure}[htbp]
    \centering
    \begin{subfigure}{0.49\textwidth}
\includegraphics[width=\textwidth]{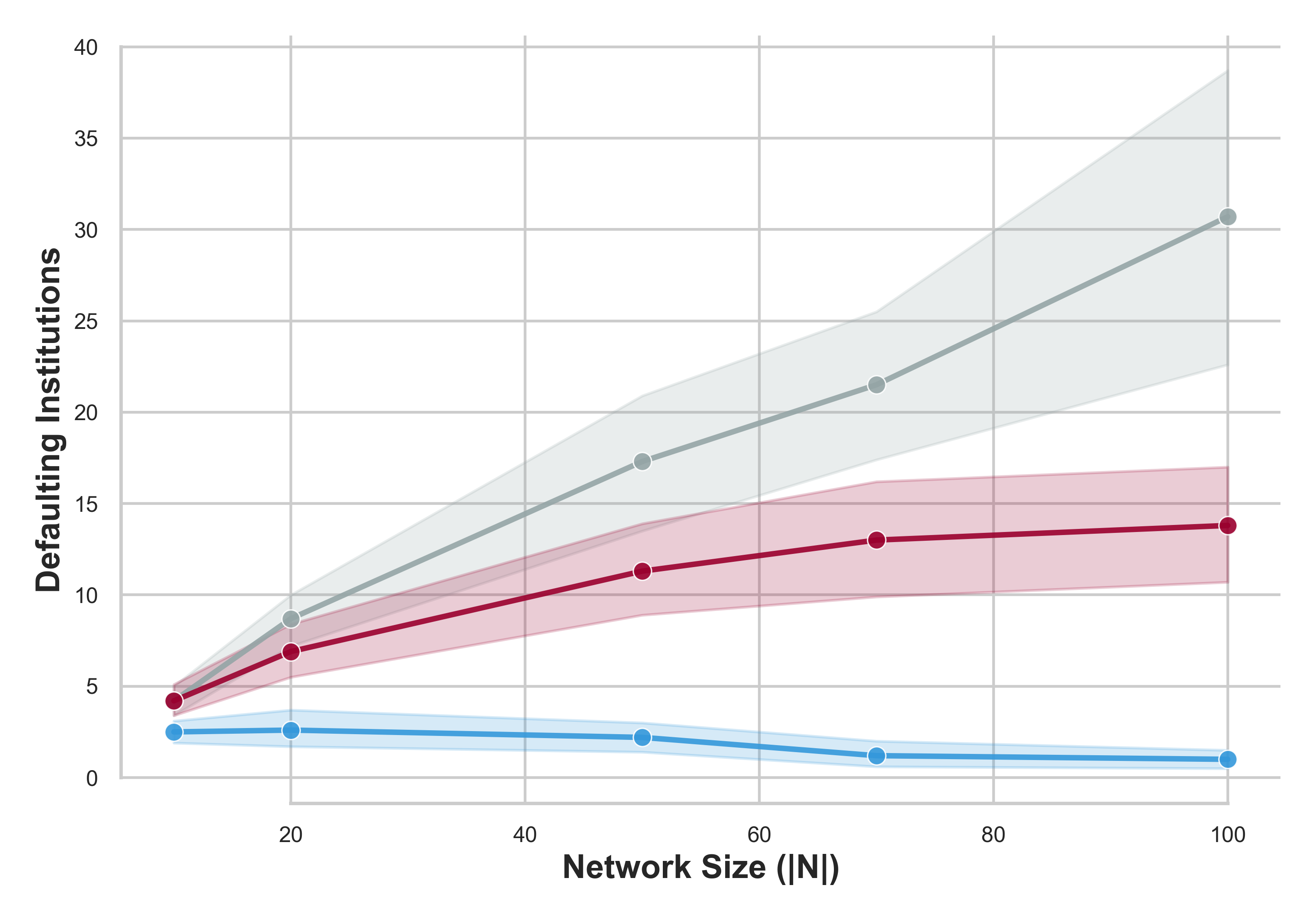}
    \subcaption{Results for log-normal liabilities and uniform endowments.}
    \label{fig:lognorm-liab-uni-end}    
    \end{subfigure}
    \hfill
    \begin{subfigure}{0.49\textwidth}
    \includegraphics[width=\textwidth]{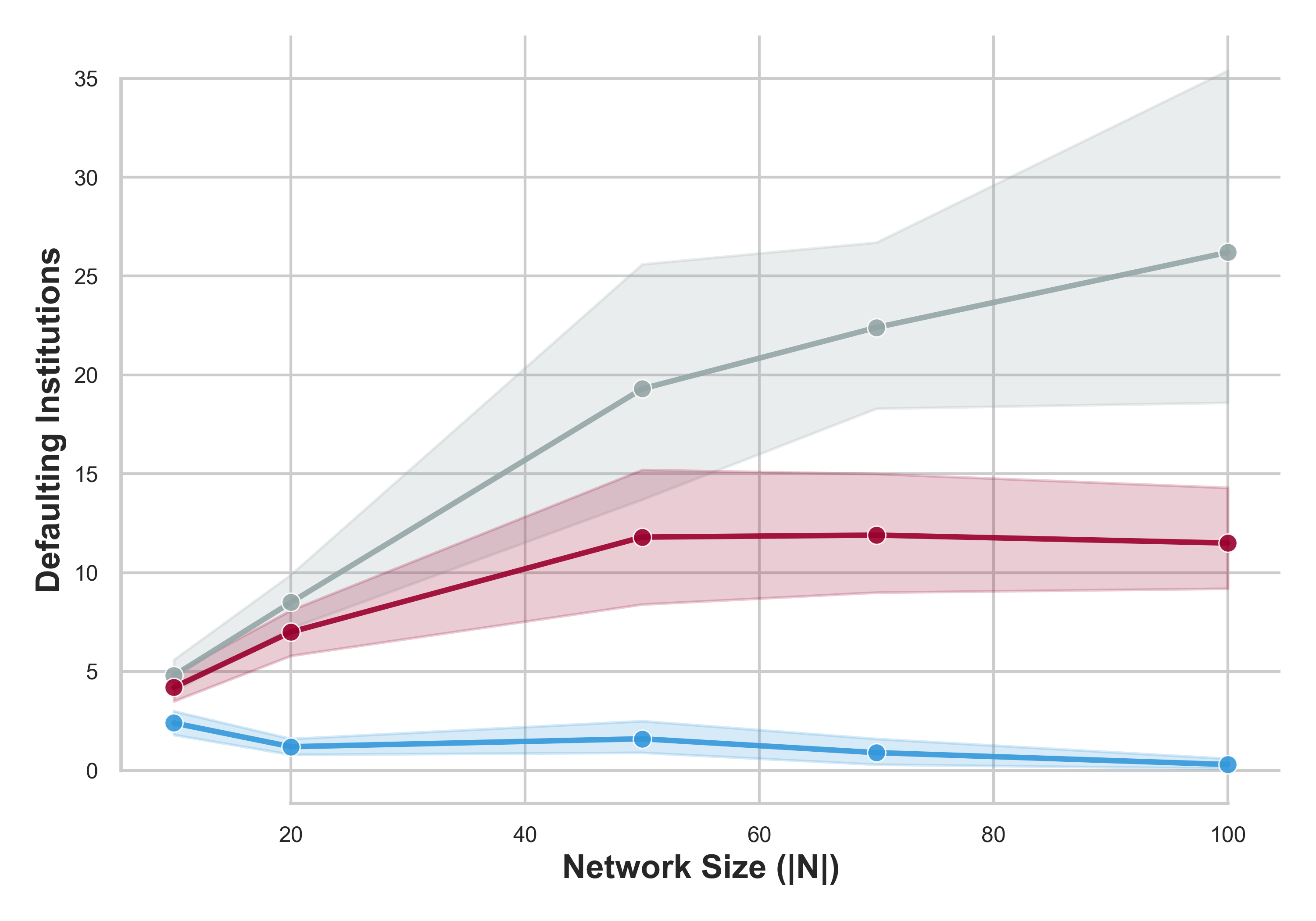}
    \subcaption{Results for log-normal liabilities and log-normal endowments.}
    \label{fig:lognorm-liab-lognorm-end}    
    \end{subfigure}
    \caption{Simulation results for synthetic data generated with log-normal liabilities.}
    \label{fig:uni-lognorm}
\end{figure}

\end{appendices}

\end{document}